\definecolor{MyDarkBlue}{rgb}{0.1,0,0.40}
\definecolor{MyLightBlue}{rgb}{0, 229, 238}
\definecolor{MyMaize}{rgb}{255, 255, 0}
\definecolor{MyLemon}{rgb}{250, 250, 205}
\definecolor{MyDarkRed}{rgb}{178, 48, 96}
\tikzstyle{format} = [draw, thin, fill=blue!20]
\tikzstyle{medium} = [ellipse, draw, thin, fill=green!20, minimum height=2.5em]
\tikzstyle{rand} = [circle, draw=red, very thick, minimum height=2.5em]
\tikzstyle{cond} = [draw, thin]
\tikzstyle{mybox} = [draw=MyDarkBlue, very thick,
\tikzstyle{qbox} = [draw=MyDarkBlue, very thick, fill=blue!5,
\tikzstyle{sbox} = [draw=MyDarkBlue, very thick, fill=blue!5,
\tikzstyle{fancytitle} =[fill=MyDarkBlue, text=white]
\tikzstyle{taylor} = [rectangle, draw=MyDarkBlue, thick,
\tikzstyle{plain} = [draw=none, fill=none]
\newcommand{\pn}{\mathbb{P}_n}
\newcommand{\pHat}{\hat{\mathbb{P}}_n^{(b)}}
\newcommand{\rtn}{\sqrt{n}}
\newcommand{\calD}{\mathcal{D}}
\newcommand{\bHat}{\hat{\beta}}
\newcommand{\bBoot}{\hat{\beta}^{(b)}}
\newcommand{\T}{\intercal}
\newcommand{\wh}[1]{\widehat#1}
\newcommand{\Cov}{\ensuremath{\mathrm{Cov}}}
\newtheorem{thm}{Theorem}[section]
\newtheorem{lem}[thm]{Lemma}
\newtheorem{cor}[thm]{Corollary}
\theoremstyle{definition}
\newtheorem*{slntrmrk*}{Remark}
\title{Statistical Inference in Dynamic Treatment Regimes}
\author{Eric B. Laber, Daniel J. Lizotte, William and  Susan A. Murphy}
\date{\today}
\begin{document}
 \begin{center}
 \textbf{Dynamic treatment regimes: technical challenges and
   applications}  \\
 \textbf{Eric B. Laber, Daniel J. Lizotte, Min Qian, William E. Pelham, \\
 and Susan A. Murphy\footnote{Eric B. Laber is in the Department of Statistics at
 North Carolina State University, 2311 Stinson Dr., Raleigh, NC, 27695 (E-mail
 \textit{laber@stat.ncsu.edu}).
 He acknowledges support from NIH grant P01 CA142538. Daniel J. Lizotte
 is in the Department of Computer Science at the University of
 Waterloo, Ontario, N2L~G31. He acknowledges support from the Natural
 Sciences and Engineering Research Council of Canada.
 Min Qian is in the Department of Biostatistics at Columbia University, New York City, NY, 10032.
 Susan A. Murphy is in the Departments of
 Statistics and Psychiatry at the University of Michigan, Ann Arbor, MI, 48109.
 She acknowledges support from NIMH grant R01-MH-080015 and NIDA grant
 P50-DA-010075.
  }}
\end{center}
\begin{abstract}
  Dynamic treatment regimes are of growing interest across the
  clinical sciences as these regimes provide one way to operationalize
  and thus inform sequential personalized clinical decision making.  A
  dynamic treatment regime is a sequence of decision rules, with a
  decision rule per stage of clinical intervention; each decision rule
  maps up-to-date patient information to a recommended treatment.  We
  briefly review a variety of approaches for using data to construct
  the decision rules. We then review an interesting challenge, that of
  nonregularity that often arises in this area.  By nonregularity, we
  mean the parameters indexing the optimal dynamic treatment regime
  are nonsmooth functionals of the underlying generative distribution.
  A consequence is that no regular or asymptotically unbiased
  estimator of these parameters exists.
  Nonregularity arises in inference for parameters in the optimal
  dynamic treatment regime; we illustrate the effect of nonregularity
  on asymptotic bias and via sensitivity of asymptotic, limiting,
  distributions to local perturbations.  We propose and evaluate a
  locally consistent Adaptive Confidence Interval (ACI) for the
  parameters of the optimal dynamic treatment regime.
  We use data from the Adaptive Interventions for Children with ADHD
  study as an illustrative example.  We conclude by highlighting and
  discussing emerging theoretical problems in this area.
\end{abstract}
\newpage
\section{Introduction}
Dynamic treatment regimes, also called treatment policies, adaptive
interventions or adaptive treatment strategies, were created to
inform the development of health-related interventions composed of
sequences of individualized treatment decisions.
 These regimes formalize
sequential individualized treatment decisions via a sequence of
decision rules that map dynamically evolving patient information to a
recommended treatment.  An optimal dynamic treatment regime (DTR) optimizes the expectation of a
desired cumulative outcome over a population of interest. 


The estimation of optimal DTRs presents a number of interesting
technical challenges and exciting open problems, one of which is
inference for nonregular parameters.  In particular, if an estimated
optimal DTR is to inform clinical decisions or guide future research,
it is essential to have reliable measures of uncertainty for the
estimated regime.  However, many of the most commonly used approaches
to estimating an optimal DTR involve estimation and inference for
parameters that are nonsmooth functionals of the underlying generative
distribution.
Consequently, estimators of these quantities
are necessarily nonregular and asymptotically biased
\citep{van1991differentiable, robins2004optimal, hirano}; standard
asymptotic approximations to the sampling distributions of these
estimators cannot be used directly to form reliable confidence
intervals or to carry out hypothesis testing.  The primary purpose
of this paper is to present the bias and other inferential problems related to this nonregularity
and offer potential
solutions for these problems in the context of DTR research.

In general the  data available for constructing  an optimal
DTR comes in the form of $n$ independent identically distributed
trajectories, one for each subject, of the form $(X_{1}, A_{1}, Y_{1},
\ldots, X_{T}, A_{T}, Y_{T})$ where: $X_t$ denotes interim subject
information collected during the course of the $t$th treatment; $A_t$
denotes the treatment received at time $t$; and $Y_t$ denotes an
outcome measured at the end of the $t$th treatment stage.
These trajectories may be collected in either a randomized ($A_t$ are
assigned with a known probability) or observational (the distribution
of $A_t$ is not known) study.  Traditionally most of the available
data for use in constructing DTRs has been observational and as a
result, causal inference issues dominate the discussion of statistical
methods,\ \cite{robins1986, hernan2000, murphyZThree, robins2004optimal,
  hernan2006, moodie, robinsetal2008, Orellana10, schulte}.
 However a growing number of
experimental studies, called Sequential, Multiple, Assignment
Randomized Trials (SMART) are being conducted
\citep[][]{lavori2000design, murphy2005experimental,
  inbalOne, lei2012smart}. These studies generally involve
two to three treatment stages ($T=2$ or $3$) and $A_t$ is randomized
at each stage.  See \cite{methCenterURL} for a partial list
of such studies.  To maintain the focus on the bias and other
inferential problems related to the nonregularity, we consider methods
for use with data collected in a sequential multiple assignment
randomized trial.

The Adaptive Pharmacological and Behavioral Treatments for Children
with ADHD Trial \citep[W. Pelham (PI);][]{inbalTwo, lei2012smart}
exemplifies the most common SMART; we use this study for illustration.
In the first stage of treatment, children are uniformly randomly
assigned to either a low dose of methylphenidate (a psychostimulant
drug) or a low intensity of behavioral modification therapy.
Beginning at 2 months and monthly thereafter (for the remainder of the
8 month study), each child is assessed for nonresponse; nonresponse
occurred if two different teacher ratings concerning the child's
school behavior fell below a prespecified criterion.  If nonresponse
occurs the child is re-randomized uniformly between two tactics:
intensify current treatment or augment the current treatment with the
other treatment (for example, augment methylphenidate with behavioral
modification therapy).  As long as the child did not meet the
criterion for nonresponse the child remained on current treatment.
See Figure \ref{fig:pelhamDiag} for a schematic of this trial.

\begin{figure}
\begin{center}
  \begin{tikzpicture}[thick, auto, scale=1, every node/.append style={transform shape}]

    \path[->] node[rand] (r1) {\color{red}{R}};

    \path[->] node[mybox, above right=1.25cm and .01cm of r1] (txt1A) 
        {%
          \scriptsize{Low Intensity BMOD}
        }
        (r1) edge node {} (txt1A.south west);
    \node[fancytitle, right=10pt] at (txt1A.north west) {
      \scriptsize{Treatment A}
      };

    \path[->] node[mybox, below right=1.25cm and .01cm of r1] (txt1B) 
        {%
          \scriptsize{Low Intensity MEDS}
       }
        (r1) edge node {} (txt1B.north west);
    \node[fancytitle, right=10pt] at (txt1B.north west) {
      \scriptsize{Treatment B}
      };
    
    \path[->] node[mybox, right=.75cm of txt1A] (ad1A) 
        {%
          \scriptsize{Response?}
        }
        (txt1A) edge node {} (ad1A);

    \path[->] node[mybox, right=.75cm of txt1B] (ad1B) 
        {%
          \scriptsize{Response?}
        }
        (txt1B) edge node {} (ad1B);

   \path[->] node[rand, below right=.80cm of ad1A] (r2A) {\color{red}{R}}
       (ad1A) edge node {No} (r2A);

   \path[->] node[rand, below right=.80cm of ad1B] (r2B) {\color{red}{R}}
       (ad1B) edge node {No} (r2B);

   \path[->] node[mybox, above right=1.00cm and 1.90cm of ad1A] (txt2A) 
        {%
          \scriptsize{Low Intensity BMOD}
        }
        (ad1A) edge node {Yes} (txt2A.south west);
    \node[fancytitle, right=10pt] at (txt2A.north west) {
      \scriptsize{Continue}
      };

  \path[->] node[mybox, above right=1.15cm and 0.6cm of r2A] (txt2AA) 
        {%
          \scriptsize{Augment with MEDS}
        }
        (r2A) edge node {} (txt2AA.south west);
    \node[fancytitle, right=10pt] at (txt2AA.north west) {
      \scriptsize{Treatment AA}
      };

  \path[->] node[mybox, above right=-.25cm and 0.60cm of r2A] (txt2AB) 
        {%
          \scriptsize{Intensify BMOD}
        }
        (r2A) edge node {} (txt2AB.south west);
    \node[fancytitle, right=10pt] at (txt2AB.north west) {
      \scriptsize{Treatment AB}
      };

  \path[->] node[mybox, above right=1.00cm and 1.90cm of ad1B] (txt2B) 
        {%
          \scriptsize{Low Intensity MEDS}
        }
        (ad1B) edge node {Yes} (txt2B.south west);
    \node[fancytitle, right=10pt] at (txt2B.north west) {
      \scriptsize{Continue}
      };

  \path[->] node[mybox, above right=1.15cm and 0.6cm of r2B] (txt2BA) 
        {%
          \scriptsize{Augment with BMOD}
        }
        (r2B) edge node {} (txt2BA.south west);
    \node[fancytitle, right=10pt] at (txt2BA.north west) {
      \scriptsize{Treatment BA}
      };

  \path[->] node[mybox, above right=-.25cm and 0.60cm of r2B] (txt2BB) 
        {%
          \scriptsize{Intensify MEDS}
        }
        (r2B) edge node {} (txt2BB.south west);
   \node[fancytitle, right=10pt] at (txt2BB.north west) {
     \scriptsize{Treatment BB}
     };
\end{tikzpicture}
\end{center}
\caption{Schematic describing the Adaptive
  Pharmacological and Behavioral Treatments for Children with ADHD
  SMART [W. Pelham (PI)].}\label{fig:pelhamDiag}
\end{figure}
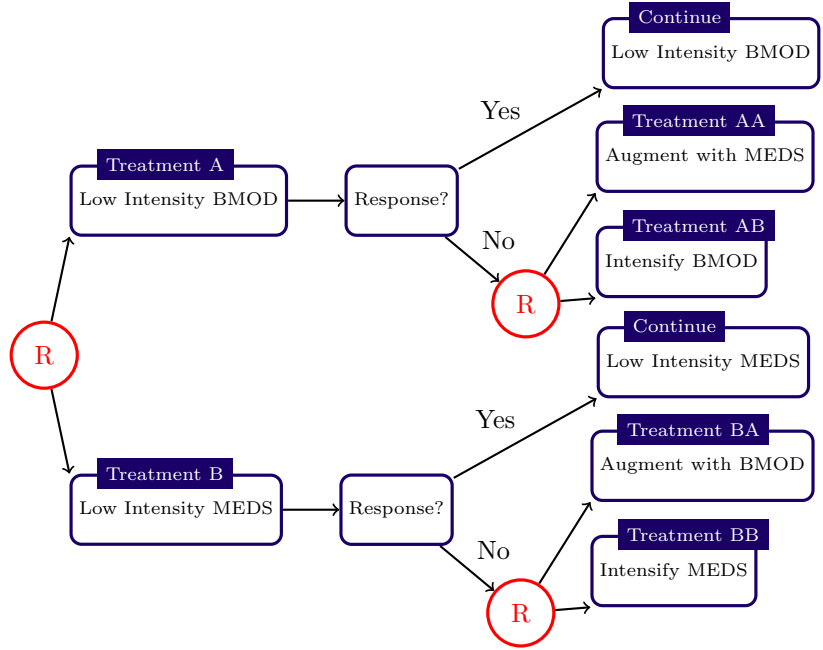


In Section 2 we briefly review different methods for constructing
optimal DTRs and provide greater detail for one such method,
$Q$-learning.  In Section 3 we discuss the problem of asymptotic bias
and show, using local alternatives, that bias-correcting shrinkage methods may perform infinitely
worse  than uncorrected methods.  In Section 4
we discuss interval estimation and propose a locally consistent
confidence interval for parameters indexing the optimal DTR.  In
Section 5 we examine the finite sample performance of the proposed
confidence interval using simulated data.  In section 6 we perform an
analysis of data from a clinical trial involving school-aged children
with ADHD.  We use this trial to illustrate open problems in model
selection and high-dimensional modeling for DTRs that arise even in
relatively simple settings.  Section 7 provides a general discussion
of some open problems relating to estimation and inference of DTRs.

\section{Review of Methods for Constructing Dynamic Treatment Regimes}

Throughout we consider the setting in which there are two stages of binary treatment;
this simple setting is sufficient for us to  illustrate the salient theoretical
challenges.  Furthermore many SMARTs including the ADHD study
described above involve two stages of binary treatment.
Recall that on each subject we
observe a time-ordered trajectory $(X_1, A_1, X_2, A_2, X_3)$.
The treatment $A_1$ is randomly assigned with probability possibly
depending on $X_1$ and $A_2$ is randomly assigned with probability
possibly depending on $(X_1,A_1,X_2)$. In the ADHD study both $A_1$
and $A_2$ are randomized with probability $1/2$ between the binary
alternatives.  $X_1$ denotes baseline (pre-randomization) subject
information; $A_1$ denotes in the initial treatment, coded to take
values in $\lbrace 0, 1\rbrace$; $X_2$ denotes subject information
collected during the course of the first treatment but prior to the
second treatment;$A_2$ denotes the second treatment, coded to take
values in $\lbrace 0,1\rbrace$; $X_3$ denotes subject information
collected during the course of the second treatment.  The outcomes,
$Y_1$ and $Y_2$ are summaries; $Y_1=y_1(X_1, A_1, X_2)$ and $Y_2 =
y_2(X_1, A_1, X_2, A_1, X_3)$ where $y_1$ and $y_2$ are known
functions.  Here we assume that both $Y_1$ and $Y_2$ are continuous
variables that are coded so that higher values are better.  Define
$Y\triangleq Y_1 + Y_2$ to be the total cumulative outcome.

In the ADHD study $X_1$ contains more than $25$ variables, some discrete
and some continuous, and $X_t,$ $t=2,3$ contains more than $40$
measurements collected each month; thus, over the course of the eight
month study the protocol dictated the collection of more then 360
measurements per subject.  In general $X_t,$ $t=1,2,3$ will contain a
large number of repeated measurements.  The current state-of-the-art
is that these measurements are summarized into low-dimensional
summaries motivated by clinical judgment, exploratory analyses and
convenience; this is certainly the case in the ADHD example.  An
important open problem is the development of formal feature extraction
and construction techniques for DTRs.  Here we assume that these
features are known.
Let $H_t,\, t=1,2$ denote a real-valued feature vector summarizing
information available to the decision maker at time $t$.  Thus, $H_1$
is a summary of information contained in $X_1$ and $H_2$ is summary of
information contained in $(X_0^{\T}, A_1, X_2^{\T})$.  In the ADHD
example, $H_1$ contains baseline ADHD severity, an indicator of
oppositional defiant disorder, and an indicator of prior exposure to
ADHD medication; $H_2$ contains $H_1$, as well as, an indicator of
adherence to initial treatment, and month of non-response to initial
treatment.

In this two stage setting, a DTR is a pair of decision rules $\pi = (\pi_1,
\pi_2)$, where $\pi_t:\mathrm{dom}(H_t) \rightarrow \mathrm{dom}(A_t)$
so that a patient presenting at time $t$ with $H_t = h_t$ is assigned
treatment $\pi_t(h_t)$.  The value of a DTR $\pi$, denoted
$\mathbb{E}^{\pi}Y$, is the expected outcome under the restriction
that $A_t = \pi_t(H_t)$.  The optimal DTR, say
$\pi^{\mathrm{opt}},$ satisfies $\mathbb{E}^{\pi^{\mathrm{opt}}}Y =
\sup_{\pi}\mathbb{E}^{\pi}Y$.

Methods for estimating optimal DTRs from data can be broadly
classified as either indirect or direct estimation
methods~\citep{barto2004}.  Indirect estimation methods use
approximate dynamic programming with parametric, semiparametric or
nonparametric methods to first estimate a series of outcome models and
then from these models infer the optimal DTR.  $Q$-learning
\citep{murphyZfive, bibhasBook, minChapter, bibhasChapter},
$A$-learning \citep{murphyZThree, robins2004optimal},
regret-regression \citep{henderson2009regret} are popular indirect
methods in the statistical literature.  We provide a detailed discussion 
of $Q$-learning
below.

Direct estimation methods, also known as policy search methods,
maximize an estimator of the expected cumulative outcome over DTRs in
a pre-specified class.  Recent statistical work in this area includes
marginal structural mean models \citep{robinsetal2008, Orellana10},
augmented value maximization \citep{baqun, baqun2}, and outcome
weighted learning \citep{yingqi, yingqi2}.

One potential advantage of indirect methods is that the requisite
outcome models can be built using standard statistical models
(generalized regression models, time series models, etc.) which can be
checked for goodness of fit.  This is particularly attractive when
scientific theory, expert opinion can be used in forming the outcome
model. A potential drawback is that the optimal DTR is indirectly
inferred from the outcome models rather than being estimated directly.
In contrast, most direct estimation methods do not or minimally utilize
outcome models and thereby are robust to model misspecification.
However, direct estimation methods generally produce
estimators of the parameters (in an DTR) with higher variance than
indirect estimation methods.  This fact has been recognized for some
time in the computer science literature with efforts there focused on
using outcome models in combination with direct methods so as to
reduce variance \citep[][]{sutton1999, konda2003}.
%
Indeed there is a vast literature concerning both indirect and direct
methods for constructing optimal policies, (i.e., dynamic treatment
regimes) in the field of reinforcement learning with many good
introductory books \citep[][]{sutton, si2004handbook, busoniu2010,
  csaba2010, wiering2012}.  However the focus of this work is on
algorithms for estimation; inference, e.g., confidence intervals or
test statistics, that can be used in discussing the level of
confidence concerning the constructed DTR with clinical scientists,
are, to our knowledge, absent.
%
%
%


To illustrate and discuss inferential challenges, we consider
estimators constructed using $Q$-learning.
Q-Learning is attractive to statistical practitioners because
Q-Learning can be viewed as a multi-stage extension of regression
\citep[][]{inbalTwo}, thus enabling much of the intuition developed in
that area to be (somewhat) easily translated to the area of DTRs.
Q-Learning is an indirect method of constructing a DTR from data; in
the appendix \ref{ap:outcomewt}, we illustrate review a direct method,
outcome-weighted learning, and illustrate that the use of this method
poses the same inferential challenges as $Q$-Learning. The problems we
identify with Q-Learning apply to many of the aforementioned
estimators.
%
%

Define the $Q$-functions \citep[][]{sutton, murphyZfive}
as
\begin{eqnarray}
  Q_2(h_2, a_2) &\triangleq& \mathbb{E}(Y|H_2=h_2, A_2=a_2), \nonumber \\
  Q_1(h_1, a_1) &\triangleq& \mathbb{E}\left(
    \max_{a_2}Q_2(H_2, a_2)\big|H_1=h_1, A_1=a_1
\right),\label{q1Def}
\end{eqnarray}
so that $Q_2(h_2, a_2)$ measures the quality of assigning treatment
$a_2$ to a patient presenting with $h_2$  at the second stage, and
$Q_1(h_1, a_1)$ measures the quality of assigning treatment $a_1$ to
a patient presenting with $h_1$ at baseline assuming optimal
treatment selection at the second stage.  If the $Q$-functions
are known, then the optimal DTR is given by
the dynamic programming solution, $\pi_{t}^{\mathrm{dp}}(h_t) =
\arg\max_{a_t}Q_t(h_t, a_t)$ \citep[][]{bellman}.

Note that $\pi_t^{\mathrm{dp}}(h_t) = 1_{Q_t(h_t, 1) - Q_t(h_t, 0) \ge
  0}$ (recall that $a_t\in\{0,1\}$).  $Q$-learning provides estimators
of the $Q$-contrasts, $Q_t(h_t, 1) - Q_t(h_t, 0)$.
Owing to the max-operator in (\ref{q1Def}),
$Q_1$ is a nonsmooth functional of the underlying generative
distribution, hence the estimand 
is also nonsmooth.
We next illustrate how this nonsmoothness impacts the sampling
distributions of DTR estimators using $Q$-learning.

\subsection{$Q$-Learning}
$Q$-learning estimates the optimal DTR by postulating regression
models for the $Q$-functions and then taking the plug-in dynamic
programming solution.  Consider linear models for the $Q$-functions of
the form $Q_t(h_t, a_t;\beta_t) = h_{t,0}^{\T}\beta_{t,0} + a_t
h_{t,1}^{\T}\beta_{t,1}$ where $h_{t,0}$ and $h_{t,1}$ are known
feature vectors constructed from $h_t$ and $\beta_t =
(\beta_{t,0}^{\T}, \beta_{t,1}^{\T})^{\T}$; these feature vectors
might contain splines or other nonlinear basis expansions.  Recall
that an open problem in DTR research is the development of a
principled feature construction method.  The above linear model
highlights a crucial difference between usual goal of constructing
features for prediction and constructing features for decision making.
To see this note that from the linear model for the $Q$-function, only
the features $h_{t,1}$ will be used by the decision rule
$\pi_t^{\mathrm{dp}}$.  Thus high quality features for decision making
(as opposed to prediction) should interact with the treatment $a_t$
sufficiently strongly so that the $\pi_t^{\mathrm{dp}}(h_{t})$ varies
by $h_{t,1}$.  At this time research focused on discovering features
for decision making has been in the one-step setting
\citep[see][]{Gunter, foster2011, dusseldorp2013qualitative, holly}; 
the multistage setting is essentially open.




The
parameters indexing the $Q$-functions are estimated using least squares.
Let $\pn$ denote empirical expectation, for example
$\pn f(Z) = n^{-1}\sum_{i=1}^{n}f(Z_i)$ where $\{Z_i\}_{i=1}^n$ is a random sample.
One version
of the $Q$-learning  algorithm is as follows.
\begin{enumerate}
  \item Stage 2 regression: $\wh{\beta}_{2} =
\arg\min_{\beta_{2}} \pn \left(
  Y_2 - Q_2(H_2, A_2;\beta_2)
\right)^2$.
 \item Predicted second stage outcome: $\widetilde{Y} =
   Y_1 + \max_{a_2}Q_2(H_2, a_2;\wh{\beta}_{2})$.
 \item Stage 1 regression: $\wh{\beta}_{1} = \arg\min_{\beta_1}\pn
   \left(
     \widetilde{Y} - Q_1(H_1, A_1;\beta_1)
     \right)^2.$
\end{enumerate}
The $Q$-learning estimator of the optimal DTR is thus
$\wh{\pi}_{t}(h_t) = \arg\max_{a_t}Q_t(h_t, a_t;\wh{\beta}_{t})$.
The second stage coefficients $\wh{\beta}_{2}$ are ordinary least
squares estimators and are thus regular and asymptotically normal
under mild conditions (see Section 4).  However, the first stage
coefficients depend on the maximized second stage $Q$-function;
because the max operator is nonsmooth, the estimated coefficients
$\wh{\beta}_{1}$ are in turn a nonsmooth function of the data.

For notational simplicity from here until Section 6,
 $Y_1 \equiv 0$ so that $Y = Y_2$, and thus
we will omit any subscripts on $Y$.
Define the following population analogs of the estimators used
in $Q$-learning:
\begin{eqnarray*}
  \beta_2^* &\triangleq& \arg\min_{\beta_2}P\left(
    Y - Q_2(H_2, A_2;\beta_2)
\right)^2, \\
 \widetilde{Y}^* &\triangleq & \max_{a_2}Q_2(H_2, a_2;\beta_2^*)= H_{2,0}^{\T}\beta_{2,0}^* +
\left[H_{2,1}^{\T}\beta_{2,1}^*\right]_+,  \\
\beta_1^* &\triangleq& \arg\min_{\beta_1}P\left(
\widetilde{Y}^* - Q_1(H_1, A_1;\beta_1)
\right)^2,
\end{eqnarray*}
where $P$ denotes expectation with respect to the distribution of
$(X_0, A_1, X_1, Y_1, A_2, X_2, Y_2)$ and the second line follows from the fact that $a_2\in\{0,1\}$.  In addition, define $B_t \triangleq
(H_{t,0}^{\T}, A_t H_{t,1}^T)^{\T}$, $\Sigma_{t,\infty} \triangleq
PB_tB_t^{\T}$ for $t=1,2$, and $\widehat{\Sigma}_{t} \triangleq \pn
B_t B_t^{\T}$.  We assume $\widehat{\Sigma}_{t}$ is invertible.  Then
$\widehat{\beta}_{1} = \widehat{\Sigma}_{1}^{-1}\pn B_1
\widetilde{Y}$, $\beta_1^* =
\Sigma_{1,\infty}^{-1}PB_1\widetilde{Y}^*$ so that
$\rtn(\widehat{\beta}_{1} - \beta_1^*) = \wh{\Sigma}_{1}^{-1}\rtn\pn
B_1(\widetilde{Y} - B_1^{\T}\beta_1^*)$.  It is useful to decompose
$\wh{\Sigma}_{1}^{-1}\rtn\pn B_1(\widetilde{Y} - B_1^{\T}\beta_1^*)$
as
\begin{equation}\label{coeffDecomp} \mathbb{S}_{n} +
\wh{\Sigma}_{1}^{-1}\pn B_1 \mathbb{U}_{n},
\end{equation} where
\begin{eqnarray*} \mathbb{S}_n &=& \hat{\Sigma}_{1}^{-1}\rtn \pn
B_1\bigg[ \left(H_{2,0}^{\T}\beta_{2,0}^* +
\left[H_{2,1}^{\T}\beta_{2,1}^*\right]_+ - B_1^{\T}\beta_1^* \right) +
H_{2,0}^{\T}\left(\bHat_{2,0} - \beta_{2,0}^*\right)\bigg], \\
\mathbb{U}_n &=& \rtn\left(\big[H_{2,1}^{\T} \bHat_{2,1}\big]_+ -
\left[H_{2,1}^{\T}\beta_{2,1}^*\right]_+\right).
\end{eqnarray*}
The term $\mathbb{S}_{n}$ is smooth and asymptotically normal but
$\mathbb{U}_{n}$ is nonsmooth in $\widehat{\beta}_{2,1}$.
To understand the implications of this
nonsmoothness, fix $H_{2,1} = h_{2,1}$.  If
$h_{2,1}^{\T}\beta_{2,1}^*\ne 0$, then
$\mathbb{U}_{n}\big|_{H_{2,1}=h_{2,1}}$ is asymptotically normal with
mean zero.  However, if $h_{2,1}^{\T}\beta_{2,1}^* = 0$ then
$\mathbb{U}_{n}\big|_{H_{2,1}=h_{2,1}} =
\left[h_{2,1}^{\T}\rtn(\wh{\beta}_{2,1} -\beta_{2,1}^*)\right]_{+}$
which converges to the positive part of a mean zero normal random
variable.  Thus, the limiting distribution of $\rtn(\wh{\beta}_{1} -
\beta_1^*)$ depends abruptly on the value of $\beta_{2,1}^*$ and the
distribution of $H_{2,1}$.  This abruptness signals nonregular inference.

If $H_{2,1}$ is composed only of continuous variables then some
sceptism is natural because $P[H_{2,1}^{\T}\beta_{2,1}^* = 0]=0$.
However in most clinical trials, the effect of treatment can be
expected to be small ($H_{2,1}^{\T}\beta_{2,1}^*$ is the effect of
stage 2 treatment) relative to the noise level, thus even though
$H_{2,1}^{\T}\beta_{2,1}^*$ may not be $0$, it's estimator can be
expected to be near $0$ with high probability.  And as we shall see
that the limiting distribution of $\rtn(\wh{\beta}_{1} - \beta_1^*)$
depends abruptly on the value of $\beta_{2,1}^*$ and the distribution
of $H_{2,1}$ indicates that the small sample behavior of
$\rtn(\wh{\beta}_{1} - \beta_1^*)$ is poorly approximated by
fixed-parameter asymptotic results that assume
$P[H_{2,1}^{\T}\beta_{2,1}^* = 0]=0$ (see discussion of bias in
Section 3 and evaluation of confidence intervals in Section 5).
Moving-parameter (e.g., local ) asymptotic results provide a better
reflection of small sample behavior and are provided in the Sections 3
and 4.

\section{Asymptotic bias}
In the study of nonregular estimators, much attention has been given
to asymptotic bias, characterized here as bias that is $O(1/\rtn)$.
Since asymptotic bias may be indicative of bias in small samples,
incorrect Type I error levels in hypothesis testing, and poor coverage
rates of confidence intervals \citep[e.g.,][]{blume1968, casella1981,
  bickel1981, robins2004optimal, marchand2004,
  chakraborty2009inference, moodieT}, there is great interest in
characterizing and reducing asymptotic bias.  Here we: (i)
characterize the asymptotic bias of the first stage $Q$-learning
estimator; (ii) show that the asymptotic bias can be reduced by using
a shrinkage estimator; and (iii) argue that shrinking too aggressively
can lead to arbitrarily bad performance in finite samples.

We use $\mathbb{E}$ to denote expectation  over $P$ (the distribution
of the observed data).  Let $c\in\mathbb{R}^{\dim(\beta_1^*)}$ be
fixed.  For any $\rtn$-consistent estimator $\tilde{\beta}_1$ of
$\beta_1^*$ with $\rtn(\tilde{\beta}_1-\beta_1^*)$ converging in
distribution to $\mathbb{M}$, define the $c$-directional asymptotic
bias of $\tilde{\beta}_1$ as
\begin{equation*}
\mathrm{Bias}(\tilde{\beta}_{1}, c) \triangleq
\mathbb{E}c^{\T}\mathbb{M}.
\end{equation*}
Define
\begin{eqnarray*}
g_{2}(B_2, Y;\beta_2^*) &\triangleq& B_2(Y - B_2^{\T}\beta_2^*), \\
g_{1}(B_1, H_2;\beta_1^*, \beta_2^*) &\triangleq&
B_1\left(H_{2,0}^{\T}\beta_{2,0}^*
 + \left[H_{2,1}^{\T}\beta_{2,1}^*\right]_{+} -
 B_{1}^{\T}\beta_1^*\right).
\end{eqnarray*}
Throughout we assume:
\begin{itemize}
\item[(A1)] The histories $H_2$, features $B_1$, and outcomes $Y$,
satisfy the moment inequalities\\  $P||H_{2}||^2\,||B_1||^2 < \infty$
 and $PY^2||B_2||^2 < \infty$.
\item[(A2)] The matrices $\Sigma_{t,\infty}$ and $
\mathrm{Cov}\,(g_1, g_2)$ are strictly positive definite.
\end{itemize}\noindent
Assumptions (A1)-(A2) are quite mild, requiring only full rank design
matrices and some moment conditions.
Using standard methods it can be shown that $\mathbb{V}_{n} \triangleq
\rtn(\bHat_{2} - \beta_{2}^*)$ is asymptotically normal with mean zero
and variance-covariance $\Omega =
(PB_2B_2^{\T})^{-1}PB_2B_2^{\T}(Y-B_2^{\T}\beta_2^*)^2(PB_2B_2^{\T})^{-1}$.
Let $\Sigma_{21,21}$
denote the submatrix of $\Omega$ corresponding the limiting
asymptotic covariance of $\rtn(\bHat_{2,1} - \beta_{2,1}^*)$ and
$\hat{\Sigma}_{21,21}$ the corresponding plug-in estimator.
The following result is proved
in Appendix~\ref{ap:proofs}.
\begin{thm}\label{QBias}
  Assume (A1) and (A2) and let $c\in\mathbb{R}^{\dim(\beta_1^*)}$ be fixed.  Then:
  \begin{equation*}
    \mathrm{Bias}(\wh{\beta}_{1}, c) =
    \frac{c^{\T}\Sigma_{1,\infty}^{-1}P\left[B_1\sqrt{
        H_{2,1}^{\T}\Sigma_{21,21} H_{2,1}}1_{H_{2,1}^{\T}\beta_{2,1}^* = 0}\right]}{\sqrt{2\pi}}.
  \end{equation*}
\end{thm}\noindent
The asymptotic bias of $Q$-learning is nonzero when the
second stage treatment effect, $H_{2,1}^{\T}\beta_{2,1}^*$, satisfies
$P(H_{2,1}^{\T}\beta_{2,1}^* = 0) > 0$.

A common strategy for reducing asymptotic bias in $Q$-learning is
to shrink the predicted outcome $\widetilde{Y}$.
 \cite{moodieT} proposed a hard-thresholding approach;
 \cite{chakraborty2009inference}  proposed a soft-thresholding
 estimator;  and more recently \cite{song} proposed a penalized
 version of $Q$-learning.
We use the soft-thresholding estimator
 proposed by \cite{chakraborty2009inference} as an illustrative example.
\cite{chakraborty2009inference} illustrate, using simulation studies,
that soft-thresholding reduces bias in small samples.
Define
\begin{equation}\label{bibhasEst}
  \widetilde{Y}^{\sigma} \triangleq \wh{\beta}_{2,0}^{\T}H_{2,0} + \left[
H_{2,1}^{\T}\wh{\beta}_{2,1}
\right]_{+}\left(
1 - \frac{\sigma H_{2,1}^{\T}\wh{\Sigma}_{21,21}H_{2,1}}{n(\wh{\beta}_{2,1}^{\T}H_{2,1})^2}
\right)_{+},
\end{equation}
where $\sigma$ is nonnegative constant.  For positive values of
$\sigma$, the soft-thresholding estimator shrinks the nonsmooth
part of the predicted outcome towards zero.  The first stage
soft-thresholding estimators are given by
\begin{equation*}
\wh{\beta}_1^{\sigma} \triangleq \arg\min_{\beta_1}\pn\left(
\widetilde{Y}^{\sigma} - Q_1(H_1, A_1;\beta_1)
\right)^2.
\end{equation*}
The following result
is proved in Appendix~\ref{ap:proofs}.
\begin{thm}
Assume (A1) and (A2) and let $c\in\mathbb{R}^{p_1}$ be fixed.  Then:
\begin{enumerate}
  \item $\big|\mathrm{Bias}(\wh{\beta}_1^{\sigma}, c)\big| \le
    \big|\mathrm{Bias}(\wh{\beta}_{1},c)\big|$ for any $\sigma \ge 0$.
  \item If $\mathrm{Bias}(\wh{\beta}_{1}, c) \ne 0$ then for $\sigma > 0$
    \begin{equation*}
      \frac{\mathrm{Bias}(\wh{\beta}_{1}^{\sigma},c)}
      {\mathrm{Bias}(\wh{\beta}_{1},c)} =
        \exp\lbrace-\sigma/2\rbrace - \sigma\int_{\sqrt{\sigma}}^{\infty}
        \frac{1}{x}\exp\lbrace-x^2/2\rbrace \mathrm{d}x.
    \end{equation*}
\end{enumerate}
\end{thm}\noindent
\cite{chakraborty2009inference} recommend $\sigma=3$ which corresponds
to an approximate empirical Bayes estimator; plugging $\sigma=3$ into
the above expression shows an approximate 13-fold reduction in asymptotic
bias.
The soft-thresholding estimator has smaller asymptotic bias than
$Q$-learning and the preceding result seems to suggest that larger
values of $\sigma$ are preferred; indeed if $\sigma \rightarrow
\infty$ the asymptotic bias of the soft-thresholding estimator
converges to zero.   These results are point-wise in the
parameter space for $(\beta_1,\beta_2)$; that is for any fixed true
parameter value of $(\beta_1,\beta_2)$ the asymptotic bias converges
to zero.

While  it appears that these methods reduce asymptotic bias it is known that the methods cannot  completely
remove the asymptotic bias without driving the mean squared
error to infinity  \cite[see, for example,][]{sethuramanDoss,
  brownLiu, chen2004}.   Furthermore, even considering just the bias, if we evaluate the bias in a uniform (across the parameter space) manner
the situation looks quite different.
  In fact, from this
viewpoint, we see that soft-thresholding may actually incur
significantly more bias in finite samples than $Q$-learning,
especially for large values of $\sigma$.  Intuitively reducing bias at
one point in the parameter space leads to increased bias at other
points.   We illustrate the bias both from a theoretical viewpoint as well as providing a toy example that highlights the bias.

Local or moving-parameter asymptotics play an important role in the
theoretical study of nonsmooth estimators, such as $\wh{\beta}_{1}$.  Local asymptotics provide a way to understand and study
the behavior of a nonsmooth estimator in a more uniform manner across the parameter space, in particular by using generative models that
are arbitrarily `close' to the problematic nonsmooth points in the parameter space.
%
%
 Consider the following local asymptotic
framework.
\begin{itemize}
\item[(A3)] For any $s \in \mathbb{R}^{\dim(\beta_{2,1}^*)}$, there exists
  a sequence of local alternatives $P_{n}$ converging to $P$ in the
  sense that:
\begin{equation*}
\int\left[
\rtn\left(dP_{n}^{1/2} - dP^{1/2}\right) - \frac{1}{2}v_sdP^{1/2}
\right]^2 \rightarrow 0,
\end{equation*}
for some real-valued measurable function $v_s$ for which
\begin{itemize}
\item if $\beta_{2,n}^* \triangleq
\arg\min_{\beta}P_{n}(Y - Q_2(H_2, A_2;\beta)^2$, then
$\beta_{2,1,n}^{*} \triangleq \beta_{2,1}^* + s/\rtn + o(1/\rtn)$ and
\item $P_{n}||H_{2}||^2\,||B_1||^2$,  $P_{n}Y_2^2||B_2||^2$ are bounded sequences.
\end{itemize}
\end{itemize}
See the Appendix for the relationship between $v_s$ and $s$.
Define $\tilde{Y}_{n}^* = H_{2,0}^{\T}\beta_{2,0,n}^{*} +
\left[H_{2,1}^{\T}\beta_{2,1,n}^{*}\right]_+$ and $\beta_{1,n}^{*}
\triangleq \arg\min_{\beta}P_{n}(\tilde{Y}_{n}^* - Q_{1}(H_1,
A_1;\beta))^2$.  For any estimator $\widetilde{\beta}_1$ of
$\beta_1^*$ for which $\rtn(\widetilde{\beta}_1-\beta_{1,n}^*)$
converges in distribution under $P_n$ to a random vector indexed by $s$, say $\mathbb{M}(s)$,
define the $c$-directional asymptotic bias under
$P_n$ as
\begin{equation*}
  \mathrm{Bias}(\wh{\beta}_{1}, c, s) \triangleq
  \mathbb{E}c^{\T}\mathbb{M}(s).
\end{equation*}
The following result is proved in Appendix~\ref{ap:proofs}.
\begin{thm}
Assume (A1)-(A3) and let $c\in\mathbb{R}^{\dim(\beta_1^*)}$ be fixed.  Further
assume
that $P1_{H_{2,1}^{\T}\beta_{2,1}^* =0} > 0$. Then:
\begin{enumerate}
  \item $\sup_{s\in\mathbb{R}^{\dim(\beta_{2,1}^*)}}\big|\mathrm{Bias}(\wh{\beta}_{1}, c,
    s)| \le
    \frac{||c^{\T}\Sigma_{1,\infty}^{-1}||P\left[||B_1||\sqrt{H_{2,1}^{\T}\Sigma_{21,21}
        H_{2,1}}1_{H_{2,1}^{\T}\beta_{2,1}^*=0}\right]}{\sqrt{2\pi}} + o(1)$.
  \item $\sup_{s\in\mathbb{R}^{\dim(\beta_{2,1}^*)}}\big|\mathrm{Bias}(
    \wh{\beta}_{1}^{\sigma}, c, s)\big| \rightarrow \infty$ as $\sigma\rightarrow
    \infty$.
\end{enumerate}
\end{thm}\noindent
The preceding suggests that thresholding too aggressively may lead to
large bias in finite samples; results of this type
are anticipated by \cite{sethuramanDoss, brownLiu, hirano2012}.

Next we consider a toy example which more clearly illuminates the effect of thresholding on  bias.
 Consider data
$\lbrace (A_i, Y_i)\rbrace_{i=1}^{n}$ from a two-arm
randomized study where: $A \in\lbrace 0, 1\rbrace$ denotes
a randomly assigned binary treatment; and $Y\in \mathbb{R}$
denotes the outcome coded so that higher values are better.
Assume subjects are randomized with equal probability so
that $P(A=1)=1/2$.  Define $\mu_{a}^*\triangleq \mathbb{E}(Y|A=a)$,
and $\theta^* \triangleq \max(\mu_0^*, \mu_1^*)$ so that $\theta^*$
denotes mean outcome if all subjects are assigned treatment
$\arg\max_{a}\mu_a^*$.  Let $\widehat{\mu}_{a} \triangleq
\pn Y1_{A=a}/\pn 1_{A=a}$, then the plug-in estimator
of $\theta^*$ is
\begin{equation*}
\widehat{\theta} = \max(\widehat{\mu}_{0}, \widehat{\mu}_{1}) = \frac{\widehat{\mu}_{0} +
\widehat{\mu}_{1}}{2} + \frac{|\widehat{\mu}_{0} - \widehat{\mu}_{1}|}{2},
\end{equation*}
which is the sum of a smooth term, $(\widehat{\mu}_{0} +
\widehat{\mu}_{1})/2$, and a non-smooth term $|\widehat{\mu}_0 -
\widehat{\mu}_{1}|/2$.  In this example, the problematic area of the parameter space is $\Theta_{\mathrm{Bad}} =
\lbrace (\mu_1, \mu_2)\in\mathbb{R}^2\,: \, \mu_1 = \mu_2\rbrace$;
under mild regularity conditions it can be seen that if
$\theta^*\notin \Theta_{\mathrm{Bad}}$, then $\rtn(\widehat{\theta} -
\theta^*)$ converges in distribution to mean zero normal random
variable, whereas if $\theta^*\in\Theta_{\mathrm{Bad}}$, then
$\rtn(\widehat{\theta} - \theta^*)$ converges in distribution to $(Z_0
+ Z_1)/2 + |Z_0 - Z_1|/2$ where $Z_0, Z_1$ are independent mean zero
normal random variables.  Thus, when
$\theta^*\in\Theta_{\mathrm{Bad}}$, since $\mathbb{E}|Z_0 -Z_1| \ge 0$ with
equality only when both $Z_0$ and $Z_1$ are degenerate,
$\widehat{\theta}$ has positive asymptotic bias.

One approach to reducing the asymptotic bias of $\widehat{\theta}$
is by thresholding the nonsmooth term in $\widehat{\theta}$.
Assume that $\mathrm{Var}(Y|A=a) = 1$ for $a=0,1$.
For $\sigma > 0$, define
\begin{equation}\label{maxMeansThresh}
  \widehat{\theta}^{\sigma} \triangleq
\frac{\widehat{\mu}_{0} + \widehat{\mu}_{1}}{2} + \frac{|\widehat{\mu}_{0}
- \widehat{\mu}_{1}|}{2}\left(1-\frac{4\sigma}
{n(\widehat{\mu}_{0}-\widehat{\mu}_{1})^2}\right)_{+},
\end{equation}
so that (\ref{maxMeansThresh}) is analogous to (\ref{bibhasEst}).  In
fact, (\ref{maxMeansThresh}) is a special case of (\ref{bibhasEst}) and is the resulting estimator of the mean response at the first stage
when there are no stage 2 covariates (except for the treatment indicator).  Thus, analogous arguments to those in the
preceding section show that, for $\theta^*\in\Theta_{\mathrm{Bad}}$,
$\widehat{\theta}^{\sigma}$ has smaller asymptotic bias than
$\widehat{\theta}$, and that this asymptotic bias decreases as
$\sigma$ increases.  Similarly, a local asymptotic analysis suggests
that aggressive shrinkage may lead to large bias in finite samples.

We now illustrate the small sample behavior of
$\widehat{\theta}^{\sigma}$ using simulated data.  We assume $Y|A=a
\sim \mathrm{Normal}(\mu_a, 1)$ and that treatment assignment is
perfectly balanced.  We use 1000 Monte Carlo replications to estimate
bias for each parameter setting.  The leftmost plot in Figure
\ref{biasPlots} shows the bias as a function of the treatment effect
$\mu_1^* -\mu_0^*$ and tuning parameter $\sigma$ for $n=10$.  Note
that when $n=10$ a standard normal 90\% confidence interval for
$\mu_1^*-\mu_0^*$ has a width of about two.  Thus, the $y$-axis has
been scaled to roughly correspond to a 90\% confidence interval centered around the problematic point $0$.
From the plot it is clear that {\em if} $\mu_1^*-\mu_0^*=0$, larger
values of $\sigma$ correspond to lower bias; however, as anticipated
from the local asymptotic analysis, large values of $\sigma$ cause the
bias to increase dramatically as $\mu_1^*-\mu_0^*$ moves away from
zero but stays within the confidence interval.  As the data do not
contain sufficient information to differentiate between different
parameter values within the confidence interval, an adaptive shrinkage
strategy based on the estimated treatment difference
$\widehat{\mu}_{1}-\widehat{\mu}_{0}$ is not possible.  The middle
plot in Figure \ref{biasPlots} shows the same bias plot for $n=100$
displayed with the same $y$-axis as the $n=10$ case; the very small
yellow-red cross-section above the region around $\sigma=0$ is
anticipated by the fixed asymptotic analysis which states for if
$\mu_1^*-\mu_0^*\ne0$ the bias decreases as the sample size increases.
However, the rightmost plot in Figure \ref{biasPlots} shows the bias
for $n=100$ after rescaling the $y$-axis to reflect power (i.e., now
the range of the $y$-axis corresponds to the length of a standard
normal 90\% confidence interval for $\mu_1^*-\mu_0^*$ when $n=100$);
the figure is {\em essentially identical} to the leftmost ($n=10$)
plot.  The similarity of these plots after rescaling exemplifies the
insights gained from a local asymptotics approach which allows notions
of `closeness' to persist as the sample size increases.

\begin{figure}[ht]
  \includegraphics[width=0.32\linewidth]{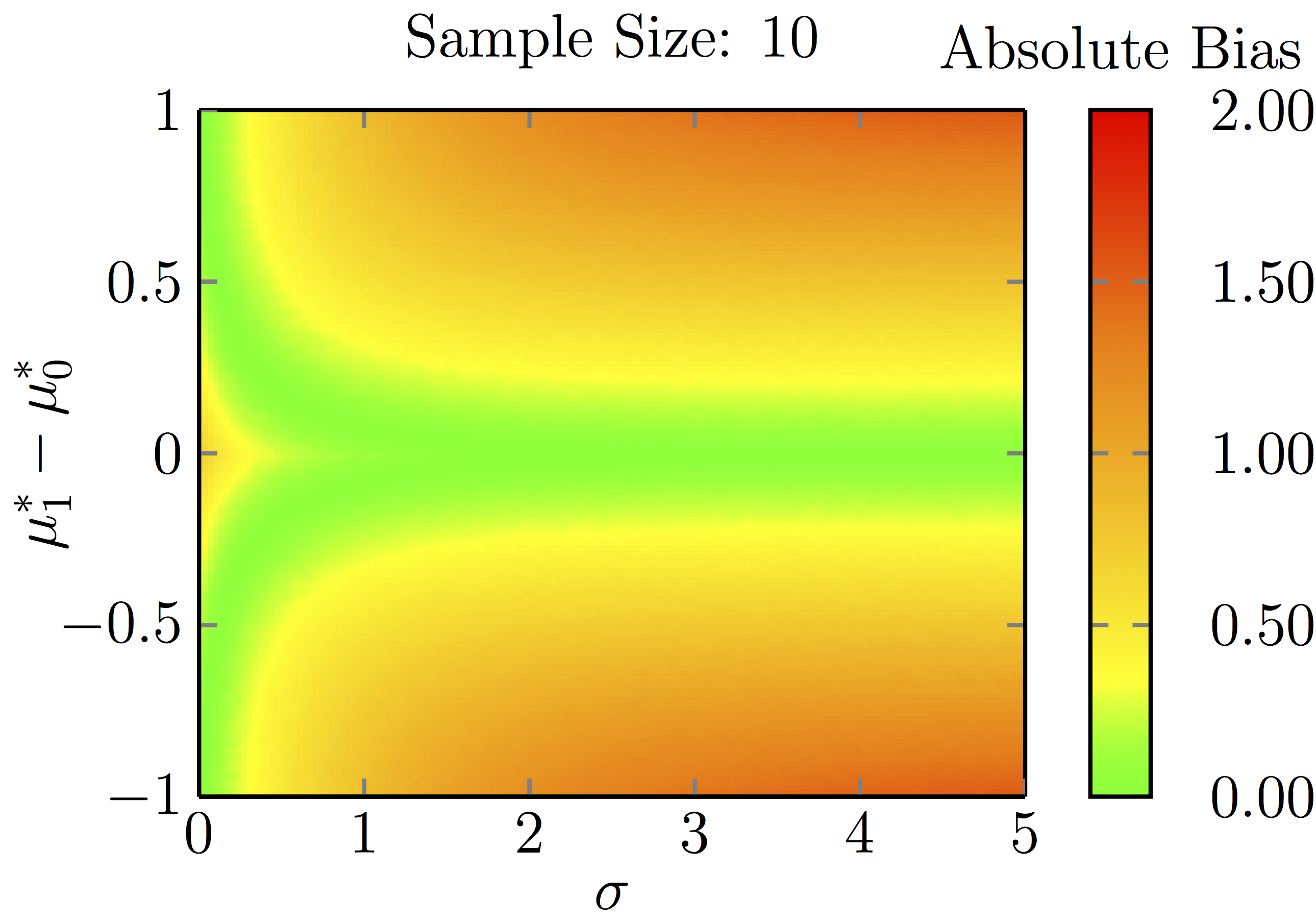}
  \includegraphics[width=0.32\linewidth]{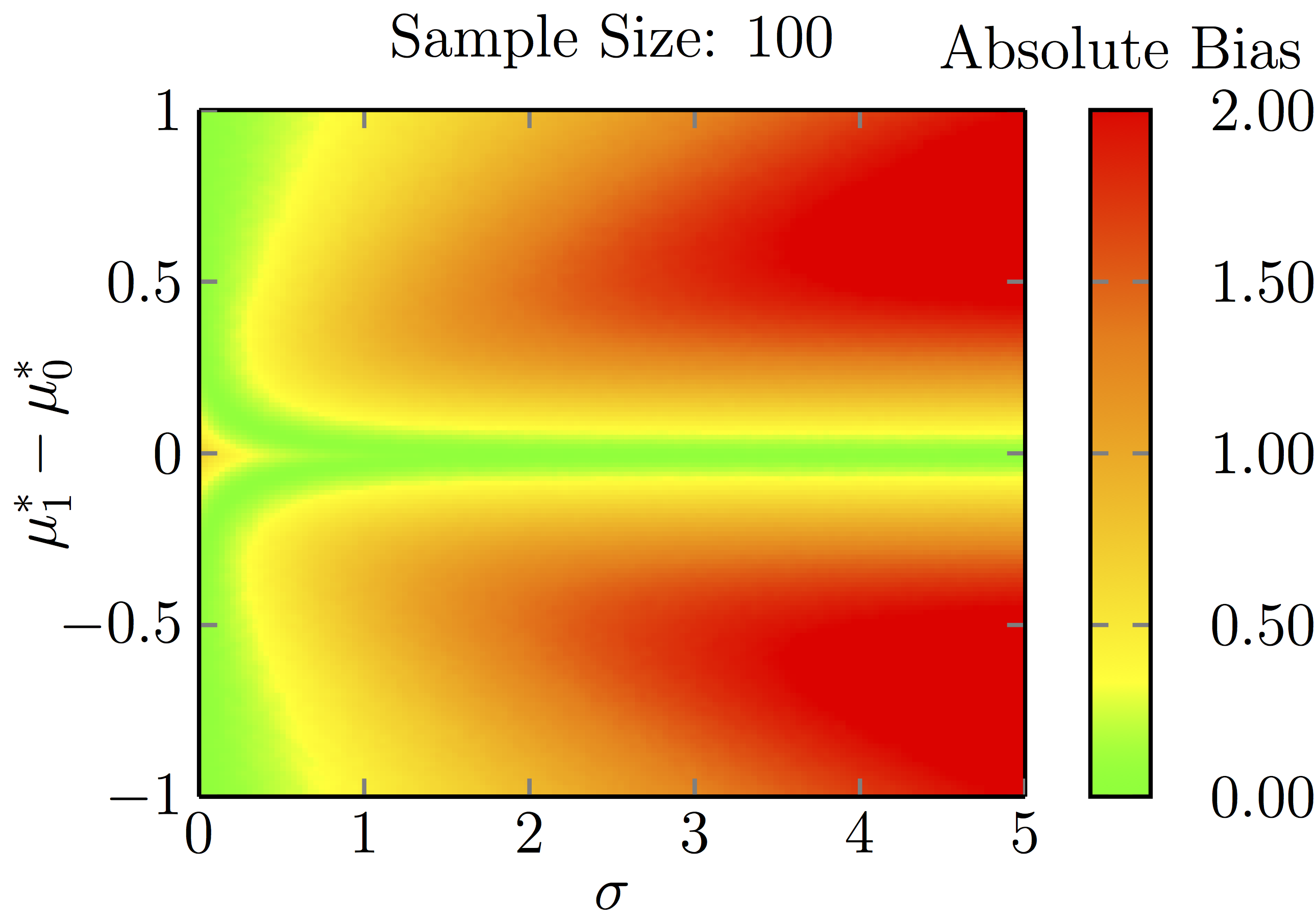}
  \includegraphics[width=0.32\linewidth]{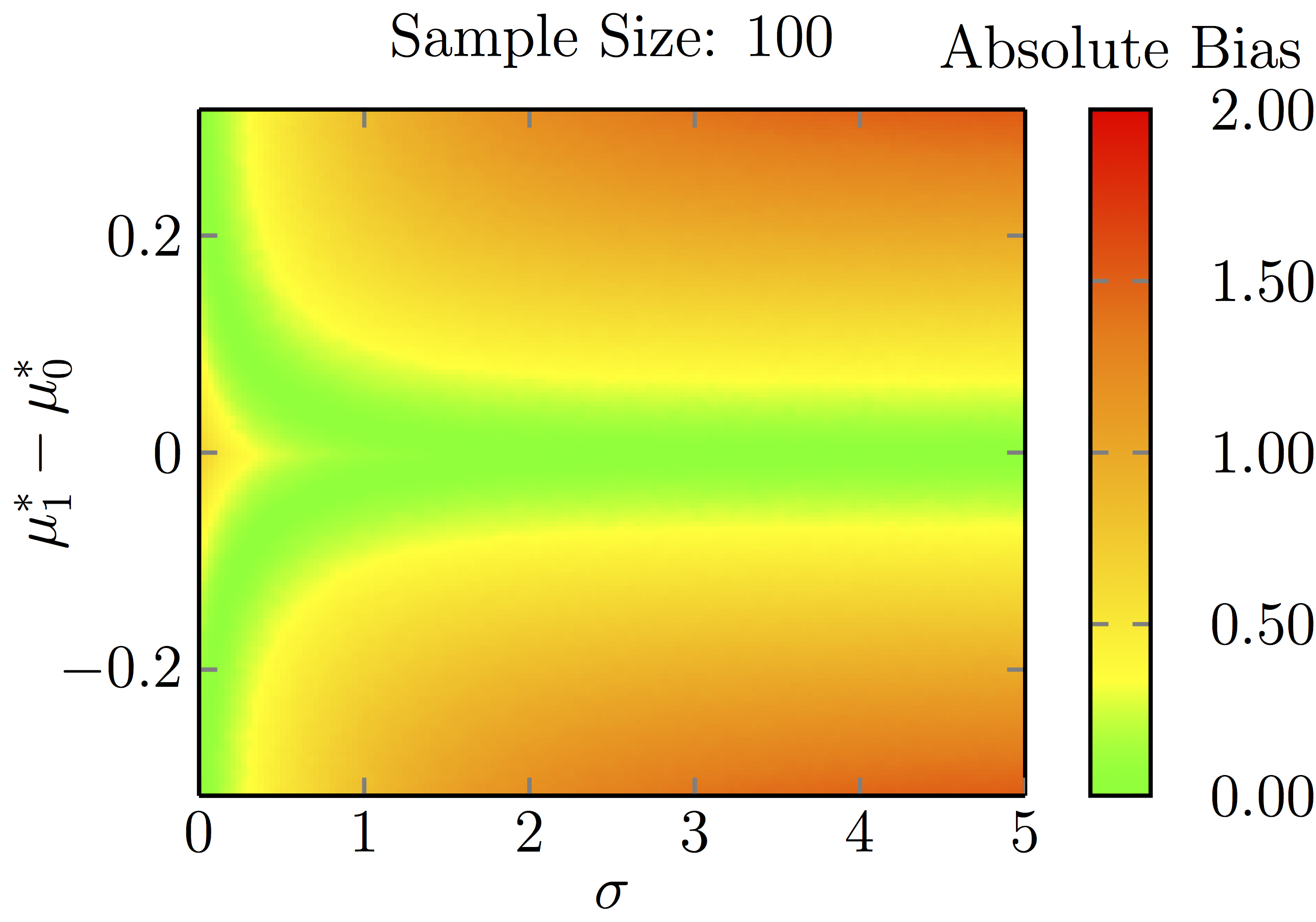}
\caption{ \textbf{Left:} Bias, in units of
  $1/\sqrt{n}$, as a function of effect size $\mu_1^* - \mu_0^*$ and
  tuning parameter $\sigma$ for $n=10$.  \textbf{Center:} Bias, in
  units of $1/\sqrt{n}$, as a function of effect size
  $\mu_1^*-\mu_0^*$ and tuning parameter $\sigma$ for $n=100$;
  \textbf{Right:} Same as center plot after rescaling $y$-axis.}
\label{biasPlots}
\end{figure}

%

\section{Confidence intervals}
If estimated optimal DTRs are to be used to inform clinical decision
making or future research it is essential that they be accompanied by
reliable measures of uncertainty.  Constructing valid confidence intervals
from nonregular estimators is  difficult because it
is impossible to uniformly consistently estimate the sampling
distribution of a nonregular estimator \citep[][]{van1991differentiable,
andrews2000inconsistency, leeb2003finite, hirano2012}.  Estimators that
reduce asymptotic bias, for example thresholding
\citep{chakraborty2009inference} and singular penalization
\citep{song, goldberg2012adaptive}, were originally suggested
as methods for constructing high-quality confidence intervals
for parameters in $Q$-learning.  However, these methods involve
additional nonsmooth operations of the data and it can be shown
that the confidence intervals proposed with these estimators are inconsistent
under local alternatives.
Furthermore, asymptotic bias
only reflects the mean of the sampling distribution whereas confidence
intervals require estimation of the tails of the sampling distribution.
Thus, in general reducing asymptotic bias is not sufficient for valid
inference.

On the other hand, confidence intervals that deliver the desired level
of confidence can be used to conduct inference even in the presence of
bias on the order $1/\sqrt{n}$.
In this section we: (i) review an
adjusted projection interval proposed by Robins [2004]; and (ii)
propose a new procedure that is adaptive and locally consistent.
Additional discussion and potential extensions of the methods proposed
here are provided in Section 7.


\subsection{An adjusted projection interval}
Recall that $h_{2,1}^\T\beta_{2,1}^*$ is the second stage treatment
effect (see Section 2.1) for feature vector $h_{2,1}$; small sample
inferential problems occur when this second stage treatment effect is
small with positive probability (e.g., small sample bias, poor
coverage properties of standard CIs).  Robins [2004] using ideas
similar to those of Berger and Boos [1994] proposed an adjusted
projected confidence interval.  In the context of Q-Learning this idea
is as follows.  For any $\beta_{2,1}$ define
$\widetilde{Y}(\beta_{2,1}) \triangleq \max_{a_2} Q_2(H_2, a_2;
(\widehat{\beta}_{2,0}^{\T}, \beta_{2,1}^{\T}))$ and
$\widetilde{Y}^*(\beta_{2,1}) \triangleq \max_{a_2}Q_2(H_2, a_2;
(\beta_{2,0}^{*\T}, \beta_{2,1}^{\T}))$; subsequently define
$\widehat{\beta}_{1}(\beta_{2,1}) \triangleq \arg\min_{\beta_1} \pn
(\widetilde{Y}_{1}(\beta_{2,1}) - Q_{1}(H_1, A_1; \beta_1))^2$ and
$\beta_{1}^*(\beta_{2,1}) \triangleq \arg\min_{\beta_1}
P(\widetilde{Y}^* - Q_1(H_1, A_1;\beta_1))^2$.  Note that $\beta_{1}^*
= \beta_{1}^*(\beta_{2,1}^*)$.  For $\beta_{2,1}$ fixed, it follows
from standard arguments that $\rtn(\widehat{\beta}_{1}(\beta_{2,1}) -
\beta_{1}^*(\beta_{2,1}))$ is regular, asymptotically normal with mean
zero.  Let $C(\beta_{2,1})$ denote the asymptotic variance-covariance
matrix of $\rtn(\widehat{\beta}_{1}(\beta_{2,1}) -
\beta_{1}^*(\beta_{2,1}))$ and let $\widehat{C}(\beta_{2,1})$ denote a
consistent estimator of $C(\beta_{2,1})$.  A Wald-type asymptotic
$(1-\alpha)\times 100\%$ confidence region for
$\beta_1^*(\beta_{2,1})$ is therefore
\begin{equation*}
\mathbb{I}_{n,\alpha}(\beta_{2,1}) \triangleq \left\lbrace \beta_1\in\mathbb{R}^{\dim(\beta_1^*)}\,: \,
n \left(\widehat{\beta}_{1}(\beta_{2,1}) -
  \beta_1\right)^{\T}\widehat{C}^{-1}(\beta_{2,1})
\left(\widehat{\beta}_{1}(\beta_{2,1}) - \beta_1\right) \le \chi_{1-\alpha, \dim(\beta_1^*)}^2
\right\rbrace,
\end{equation*}
where $\chi_{\alpha, d}^{2}$ is the $(1-\alpha)\times 100$ percentile of a
$\chi^2$-distribution with $d$ degrees of freedom.   In particular,
$\mathbb{I}_{n, \alpha}(\beta_{2,1}^*)$ is a valid asymptotic
$(1-\alpha)\times 100\%$ confidence interval for
$\beta_{1}^*(\beta_{2,1}^*) = \beta_{1}^*$.  Of course,
$\beta_{2,1}^*$ is unknown, but $\hat{\beta}_{2,1}$ is a regular
asymptotically normal estimator of $\beta_{2,1}^*$ and thus standard
methods for constructing confidence sets, e.g., the bootstrap or
Taylor series arguments, can be used to construct a valid
$(1-\eta)\times 100\%$ for $\beta_{2,1}^*$, say $\zeta_{n, \eta}$.
Then, the union
\begin{equation}
  \bigcup_{\beta_{2,1}\in\zeta_{n,\eta}}\mathbb{I}_{n,\alpha}(\beta_{2,1}),\label{projInt}
\end{equation}
is a valid $(1-\alpha-\eta)\times 100\%$ confidence region
for $\beta_{1}^*$.  To see this, note that
\begin{multline*}
P\left(
  \beta_{1}^* \notin  \bigcup_{\beta_{2,1}\in\zeta_{n,\eta}}\mathbb{I}_{n,\alpha}(\beta_{2,1})
\right)  = P\left(
\beta_{1}^* \notin
\bigcup_{\beta_{2,1}\in\zeta_{n,\eta}}\mathbb{I}_{n,\alpha}(\beta_{2,1}),\,
\beta_{2,1}^* \notin \zeta_{n,\eta}
\right) \\ +
P\left(
\beta_{1}^* \notin
\bigcup_{\beta_{2,1}\in\zeta_{n,\eta}}\mathbb{I}_{n,\alpha}(\beta_{2,1}),\,
\beta_{2,1}^* \in \zeta_{n,\eta}
\right),
\end{multline*}
which is bounded above by $P\left( \beta_{2,1}^* \notin \zeta_{n,\eta}
\right) + P\left( \beta_{1}^*(\beta_{2,1}^*) \notin
  \mathbb{I}_{n,\alpha}(\beta_{2,1}^*) \right) \le \eta + \alpha +
o_{P}(1). $ This confidence interval is appealing for its simplicity
but may be conservative especially when $H_{2,1}^{\T}\beta_{2,1}^*$ is
bounded away from zero with high probability.  One approach to reduce
conservatism is to first test $H_0:\beta_{2,1}^* \equiv 0$, if the
test rejects then $\mathbb{I}_{n, \alpha}(\widehat{\beta}_{2,1})$ is
used, if the test fails to reject then the projection interval
(\ref{projInt}) is used (Robins 2004).  This pretesting approach is
adaptive at the population level but may be conservative when the
distribution of $H_{2,1}^{\T}\beta_{2,1}^*$ has mass both near to and
far from zero.  A potentially less conservative approach is to
partition the observed sample into two groups according to the
(estimated) magnitude of $H_{2,1}^{\T}\beta_{2,1}^*$ and apply a
conservative procedure only to observations for which
$H_{2,1}^{\T}\beta_{2,1}^*$ is small.  We now discuss such a
procedure.

\subsection{Adaptive confidence intervals}
In this section we construct a regular, i.e., locally consistent,
confidence interval for linear combinations of the first stage
coefficients.   Note that confidence intervals for the second stage
coefficients can be obtained using standard methods for least squares
estimators.
Let $\widehat{\Sigma}_{1} \triangleq \pn
B_1B_1^{\T}$ so that $\wh{\beta}_1 =
\wh{\Sigma}_{1}^{-1}\pn B_1\widetilde{Y}$ and
$\beta_1^* = \Sigma_{1,\infty}^{-1}P B_1 \widetilde{Y}^*$.
Recall that it is not possible in general to construct a uniformly
convergent estimator of the limiting distribution of
$\rtn(\wh{\beta}_{1} - \beta_1^*)$~\citep[][]{van1991differentiable,
  hirano}.  For a given constant $c\in\mathbb{R}^{\dim(\beta_1^*)}$,
our approach is to bound $c^{\T}\rtn(\wh{\beta}_{1} - \beta_1^*)$
between two regular, uniformly convergent, upper and lower bounds.
Because these bounds are smooth, we can bootstrap them to form a
confidence set for $c^{\T}\beta_1^*$.  This strategy is similar to the
work of \cite{laber2011adaptive} on classification but differs in that
here the functional of interest is a fixed (rather than
data-dependent) parameter and the functional is more complicated.  We
present the two-stage binary-treatment case here; extensions to the
case of an arbitrary number of treatments and stages of treatment can
be found in a technical report \citep[][]{laber2010statistical}.

Recall that for any $c\in\mathbb{R}^{\dim(\beta_1^*)}$
$c^{\T}\rtn(\wh{\beta}_{1} - \beta_{1}^*) =
c^{\T}\wh{\Sigma}_{1}^{-1}\pn B_1(\widetilde{Y} - B_1^{\T}\beta_1^*)$
can be decomposed as $c^{\T}\mathbb{S}_{n} +
c^{\T}\wh{\Sigma}_{1}^{-1}\pn B_1 \mathbb{U}_{n},$ where the term
$\mathbb{S}_{n}$ is smooth and asymptotically normal but
$\mathbb{U}_{n}$ is nonsmooth.  Also recall that
$\mathbb{U}_n = \rtn\left(\big[H_{2,1}^{\T} \bHat_{2,1}\big]_+ -
\left[H_{2,1}^{\T}\beta_{2,1}^*\right]_+\right)$.
Our goal is to form smooth upper and
lower bounds on $c^{\T}\rtn(\wh{\beta}_{1} - \beta_{1}^*)$. To limit
conservatism, these bounds are based on the nonsmooth term
$\mathbb{U}_{n}$ and only involve subjects with small second stage
treatment effects, i.e., those subjects with histories $h_{2,1}$ with
$h_{2,1}^{\T}\beta_{2,1}^*\approx 0$.  We partition the observed data
into two groups: (Group 1) subjects for whom
$h_{2,1}^{\T}\beta_{2,1}^*$ cannot be distinguished from zero; and
(Group 2) subjects for whom $h_{2,1}^{\T}\beta_{2,1}^*$ is unlikely to
be near zero.  This partitioning is based on a \lq\lq pretest"
\citep[see][]{olshen1973conditional, andrews2001testing,
  andrewsSoares, cheng, andrews2009incorrect}.  The pretest is based on
$\wh{T}(h_{2,1})$ which is a test statistic that diverges to
$+\infty$ when $h_{2,1}^{\T}\beta_{2,1}^*$ is nonzero but is bounded in
probability when $h_{2,1}^{\T}\beta_{2,1}^*=0$.  The pretest assigns a
subject with $H_{2,1}=h_{2,1}$ to Group 1 if $\wh{T}(h_{2,1}) \le
\lambda_{n}$ and Group 2 otherwise;  $\lambda_{n}$ is a tuning
parameter.
In what follows we
assume $\wh{T}(h_{2,1}) =
n(h_{2,1}^{\T}\wh{\beta}_{2,1})^2/h_{2,1}^{\T}
\widehat{\Sigma}_{21,21}h_{2,1}$ where $\wh{\Sigma}_{21, 21}$ is the
submatrix of $(\pn B_2B_2^{\T})^{-1}
\pn B_2B_2^{\T}(Y-B_2^{\T}\wh{\beta}_{2,1})^2(\pn B_2
B_2^{\T})^{-1}$ corresponding to the plug-in estimator of the
asymptotic variance of $\mathbb{V}_{n}\triangleq \rtn(\wh{\beta}_{2,1}
- \beta_{2,1}^*)$.

The upper bound on $c^{\T}\rtn(\wh{\beta}_{1} - \beta_1^*)$ is given by
\begin{multline}\label{upperBoundPopn}
\mathcal{U}(c) \triangleq c^{\T}\mathbb{S}_n +
c^{\T}\hat{\Sigma}_{1}^{-1}\pn B_{1}\mathbb{U}_n1_{\hat{T}(H_{2,1}) > \lambda_n}  \\
+ \sup_{\gamma\in\mathbb{R}^{\dim(\beta_{2,1}^*)}}
c^{\T}\hat{\Sigma}_{1}^{-1}\pn
B_{1}
\left(
\left[H_{2,1}^{\T}(\mathbb{V}_n + \gamma)\right]_+ -
\left[H_{2,1}^{\T}\gamma\right]_+
\right)1_{\hat{T}(H_{2,1}) \le \lambda_n}.
\end{multline}
A lower bound, say $\mathcal{L}(c)$, is obtained by replacing $\sup$ with
$\inf$ in the above display.
The intuition behind this upper bound is as
follows. Notice that the second term  in (\ref{coeffDecomp}), namely $c^{\T}\hat{\Sigma}_{1}^{-1}\pn
B_{1}\mathbb{U}_n$,  is equal to
$c^{\T}\hat{\Sigma}_{1}^{-1}\pn B_{1}\mathbb{U}_n1_{\hat{T}(H_{2,1}) > \lambda_{n}} +
c^{\T}\hat{\Sigma}_{1}^{-1}\pn B_{1}\mathbb{U}_{n}1_{\hat{T}(H_{2,1})
  \le \lambda_n}$.
Rewrite the $\mathbb{U}_{n}$ in  $c^{\T}\hat{\Sigma}_{1}^{-1}\pn B_{1}\mathbb{U}_{n}1_{\hat{T}(H_{2,1}) \le \lambda_n}$   as $\left[H_{2,1}^{\T}(\mathbb{V}_n + \rtn\beta_{2,1}^*)\right]_+ -
\left[H_{2,1}^{\T}\rtn\beta_{2,1}^*\right]_+$.  Thus
$c^{\T}\hat{\Sigma}_{1}^{-1}\pn
B_{1}\mathbb{U}_n$,  is equal to
\begin{multline}\label{explainDecompPopn}
c^{\T}\hat{\Sigma}_{1}^{-1}\pn B_{1}\mathbb{U}_n1_{\hat{T}(H_{2,1}) > \lambda_{n}}\\
+  c^{\T}\hat{\Sigma}_{1}^{-1}\pn
B_{1}
\left(
\left[H_{2,1}^{\T}(\mathbb{V}_n + \rtn\beta_{2,1}^*)\right]_+ -
\left[H_{2,1}^{\T}\rtn\beta_{2,1}^*\right]_+
\right)
1_{\hat{T}(H_{2,1}) \le \lambda_n}.
\end{multline}
The quantity, $\left[H_{2,1}^{\T}\rtn\beta_{2,1}^*\right]_+$
characterizes the degree of nonregularity of $\rtn(\bHat_1 -
\beta_1^*)$ (see Theorem 4.2 below).  Replacing $\rtn\beta_{2,1}^*$
with $\gamma$ and taking the supremum over all $\gamma \in
\mathbb{R}^{\dim(\beta_{2,1}^*)}$ is one way of making the second term
in (\ref{explainDecompPopn}) insensitive to local perturbations of
$\beta_{2,1}^*$.

To use the bounds to construct a $(1-\alpha)\times 100\%$ confidence
interval for $c^{\T}\beta_1^*$, first note that $c^{\T}\wh{\beta}_{1}
- \mathcal{U}(c)/\rtn \le c^{\T}\beta_1^* \le c^{\T}\wh{\beta}_{1}
-\mathcal{L}(c)/\rtn$.  We approximate the distribution of the bounds
using the nonparametric bootstrap.  Let $\wh{u}$ denote the
$(1-\alpha/2)\times 100$ percentile of the bootstrap distribution of
$\mathcal{U}(c)$, and let $\wh{l}$ denote the $(\alpha/2)\times 100$
percentile of the bootstrap distribution of $\mathcal{L}(c)$.  Then,
$[c^{\T}\wh{\beta}_{1}-\wh{u}\rtn, c^{\T}\wh{\beta}_{1} -
\wh{l}/\rtn]$ is the proposed confidence interval for
$c^{\T}\beta_1^*$.  We term this confidence interval an adaptive
confidence interval (ACI) for reasons that will become clear shortly.

\begin{slntrmrk*}
  In the ACI $\lambda_{n}$ is a potentially important tuning
  parameter.  In Section 5 we demonstrate that the double bootstrap is
  an effective strategy for constructing a data-driven choice of
  $\lambda_{n}$.
\end{slntrmrk*}

\subsubsection{Theoretical results}
In this section we describe the limiting behavior of the bounds
$\mathcal{L}(c)$ and $\mathcal{U}(c)$ and relate them to the limiting
distribution of $c^{\T}\rtn(\wh{\beta}_{1} - \beta_1^*)$. We  assume:
\begin{itemize}
  \item[(A4)] With probability one the sequence $\lambda_{n}$ tends to infinity
     with $n$  and satisfies $\lambda_n/n \rightarrow 0$.
\end{itemize}
\begin{thm}[Validity of population bounds]
\label{thm:stage1}
Assume (A1)-(A2)  and (A4) and fix $c\in\mathbb{R}^{\dim(\beta_1^*)}$.
\begin{enumerate}
\item $c^{\T}\rtn(\bHat_1 -\beta_1^*) \leadsto
c^{\T}\mathbb{S}_{\infty} + c^{\T}\Sigma_{1,\infty}^{-1}
P\left(B_{1}H_{2,1}^{\T}\mathbb{V}_{\infty}1_{H_{2,1}^{\T}\beta_{2,1}^* > 0}\right) +
c^{\T}\Sigma_{1,\infty}^{-1}PB_{1}\left[H_{2,1}^{\T}\mathbb{V}_{\infty}\right]_+
1_{H_{2,1}^{\T}\beta_{2,1}^*=0}$.
\item If for each $n$, the underlying generative distribution is $P_{n}$, which
  satisfies (A3), then the limiting distribution of
  $c^{\T}\rtn(\bHat_1 - \beta_{1,n}^* )$ is equal to
\begin{multline}
c^{\T}\mathbb{S_\infty} + c^{\T}\Sigma_{1,\infty}^{-1}P\left(B_{1}H_{2,1}^{\T}\mathbb{V}_{\infty}1_{H_{2,1}^{\T}\beta_{2,1}^* > 0}\right)\label{nonreg}
 \\
+ c^{\T}\Sigma_{1,\infty}^{-1}P\left[
B_{1}\left(\left[
H_{2,1}^{\T}(\mathbb{V}_{\infty} + s)
\right]_+ - \left[H_{2,1}^{\T}s\right]_+
\right)1_{H_{2,1}^{\T}\beta_{2,1}^* = 0}\right].
\end{multline}
\item The limiting distribution of $\mathcal{U}(c)$ under both $P$ and
  under $P_{n}$ is
equal to
\begin{multline}
c^{\T}\mathbb{S_\infty} + c^{\T}\Sigma_{1,\infty}^{-1}P\left(B_{1}H_{2,1}^{\T}\mathbb{V}_{\infty}1_{H_{2,1}^{\T}\beta_{2,1}^* > 0}\right)  \\
+ \sup_{\gamma\in\mathbb{R}^{\dim(\beta_{2,1}^*)}}c^{\T}\Sigma_{1,
\infty}^{-1}P\left[
B_{1}\left(\left[
H_{2,1}^{\T}(\mathbb{V}_{\infty} + \gamma)
\right]_+ - \left[H_{2,1}^{\T}\gamma\right]_+
\right)1_{H_{2,1}^{\T}\beta_{2,1}^* = 0}\right],\label{regularized}
\end{multline}
\end{enumerate}
where  $\left(\mathbb{S}_{\infty}^\T,\ \mathbb{V}_{\infty}^\T\right)$ is
asymptotically multivariate normal with mean zero.
\end{thm} \noindent See the Appendix for a proof and the formula for
the $\Cov(\mathbb{S}_{\infty},\mathbb{V}_{\infty})$.  Notice that
limiting distributions of $c^{\T}\rtn(\bHat_1 - \beta_1^*)$ and
$\mathcal{U}(c)$ (or equivalently $\mathcal{L}(c)$) are equal in the
case $H_{2,1}^{\T}\beta_{2,1}^* \ne 0$ with probability one.  That is,
when there is a large treatment effect for almost all patients then
the upper (or lower) bound is tight.  However, when there is a
non-null subset of patients for whom there is no treatment effect,
then the limiting distribution of the upper bound is stochastically
larger than the limiting distribution of $c^{\T}\rtn(\bHat_1 -
\beta_1^*)$.  Thus, the ACI adapts to the setting in which all
patients experience a treatment effect.

Because the distribution of (\ref{nonreg}) depends on the local
alternative, $s$, $\hat\beta_1$ is a nonregular estimator
\citep[][]{van1996weak}.  One might hope to construct an
estimator of the distribution of (\ref{nonreg}) and use this estimator
to approximate the distribution of $c^{\T}\rtn(\bHat_1 - \beta_1^*)$.
However, a consistent estimator of the distribution of (\ref{nonreg})
does not exist because $P_{n}$ is contiguous with respect to
$P$ (by assumption A3).  To see this, let $F_{s}(u)$ be the
distribution of (\ref{nonreg}) evaluated at a point, $u$.  If a
consistent estimator, say $\hat F_n(u)$, existed, that is $\hat
F_n(u)$ converges in probability to $F_{s}(u)$ under $P_{n}$,
then the contiguity implies that $\hat F_n(u)$ converges in
probability to $F_{s}(u)$ under $P$.  This is a contradiction (at
best $\hat F_n(u)$ converges in probability to $F_0(u)$ under $P$).
Because we cannot consistently estimate $s$ and we do not know
the value of $s$, the tightest estimable upper bound on
(\ref{nonreg}) is given by (\ref{regularized}).  As we shall next see,
we are able to consistently estimate the distribution of
(\ref{regularized}).

In order to form confidence sets, the bootstrap distributions of
$\mathcal{U}(c)$ and $\mathcal{L}(c)$ are used.  The next result
regards the consistency of these bootstrap distributions.  Let $\pHat$
denote the bootstrap empirical measure, that is, $\pHat \triangleq
n^{-1}\sum_{i=1}^n M_{n,i}\delta_{\mathcal{T}_i}$ for $(M_{n,1},
M_{n,2},\ldots, M_{n,n}) \sim \mathrm{Multinomial}(n,\, (1/n, 1/n,
\ldots, 1/n))$.  We use the superscript $(b)$ to denote that a
functional has been replaced by its bootstrap analog, so that if
$\omega \triangleq f(\pn)$ then $w^{(b)} \triangleq f(\pHat)$.  Denote
the space of bounded Lipschitz-1 functions on $\mathbb{R}^2$ by
$BL_1(\mathbb{R}^2)$.  Furthermore, let $\mathbb{E}_M$ and $P_M$
denote the expectation and probability with respect to the bootstrap
weights. The following results are proved in the Appendix.
\begin{thm}
\label{thm:ACI}
Assume (A1)-(A2), (A4) and fix $c\in\mathbb{R}^{\dim(\beta_1^*)}$.
Then $(\mathcal{U}(c), \mathcal{L}(c))$ and
$(\mathcal{U}^{(b)}(c), \mathcal{L}^{(b)}(c))$ converge to the same limiting
distribution in probability.  That is,
\begin{equation*}
\sup_{v \in BL_1(\mathbb{R}^2)}\bigg|
\mathbb{E}v\left(\left(\mathcal{U}(c), \mathcal{L}(c)\right)\right)
- \mathbb{E}_Mv\left(\left(
\mathcal{U}^{(b)}(c), \mathcal{L}^{(b)}(c)
\right)\right)
\bigg|
\end{equation*}
converges in probability to zero.
\end{thm}\noindent
\begin{cor}
  Assume (A1)-(A2), (A4) and fix $c\in\mathbb{R}^{\dim(\beta_1^*)}$.  Let
  $\hat{u}$ denote the $(1-\alpha/2)\times 100$ percentile of
  $\mathcal{U}^{(b)}(c)$ and $\hat{l}$ denote the $(\alpha/2)\times
  100$ percentile of $\mathcal{L}^{(b)}(c)$.  Then
\begin{equation*}
P_M\left(
c^{\T}\bHat_1 - \hat{u}/\rtn \le c^{\T}\beta_1^* \le c^{\T}\bHat_1 - \hat{l}/\rtn
\right) \ge 1- \alpha + o_P(1).
\end{equation*}
Furthermore, if $P(H_{2,1}^{\T}\beta_{2,1}^* = 0) = 0$, then the above
inequality can be strengthened to equality.
\end{cor}\noindent
The preceding results show that the ACI can be use to construct valid
confidence intervals regardless of the underlying parameters or
generative model.  Moreover, in settings where there is a treatment
effect for almost every patient, the ACI delivers asymptotically exact
coverage.  See Section 5 for discussion of the choice of the tuning
parameter $\lambda_n$.

\section{Experiments}
In this section we examine the small sample performance of the
adaptive confidence interval (ACI) proposed in the Section 4.2 where
performance is measured in terms of coverage and average interval
width.  We consider both fixed and data-driven choices for the tuning
parameter $\lambda_{n}$.  For a fixed value we choose $\lambda_{n} =
\sqrt{\log\log\,n}$; additional simulations taken over a range of
$\lambda_n$ values are provided in the Appendix.  These
simulations show that the method is potentially sensitive to the
choice of $\lambda_n$.  Consequently, we also consider a data-driven
choice of $\lambda_n$, tuned using the double-bootstrap
\citep{davison1997}.  In particular, we consider a range of values of
$\lambda_{n}$ of the form $\lambda_{n} = \tau\sqrt{\log\log\,n}$ where
$\tau\in [m, M]$ where $0 < m < M < \infty$ are fixed constants. See
the Appendix for the specifics of the double bootstrap algorithm.  Note that the
theoretical properties of the ACI continue to hold with this adaptive
scheme for choosing $\lambda_n$ since $m\sqrt{\log\log\,n} \le
\tau\sqrt{\log\log\,n} \le M\sqrt{\log\log\,n}$ so that $\lambda_n$
satisfies (A4).

We compare the empirical performance of the ACI with $\lambda_n$ fixed
to equal $\sqrt{\log\log\,n}$ (FACI) and $\lambda_n$ chosen using the
double-bootstrap (DACI) with the following methods: the centered
percentile bootstrap (CPB); the centered percentile bootstrap of the
soft-thresholding (ST) method of \cite{chakraborty2009inference} as
described in Section 3; and the adaptive $m$-out-of-$n$ (MOFN)
bootstrap with data-driven tuning of \cite{mofn}. We also implemented
and tested the projection interval described in Section 4.1 with $\eta
= 0.01, \alpha = 0.04$; results are not shown in the tables as they were
too wide to be useful. The projection interval always covered at least
at the nominal level (and frequently much more -- in 6 of 18
experiments it covered 100\% of the time) but it was between 1.46 and
2.07 times wider than the DACI, which also achieves or exceeds nominal
coverage.  The hard-thresholding method of \cite{moodieT} and the
penalized approach of \cite{song} are similar in both theory and
performance to the soft-thresholding approach and thus are omitted
from our experiments.

Nine generative models
are used in these evaluations; each of these
generative models has two stages of treatment and two treatments
at each stage.   Generically, each of the models can be described as follows:
\begin{itemize}
\item $X_i \in \{-1,1\}$, $A_i \in \{-1,1\}$ for $i \in \{1,2\}$
\item $P(A_1= 1) = P(A_1 = -1) = 0.5$, $P(A_2= 1) = P(A_2 = -1) = 0.5$
\item $X_1 \sim \mathrm{Bernoulli}(0.5)$, $X_2|X_1,A_1 \sim \mathrm{Bernoulli}(\mathrm{expit}({\delta_1 X_1 + \delta_2 A_1}))$
\item $Y = \gamma_1 + \gamma_2 X_1 + \gamma_3 A_1 + \gamma_4 X_1 A_1 + \gamma_5 A_2 + \gamma_6 X_2 A_2 + \gamma_7 A_1 A_2 + \epsilon$,
$\epsilon \sim N(0,1)$
\end{itemize}
where $\mathrm{expit}(x) = \mathrm{e}^{x} / (1 + \mathrm{e}^x)$. This
class is parameterized by nine values
$\gamma_1,\gamma_2,...,\gamma_7,\delta_1,\delta_2$. The analysis model
uses  feature vectors  defined by:
\begin{displaymath}
  \begin{array}{ll}
  H_{2,0}  = (1, X_1, A_1,X_1 A_1, X_2)^\T, &
  H_{2,1}  = (1, X_2, A_1)^\T,\\
  H_{1,0}  = (1, X_1)^\T, &
  H_{1,1}  = (1, X_1)^\T.
  \end{array}
\end{displaymath}
Our analysis models are given by $Q_2(H_2, A_2; \beta_2) \triangleq
H_{2,0}^{\T}\beta_{2,0} + H_{2,1}^{\T}\beta_{2,1}A_2$ and $Q_1(H_1,
A_1; \beta_1) \triangleq H_{1,0}^{\T}\beta_{1,0} +
H_{1,1}^{\T}\beta_{1,1}A_1$.  Below the
analysis models are correctly specified (match the generative models).  This avoids conflating poor
performance of confidence intervals due to misspecification with poor
performance due to nonregularity.  We use a contrast encoding for
$A_1$ and $A_2$ to allow for a comparison with Chakraborty et al.\
(2009).

The form of this class of generative models is useful as it allows us
to influence the degree of nonregularity present in our example
problems through the choice of the $\gamma_i$ and $\delta_i$, and in
turn evaluate performance in these different scenarios. Recall that in
Q-learning, nonregularity occurs when more than one stage-two
treatment produces nearly the same optimal expected reward for a set
of patient histories that occur with positive probability. In the
model class above, this occurs if the model generates histories for
which $\gamma_5 A_2 + \gamma_6 X_2 A_2 + \gamma_7 A_1 A_2 \approx 0$,
i.e., if it generates histories for which $Q_2$ depends weakly or not
at all on $A_2$. By manipulating the values of $\gamma_i$ and
$\delta_i$, we can control i) the probability of generating a patient
history such that $\gamma_5 A_2 + \gamma_6 X_2 A_2 + \gamma_7 A_1 A_2
= 0$, and ii) a standardized effect size $E[ (\gamma_5 + \gamma_6
X_2 + \gamma_7 A_1) / \sqrt{\mathrm{Var}(\gamma_5 + \gamma_6 X_2 +
  \gamma_7 A_1)}]$. Each of these quantities, denoted by $p$ and
$\phi$, respectively, can be thought of as measures of
nonregularity.

Table \ref{tab:modelparams_2act_2stage} provides the parameter
settings; the first six settings were considered by Chakraborty et
al.\ (2009), and are described by them as ``nonregular'',
``near-nonregular'', and ``regular''.  To these six, we have added
three additional examples labeled A, B, and C. Example A is an example
of a strongly regular setting. Example B is an example of a nonregular
setting where the nonregularity is strongly dependent on the stage 1
treatment. In example B, for histories with $A_1 = 1$, there is a
moderate effect of $A_2$ at the second stage. However, for histories
with $A_1 = -1$, there is no effect of $A_2$ at the second stage,
i.e., both treatments at the second stage are equally optimal. In
example C, for histories with $A_1 = 1$, there is a moderate effect of
$A_2$, and for histories with $A_1 = -1$, there is a small effect of
$A_2$. Thus example C is a `near-nonregular' setting that behaves
similarly to example B. In addition to these new examples, we give
extensions of all nine examples to a setting with three treatments at
the second stage; details are given in Appendix~\ref{ap:threetxt}.

\begin{table}[h!]
\begin{tabular}{c|c|c|c|c|c}
Example & $\gamma$ & $\delta$ & Type & \multicolumn{2}{c}{Regularity Measures} \\
\hline
1 & $(0, 0,  0,   0, 0,    0,   0)^\T$ & $(0.5, 0.5)^\T$ &
nonregular & $p = 1$ & $\phi = 0/0$ \\
2 & $(0, 0,   0,   0, 0.01, 0,   0)^\T$ & $(0.5, 0.5)^\T$  &
near-nonregular & $p = 0$ & $\phi = \infty$ \\
3 & $(0, 0, -0.5, 0, 0.5,  0,   0.5)^\T$ & $(0.5, 0.5)^\T$ &
nonregular & $p = 1/2$ & $\phi = 1.0$ \\
4 & $(0, 0, -0.5, 0, 0.5,  0,   0.49)^\T$ & $(0.5, 0.5)^\T$ &
near-nonregular & $p = 0$ & $\phi = 1.02$ \\
5 & $(0, 0, -0.5, 0, 1.0, 0.5, 0.5)^\T$ & $(1.0, 0.0)^\T$ &
nonregular & $p = 1/4$ & $\phi = 1.41$ \\
6 & $(0, 0, -0.5, 0, 0.25, 0.5, 0.5)^\T$ & $(0.1, 0.1)^\T$ &
regular & $p = 0$ & $\phi = 0.35$ \\
\hline
A & $(0, 0, -0.25, 0, 0.75, 0.5, 0.5) ^\T$ & $(0.1, 0.1)^\T$ & regular &
$p=0$ & $\phi = 1.035$ \\
B & $(0, 0, 0, 0, 0.25, 0, 0.25)^\T$ & $(0, 0)^\T$ & nonregular &
$p=1/2$ & $\phi = 1.00$ \\
C & $(0, 0, 0, 0, 0.25, 0, 0.24)^\T$ & $(0, 0)^\T$ & near-nonregular &
$p=0$ & $\phi = 1.03$ \\
\end{tabular}
\caption{Parameters indexing
the example models.}
\label{tab:modelparams_2act_2stage}
\end{table}

We first provide confidence intervals for
the coefficient of $A_1$ (the treatment variable), $\beta_{1,1,1}^*$
in settings in which there are two or three treatments at stage
2. (The three-treatment version of the ACI is given by \cite{laber2010statistical}.) Note that given the working models and generative models defined by the
parameter settings in Table \ref{tb:modelparams_2act_2stage}, we can
determine the exact value of any parameter $c^{\T}\beta_1^*$ of
interest to set the ground truth for our experiments.
Table \ref{tb:compare_2act_2stage} shows the estimated
coverage for the coefficient of $A_1$, $\beta_{1,1,1}^*$.  This
simulation uses a sample size of 150, a total of 1000 Monte Carlo
replications and 1000 bootstrap samples.  Target coverage is
$0.95$. The CPB fares poorly in terms of coverage, falling
significantly below nominal coverage on seven of nine examples.  The
ST method fails to cover for examples A, B and C.
Recall that the ST method has not been developed
for the setting in which there are more than two treatments at the
second stage.

\begin{table}
\begin{small}
\begin{center}
\begin{tabular}{cllllll|lll}
\parbox{5em}{\centering Two txts\\at stage 2} &
\parbox{2.9em}{\centering{Ex. 1\\NR}} &
\parbox{2.9em}{\centering{Ex. 2\\NNR}} &
\parbox{2.9em}{\centering{Ex. 3\\NR}} &
\parbox{2.9em}{\centering{Ex. 4\\NNR}} &
\parbox{2.9em}{\centering{Ex. 5\\NR}} &
\parbox{2.9em}{\centering{Ex. 6\\R}} &
\parbox{2.9em}{\centering{Ex. A\\R}} &
\parbox{2.9em}{\centering{Ex. B\\NR}}&
\parbox{2.9em}{\centering{Ex. C\\NNR}}
\vspace{0.15em} \\
\hline
          CPB&  0.934* &  0.935* &  0.930* &  0.933* &  0.938  &  0.928* &  0.939  &  0.925* &  0.928* \\
         FACI&  0.989  &  0.987  &  0.967  &  0.969  &  0.954  &  0.952  &  0.950  &  0.962  &  0.962  \\
         DACI&  0.968  &  0.971  &  0.958  &  0.961  &  0.949  &  0.943  &  0.949  &  0.953  &  0.953  \\
         MOFN&  0.965  &  0.966  &  0.957  &  0.958  &  0.952  &  0.945  &  0.949  &  0.954  &  0.959  \\
           ST&  0.948  &  0.945  &  0.938  &  0.942  &  0.952  &  0.943  &  0.919* &  0.759* &  0.762*
\end{tabular}\\
\vskip1em

All three of the FACI, DACI, and MOFN methods deliver nominal coverage
on all of the examples.  The FACI in particular is conservative on
examples one and two.  The average interval diameters are shown in
Table \ref{tb:compare_2act_2stage_loglogn_widthsW}; this is to be
expected given that it is based on upper and lower bounds. However, we
note that the DACI, whose $\lambda_n$ is tuned using the double
bootstrap, has a much smaller width than the FACI, particularly in the
three-treatment examples. It is the narrowest among the methods that
cover in all examples.

\begin{tabular}{cllllll|lll}
\parbox{5em}{\centering Three txts\\at stage 2} &
\parbox{2.9em}{\centering{Ex. 1\\NR}} &
\parbox{2.9em}{\centering{Ex. 2\\NNR}} &
\parbox{2.9em}{\centering{Ex. 3\\NR}} &
\parbox{2.9em}{\centering{Ex. 4\\NNR}} &
\parbox{2.9em}{\centering{Ex. 5\\NR}} &
\parbox{2.9em}{\centering{Ex. 6\\R}} &
\parbox{2.9em}{\centering{Ex. A\\R}} &
\parbox{2.9em}{\centering{Ex. B\\NR}}&
\parbox{2.9em}{\centering{Ex. C\\NNR}}
\vspace{0.15em} \\
\hline
         CPB &  0.933* &  0.938  &  0.915* &  0.921* &  0.931* &  0.907* &  0.940  &  0.885* &  0.895* \\
         FACI&  0.999  &  0.999  &  0.967  &  0.968  &  0.963  &  0.969  &  0.958  &  0.969  &  0.969* \\
         DACI&  0.987  &  0.987  &  0.952  &  0.955  &  0.957  &  0.945  &  0.953  &  0.940  &  0.945  \\
\end{tabular}
\end{center}
\caption{\label{tb:compare_2act_2stage}
     Monte
    Carlo estimates of coverage probabilities of confidence intervals
    for the main effect of treatment, $\beta_{1,1,1}^*$ at
    the $95\%$ nominal level. Estimates are constructed using 1000
    datasets of size 150 drawn from each model, and 1000 bootstraps
    drawn from each dataset. Estimates significantly below
    $0.95$ at the $0.05$ level are marked with $*$. There is no ST or MOFN method when there are three treatments at Stage 2. Examples are
    designated NR = nonregular, NNR = near-nonregular, R = regular.}
\end{small}

\begin{small}
\begin{center}
\begin{tabular}{cllllll|lll}
\parbox{5em}{\centering Two txts\\at stage 2} &
\parbox{2.9em}{\centering{Ex. 1\\NR}} &
\parbox{2.9em}{\centering{Ex. 2\\NNR}} &
\parbox{2.9em}{\centering{Ex. 3\\NR}} &
\parbox{2.9em}{\centering{Ex. 4\\NNR}} &
\parbox{2.9em}{\centering{Ex. 5\\NR}} &
\parbox{2.9em}{\centering{Ex. 6\\R}} &
\parbox{2.9em}{\centering{Ex. A\\R}} &
\parbox{2.9em}{\centering{Ex. B\\NR}}&
\parbox{2.9em}{\centering{Ex. C\\NNR}}
\vspace{0.15em} \\
\hline
         CPB &  0.385* &  0.385* &  0.430* &  0.430* &  0.457  &  0.436* &  0.451  &  0.428* &  0.428* \\
         FACI&  0.490  &  0.490  &  0.481  &  0.481  &  0.483  &  0.471  &  0.474  &  0.484  &  0.484  \\
         DACI&  0.442  &  0.441  &  0.470  &  0.470  &  0.482  &  0.469  &  0.474  &  0.473  &  0.473  \\
         MOFN&  0.443  &  0.443  &  0.474  &  0.474  &  0.489  &  0.486  &  0.482  &  0.488  &  0.488  \\
           ST&  0.339  &  0.339  &  0.426  &  0.427  &  0.469  &  0.436  &  0.480* &  0.426* &  0.424*
\end{tabular}\\
\vskip1em
\begin{tabular}{cllllll|lll}
\parbox{5em}{\centering Three txts\\at stage 2} &
\parbox{2.9em}{\centering{Ex. 1\\NR}} &
\parbox{2.9em}{\centering{Ex. 2\\NNR}} &
\parbox{2.9em}{\centering{Ex. 3\\NR}} &
\parbox{2.9em}{\centering{Ex. 4\\NNR}} &
\parbox{2.9em}{\centering{Ex. 5\\NR}} &
\parbox{2.9em}{\centering{Ex. 6\\R}} &
\parbox{2.9em}{\centering{Ex. A\\R}} &
\parbox{2.9em}{\centering{Ex. B\\NR}}&
\parbox{2.9em}{\centering{Ex. C\\NNR}}
\vspace{0.15em} \\
\hline
         CPB &  0.446* &  0.446  &  0.518* &  0.518* &  0.567* &  0.518* &  0.557  &  0.508* &  0.507* \\
         FACI&  0.700  &  0.700  &  0.652  &  0.652  &  0.637  &  0.632  &  0.617  &  0.661  &  0.662  \\
         DACI&  0.564  &  0.564  &  0.590  &  0.590  &  0.617  &  0.591  &  0.604  &  0.596  &  0.597  \\
\end{tabular}
\end{center}
\caption{\label{tb:compare_2act_2stage_loglogn_widthsW}Monte
    Carlo estimates of the mean width  of confidence intervals for the
    main effect of treatment $\beta_{1,1,1}^*$ at the $95\%$ nominal level. Estimates are
    constructed using 1000 datasets of size 150 drawn from each
    model, and 1000 bootstraps drawn from each dataset. Models have
    two treatments at each of two stages. Widths with corresponding
    coverage significantly below nominal are marked with $*$.
There is no ST or MOFN method when there are three treatments at Stage 2.   Examples are designated NR = nonregular, NNR = near-nonregular, R = regular.}
 \end{small}
\end{table}

\begin{table}
\begin{small}
\begin{center}
\begin{tabular}{cllllll|lll}
\parbox{5em}{\centering Two txts\\at stage 2} &
\parbox{2.9em}{\centering{Ex. 1\\NR}} &
\parbox{2.9em}{\centering{Ex. 2\\NNR}} &
\parbox{2.9em}{\centering{Ex. 3\\NR}} &
\parbox{2.9em}{\centering{Ex. 4\\NNR}} &
\parbox{2.9em}{\centering{Ex. 5\\NR}} &
\parbox{2.9em}{\centering{Ex. 6\\R}} &
\parbox{2.9em}{\centering{Ex. A\\R}} &
\parbox{2.9em}{\centering{Ex. B\\NR}}&
\parbox{2.9em}{\centering{Ex. C\\NNR}}
\vspace{0.15em} \\
\hline
         CPB &  0.892* &  0.908* &  0.924* &  0.925* &  0.940  &  0.930* &  0.936  &  0.925* &  0.931* \\
         FACI&  0.952  &  0.962  &  0.952  &  0.954  &  0.950  &  0.953  &  0.947  &  0.952  &  0.954  \\
         DACI&  0.940  &  0.946  &  0.946  &  0.948  &  0.947  &  0.945  &  0.951  &  0.952  &  0.947  \\
         MOFN&  0.944  &  0.947  &  0.948  &  0.948  &  0.952  &  0.942  &  0.951  &  0.950  &  0.950  \\
           ST&  0.935* &  0.930* &  0.889* &  0.878* &  0.891* &  0.620* &  0.687* &  0.686* &  0.663*
\end{tabular}\\
\end{center}
\caption{\label{tb:interceptC}
     Monte
    Carlo estimates of coverage probabilities of confidence intervals for the coefficient of the intercept, $\beta_{1,0,1}^*$ at
    the $95\%$ nominal level. Estimates are constructed using 1000
    datasets of size 150 drawn from each model, and 1000 bootstraps
    drawn from each dataset. Estimates significantly below
    $0.95$ at the $0.05$ level are marked with $*$. Examples are
    designated NR = nonregular, NNR = near-nonregular, R = regular.}
\end{small}
\vskip1em
\begin{small}
\begin{center}
\begin{tabular}{cllllll|lll}
\parbox{5em}{\centering Two txts\\at stage 2} &
\parbox{2.9em}{\centering{Ex. 1\\NR}} &
\parbox{2.9em}{\centering{Ex. 2\\NNR}} &
\parbox{2.9em}{\centering{Ex. 3\\NR}} &
\parbox{2.9em}{\centering{Ex. 4\\NNR}} &
\parbox{2.9em}{\centering{Ex. 5\\NR}} &
\parbox{2.9em}{\centering{Ex. 6\\R}} &
\parbox{2.9em}{\centering{Ex. A\\R}} &
\parbox{2.9em}{\centering{Ex. B\\NR}}&
\parbox{2.9em}{\centering{Ex. C\\NNR}}
\vspace{0.15em} \\
\hline
         CPB &  0.404* &  0.404* &  0.430* &  0.429* &  0.457  &  0.449* &  0.450  &  0.428* &  0.428* \\
         FACI&  0.506  &  0.506  &  0.481  &  0.481  &  0.483  &  0.490  &  0.474  &  0.490  &  0.490  \\
         DACI&  0.459  &  0.459  &  0.466  &  0.466  &  0.481  &  0.482  &  0.473  &  0.473  &  0.473  \\
         MOFN&  0.475  &  0.476  &  0.469  &  0.470  &  0.488  &  0.486  &  0.477  &  0.483  &  0.483  \\
           ST&  0.344* &  0.344* &  0.427* &  0.427* &  0.466* &  0.469* &  0.474* &  0.430* &  0.428*
\end{tabular}\\
\end{center}
\caption{\label{tb:interceptW}Monte
    Carlo estimates of the mean width  of confidence intervals for the coefficient of the intercept, $\beta_{1,0,1}^*$ at the $95\%$ nominal level. Estimates are
    constructed using 1000 datasets of size 150 drawn from each
    model, and 1000 bootstraps drawn from each dataset. Models have
    two treatments at each of two stages. Widths with corresponding
    coverage significantly below nominal are marked with $*$.
   Examples are designated NR = nonregular, NNR = near-nonregular, R = regular.}
 \end{small}
\end{table}

The coefficient of $A_1$ is perhaps most relevant from a clinical
perspective.  However, from a methodological point of view, other
contrasts can be illuminating.  Table \ref{tb:interceptC} shows the
estimated coverage for the intercept using the same generative models.
The coverage of the CPB and ST methods is quite poor; the CPB attains
nominal coverage on only two of the nine examples, and the ST never
achieves nominal coverage.  Particularly disturbing is that the
ST method falls more than 30\% below nominal levels.  In contrast, the
FACI and DACI deliver nominal coverage on all examples.  Table
\ref{tb:interceptW} shows the average interval widths; the DACI is the
narrowest among the covering methods.

\section{Analysis of the ADHD study}

In this section we illustrate the use of the ACI on data from the
Adaptive Pharmacological and Behavioral Treatments for Children with
ADHD Trial (Nahum-Shani et al. 2012a; Lei et al. 2012).  The ADHD data
we use here consists of $n=138$ trajectories which are a subset of the
original $N = 155$ observations.  This subset was formed by removing
the $N - n = 17$ subjects who were either never randomized to an
initial treatment (14 subjects), or had massive item missingness (3
subjects).  A description of each of the variables is provided in
Table \ref{adhdTable}.
\begin{table}[here!]
\begin{center}
\begin{tabular}{|ccp{12.5cm}|}  \hline
  $X_{1,1} \in [0,3]$  &:& Baseline symptoms. Teacher-reported mean ADHD
symptom  score.  Measured at the end of the school year preceding
the study. \\
$X_{1,2}\in \lbrace 0, 1 \rbrace$ & : & ODD diagnosis. Indicator of a diagnosis of
ODD (oppositional defiant disorder) at baseline, coded so that $0$
corresponds to no such diagnosis.   \\
$X_{1,3}\in \lbrace 0, 1 \rbrace$ & : & Prior med.\ exposure. Indicator that subject received ADHD medication in the prior year, coded so that $ 0$
corresponds to no ADHD medication.   \\
$A_1 \in \lbrace -1, 1\rbrace$ & : & 1st stage treatment. Coded so
that $-1$ corresponds to medication while $1$ corresponds
to behavioral modification therapy. \\



$1_{\mathrm{NonRsp}}$ & : & Indicator of non-response, i.e.\ that a
patient was re-randomized to a second-stage treatment during the
study. Non-response was determined on the basis of two measures the Impairment Rating Scale (IRS)
(Fabiano et al. 2006) and an individualized list of target behaviors
(ITB) (e.g., Pelham et al.\ 1992).  The criterion for nonresponse at
each month was an average performance of less than 75
on the ITB and  a rating of impairment in at least one domain
on the IRS. These were measured beginning in week 8 of the study, and
montly thereafter.\\
$Y_1 \triangleq  Y\cdot(1 - 1_{\mathrm{NonRsp}})$ & : & First stage outcome of
responders, i.e.\ those who were {\em not} re-randomized (see definition of
$Y$ and $\tilde Y$ below).  \\
$X_{2,1}\in\lbrace 0, 1\rbrace$ & : & Adherence. Indicator of subject's adherence to
their initial treatment.  Adherence is coded so that a value of
$ 0$ corresponds to low adherence (taking less than 100\% of
prescribed medication or attending less than 75\% of therapy sessions)
while a value of
$1$ corresponds to high adherence.  \\
$X_{2,2}\in\lbrace 2, 8\rbrace$ & : & Month of non-response. Month during school year of
observed non-response and re-randomization (not used for responders) Two subjects did not follow protocol
 and were re-randomized during month 8.\\


$A_2 \in \lbrace -1, 1\rbrace$ & : & 2nd stage treatment. Coded so
that $A_2 = -1$ corresponds to augmenting the initial
treatment with the treatment \textit{not} received initially, and $A_2 = 1$ corresponds to enhancing (increasing the dosage of) the
initial treatment. \\
$Y \in \lbrace 1, 2, \ldots, 5\rbrace$ & : & Teacher-reported
Teacher Impairment Rating Scale (TIRS5) item score 8 months (32 weeks) after
initial randomization to treatment (Fabiano et al.\ 2006).  The TIRS5 is coded so that
higher values correspond to better clinical outcomes. \\
$Y_2 \triangleq  Y\cdot 1_{NonRsp}$ & : & Second stage outcome.
Only used for non-responders, i.e.\ subjects who {\em were} re-randomized. \\ \hline
\end{tabular}
\end{center}
\caption{Features, treatments and the outcome for the ADHD study.}
\label{adhdTable}
\end{table}
Notice that the outcomes $Y_1$ and $Y_2$ satisfy $Y_1 + Y_2 \equiv Y$,
where $Y$ is the teacher reported TIRS5 score after 32 weeks, i.e.\ at the end of the last
month of the study (month 8).

The first step in using $Q$-learning is to estimate
a regression model for the second stage; this analysis only uses data
from subjects that were re-randomized during the 8 month study.  Of
the $n = 138$ subjects, $81$ were re-randomized prior to the end of the study.  The feature vectors at the second stage are $H_{2,0}
\triangleq (1, X_{1,1}, X_{1,2}, X_{1,2}, X_{1,3}, X_{2,1}, A_1)^{\T}$
and $H_{2,1} \triangleq (1, X_{2,1}, A_1)^{\T}$.  Thus, the
$Q$-function $Q_2(H_2, A_2; \beta_2) \triangleq
H_{2,0}^{\T}\beta_{2,0} + H_{2,1}^{\T}\beta_{2,1}A_2$ contains an
interaction term between the second stage action $A_2$ and a subject's
initial treatment $A_1$, an interaction between $A_2$ and adherence to
their initial medication $X_{2,1}$, a main effect for $A_2$, and main
effects for all the other terms.  Table
\ref{secondStageFitCoefficients} provides the second stage least
squares coefficients along with centered percentile bootstrap interval
estimates.  Examination of the residuals (not shown here) showed no
obvious signs of model misspecification.  In short, the linear model
described above seems to fit the data reasonably well.

\begin{table}[here!]
\begin{center}
\begin{tabular}{llcccc}
Term  &   & Coeff. & Estimate & Lower  (5\%) & Upper  (95\%) \\ \hline
1 & Intercept      & $\beta_{2,0,1}$ & 1.36 & 0.48 & 2.26 \\
$X_{1,1}$  & Baseline symptoms & $\beta_{2,0,2}$ & 0.94 & 0.48 & 1.39 \\
$X_{1,2}$ & ODD diagnosis     & $\beta_{2,0,3}$ & 0.92 & 0.46 & 1.41 \\
$X_{1,3}$ & Prior med. exposure  & $\beta_{2,0,4}$ &-0.27 &-0.77 & 0.21 \\
$X_{2,1}$ & Adherence         & $\beta_{2,0,5}$ & 0.17 &-0.28 & 0.66 \\
$X_{2,2}$ & Month of non-response & $\beta_{2,0,6}$ & 0.02 &-0.20 & 0.20 \\
$A_1$ & 1st stage  txt  & $\beta_{2,0,7}$ & 0.03 &-0.18 & 0.23 \\
$A_2$ & 2nd stage txt  & $\beta_{2,1,1}$ &-0.72 &-1.13 & -0.35 \\
$A_2:X_{2,1}$ & 2nd stage txt~: Adherence  & $\beta_{2,1,2}$ & 0.97 & 0.48 & 1.52 \\
$A_2:A_1$ & 2nd stage txt~: 1st stage txt  & $\beta_{2,1,3}$ & 0.05 &-0.17& 0.27 \\
\end{tabular}
\end{center}
\caption{Least squares coefficients and 90\% CPB interval estimates for
second stage regression. }
\label{secondStageFitCoefficients}
\end{table}
Recall that the dependent variable in the first stage regression model
is the predicted future outcome $\tilde{Y}_1 \triangleq Y_1 +
\max_{a_2\in\lbrace -1, 1\rbrace} Q_2 (H_2, a_2; \bHat_2)$.  Since the
predictors used in the first stage must predate the assignment of
first treatment, the available predictors in Table \ref{adhdTable} are
baseline ADHD symptoms $X_{1,1}$, diagnosis of ODD at baseline
$X_{1,2}$, indicator of a subject's prior exposure to ADHD medication
$X_{1,3}$, and first stage treatment $A_1$.  The feature vectors for
the second stage are $H_{1,0} \triangleq (1, X_{1,1}, X_{1,2},
X_{1,3})$ and $ H_{1,1} \triangleq (1, X_{1,3})$, so that the first
stage $Q$-function $Q_1(H_1, A_1;\beta_1) \triangleq
H_{1,0}^{\T}\beta_{1,0} + H_{1,1}^{\T}\beta_{1,1}A_1$ contains an
interaction term between the first stage action $A_1$ and a subject's
prior exposure to ADHD medication $X_{1,3}$, a main effect for $A_1$,
and main effects for all other covariates.  The first stage regression
coefficients are estimated using least squares $\bHat_1 \triangleq
\arg\min_{\beta_1} \pn (\tilde{Y}_1 - Q_1(H_1, A_1;\beta_1))^2$.
Table \ref{firstStageFitCoefficients} provides the least squares
coefficients along with interval estimates formed using the DACI.
Plots of the residuals for this model (not shown here) show no obvious
signs of model misspecification.  Again a linear model seems to
provide a reasonable approximation to the $Q$-function in the first
stage.

\begin{table}[here!]
\begin{center}
\begin{tabular}{llcccc}
Term &    & Coeff.   & Estimate & Lower  (5\%) & Upper  (95\%) \\ \hline
$1$ & Intercept           & $\beta_{1,0,1}$ & 2.61 & 2.13 & 3.05 \\
$X_{1,1}$ & Baseline symptoms   & $\beta_{1,0,2}$ & 0.72 & 0.47 & 1.00 \\
$X_{1,2}$ & ODD diagnosis       & $\beta_{1,0,3}$ & 0.75 & 0.37 & 1.08 \\
$X_{1,3}$ & Prior med. exposure & $\beta_{1,0,4}$ &-0.37 &-0.80 & 0.01 \\
$A_1$ & Initial txt         & $\beta_{1,1,1}$ & 0.17 &-0.02 & 0.36 \\
$A_1 : X_{1,3}$ & Initial txt~: Prior med. exposure
                    & $\beta_{1,1,2}$ &-0.32 &-0.59 &-0.07 \\
\end{tabular}
\end{center}
\caption{Least squares coefficients and 90\% DACI interval estimates for
first stage regression. }
\label{firstStageFitCoefficients}
\end{table}

To
 construct an estimate of the optimal DTR, recall that for
any $H_t = h_t,~t=1,2$ the estimated optimal DTR $\hat{\pi} = (\hat{\pi}_1,
\hat{\pi}_2)$ satisfies $\hat{\pi}_t(h_t) \in \arg\max_{a_t}Q(h_t,a_t;\bHat_t)$.
The coefficients in Table \ref{secondStageFitCoefficients} and the
form of the second stage $Q$-function reveal that the second stage
decision rule $\hat{\pi}_2$ is quite simple.  In particular,
 $\hat{\pi}_2$ prescribes
treatment enhancement to subjects with high adherence to their initial
medication and it prescribes treatment augmentation to subjects with
low adherence to their initial medication.   The first stage decision
rule $\hat{\pi}_1$ is equally simplistic.  The coefficients in Table
\ref{firstStageFitCoefficients} show that the first stage decision
rule,
$\hat{\pi}_1$
prescribes medication to subjects who have had prior exposure to
medication, and behavioral modification to subjects who have not had
any such prior exposure.

The prescriptions given by the estimated optimal DTR $\hat{\pi}$ are
excessively decisive.  That is, they recommend one and only one
treatment regardless of the amount of evidence in the data to support
that the recommended treatment is in fact optimal.  When there is
insufficient evidence to recommend a single treatment as best for a
given patient history, it is preferred to leave the choice of
treatment to the clinician.  This allows the clinician to recommend
treatment based on cost, local availability, patient individual
preference, and clinical experience.  One way to assess if there is
sufficient evidence to recommend a unique optimal treatment for a
patient is to construct a confidence interval for the predicted
difference in mean response across treatments.  In the case of binary
treatments, for a fixed patient history $H_t = h_t$, one would construct
a confidence interval for the difference $Q_t(h_t, 1;\beta_t^*) -
Q_t(h_1, -1; \beta_t^*) = c^{\T}\beta_{t}^*$ where
$c = (\mathbf{0}^{\T}, 2h_{t,1}^{\T})^{\T}$.  If this confidence interval
contains zero then one would conclude that there is insufficient evidence
at the nominal level for a unique best treatment.

In this example,  the subject features that interact with
treatment are categorical. Consequently, we can construct confidence
intervals for the predicted difference in mean response across
treatments for every possible subject history.  These confidence
intervals are given in table (\ref{evidenceAssessmentInd}).
The
$90\%$ confidence intervals suggest that there is insufficient evidence at
the first stage to recommend a unique best treatment for each subject
history.  Rather, we would prefer not to make a strong recommendation
at stage one, and leave treatment choice solely at the discretion of
the clinician.  Conversely, in the second stage, the $90\%$ confidence
intervals suggest that there is evidence to recommend a
unique best treatment when a subject had low adherence---knowledge
that is important for evidence-based clinical decision making.
\begin{table}[here!]
\begin{center}
\begin{tabular}{cp{2.75cm}p{1.5cm}ccc}
Stage & History & Contrast for $\beta_{t,1}$ & Lower (5\%) & Upper (95\%) & Conclusion \\
\hline
1     & Had prior med. & (2 2) & -0.88 &  0.28 & Insufficient evidence\\
1     & No prior med.  & (2 0) & -0.04 &  0.72 & Insufficient
evidence\\
\hline
2     & High adherence and BMOD & (2 2  2) & -0.17 & 1.39 & Insufficient evidence \\
2     & Low adherence and BMOD  & (2 0  2) & -2.21 &-0.57 & Sufficient evidence \\
2     & High adherence and MEDS & (2 2 -2) & -0.37 & 1.26 & Insufficient evidence \\
2     & Low adherence and MEDS  & (2 0 -2) & -2.51 &-0.60 & Sufficient evidence \\
\end{tabular}
\end{center}
\caption{Confidence intervals for the predicted difference in mean response
  across treatments for each possible patient history. Intervals are
  at the 90\% leve.
  Confidence intervals that  contain zero indicate insufficient evidence for recommending a unique best
  treatment for patients with the given history. \label{evidenceAssessmentInd}}
\end{table}

\section{Summary, open problems, and the future of DTRs}
Nonregularity often arises in estimators of optimal DTRs.  We
discussed how nonregularity leads to asymptotic bias and complicates
inference.  Asymptotic bias can be reduced by applying shrinkage
methods; however, tuning these methods is an open problem, and
over-shrinkage can be infinitely worse than no shrinkage
at all.  We proposed the ACI, a locally consistent method for
constructing confidence intervals for first stage parameters in
$Q$-learning.  The ACI uses analytic bounds on
$c^{\T}\rtn(\widehat{\beta}_1-\beta_1^*)$.  However, a potentially
less conservative strategy would be to form bounds on the
$(\alpha/2)\times 100$ and $(1-\alpha/2)\times 100$ {\em percentiles}
of the sampling distribution of $c^{\T}\rtn(\widehat{\beta}_1
-\beta_1^*)$.  For example, one could define $\mathcal{B}(c,\gamma) =
c^{\T}\mathbb{S}_n + c^{\T}\hat{\Sigma}_{1}^{-1}\pn
B_{1}\mathbb{U}_n1_{\hat{T}(H_{2,1}) > \lambda_n} +
c^{\T}\hat{\Sigma}_{1}^{-1}\pn B_{1} \left(
  \left[H_{2,1}^{\T}(\mathbb{V}_n + \gamma)\right]_+ -
  \left[H_{2,1}^{\T}\gamma\right]_+ \right)1_{\hat{T}(H_{2,1}) \le
  \lambda_n}$.  Then, for any fixed $\gamma$ and level $\eta$ one
could use the bootstrap to estimate the $\eta\times 100$ percentile of
$\mathcal{B}(c,\gamma)$, say, $\widehat{q}_{\eta}^{(b)}(\gamma)$.  The
final confidence interval would be $\big(c^{\T}\widehat{\beta}_1 -
\sup_{\gamma \in
  \mathbb{R}^{\dim(\beta_{2,1}^*)}}\widehat{q}_{1-\alpha/2}^{(b)}(\gamma),
\allowbreak c^{\T}\widehat{\beta}_1 -
\inf_{\gamma\in\mathbb{R}^{\dim(\beta_{2,1}^*}}
q_{\alpha/2}^{(b)}(\gamma) \big)$.  See \cite[][]{andrews, cheng} and
references therein for bounding probabilities rather than statistics.
It would be interesting to compare this approach with the ACI.

In our development we assumed that the features $H_t$ were known {\em
  a priori.} However, in many practical examples, including the one we
considered here, $H_t$ is a heuristic low-dimensional representation
of hundreds or even thousands of sparsely observed and irregularly
spaced measurements.  By design, information is accumulating over
time, if one uses linear models nested inside the sequence of
treatments received, then the model size will grow exponentially in
the number of treatment stages.  Principled, i.e., data-driven,
methods for feature construction and extraction are needed.  On
approach would be to extend dimensionality-reduction methods from
machine learning (e.g., isomap, ICA, etc.) or functional data analysis
(e.g., functional principle components) to DTRs.

DTRs have the potential to produce better patient outcomes while
simultaneously reducing cost and patient burden.  Furthermore,
estimated optimal DTRs can provide important scientific insight by
revealing interactions between treatments and patient history and
delayed treatment effects.  However, technological advances are
continually improving the efficiency with which data can be collected,
stored, and accessed. DTR methodologies must adapt with these changes.
Here we discuss two emerging areas where current DTR methodology is
insufficient.  Both areas present unique
estimation, inference, and computational challenges.

\textit{Infinite horizon problems.}  In settings where number of
treatment stages is large (e.g., hundreds or thousands) it may be
appropriate to approximate the decision problems as having an infinite
number of time points.  An important area where such decision problems
arise is mobile-health (mHealth) where interventions are delivered
using smartphones or other mobile devices \citep[see, for
example,][]{kelly2012intelligent}.  Mobile devices present
unprecedented opportunity for collecting patient information and
delivering interventions \textit{in situ}, and thereby potentially
narrowing the so-called research-practice
gap~\citep{bickman2012technology}.  However, the breadth of
opportunities presented by mHealth are matched by their technical
challenges.  As the number of decision points grows large it becomes
infeasible to have separate models for the $Q$-function at each
decision point, in this case additional structure, for example, that
the generative model can be characterized as a stationary Markov
Decision Process \citep[MDP,][]{putterman1994markov}, is useful.
Existing methods for estimating an optimal DTR in the MDP setup
\citep{sutton} are highly algorithmic and their statistical properties
are largely unknown.  There are tremendous opportunities for
translating these algorithms into a statistical framework and
characterizing their statistical properties, e.g., convergence rates
and limiting distribution theory.

\textit{Spatial decision processes.}   In some applications, for
example, adaptive wildlife management,  separate treatments must be
administered across a series of spatial locations at each time point.
The treatment assignment at one spatial location may affect the outcomes at
neighboring locations.  Furthermore, the total number of treatments
than can be administered across all the spatial locations is often
limited by budget or other resource constraints. Thus,
it is not feasible to estimate a separate DTR at each spatial location but
rather a single large DTR recommending treatments for all spatial
locations simultaneously is needed.  That is, a DTR in this setting
is a sequence of functions mapping up-to-date information at all
spatial locations to a treatment recommendation at every spatial
location.   $Q$-learning, as described, cannot be applied as the
dimension of the model grows exponentially in the number of spatial
locations.  Suppose, for example, that there are $S$ spatial locations, $K$
treatment options available at each location, and a $p$-dimensional
feature vector at each spatial location; a linear model with a
main effect of feature, a main effect for treatment, and an
interaction between treatments and features would contain
$p\times K^S$ terms.  Furthermore, even if the $Q$-functions were
known exactly, simply computing the argmax over all $K^S$ possibilities
is computationally intractable for moderate values of $S$ and $K$.

\bibliographystyle{plainnat}
\bibliography{ejs}
\nocite{hirano}
\nocite{robins2004optimal}
\nocite{watkins1992q}
\nocite{van1991differentiable}
\nocite{schulte}
\nocite{murphyZThree}
\nocite{robins1986}
\nocite{rubin}

\appendix

\section{Appendix: Outcome Weighted Learning}\label{ap:outcomewt}

Recall that the value of a DTR $\pi$, $\mathbb{E}^{\pi}Y$, is the
expected outcome of $Y$ under the restriction that $A_t = \pi_t(H_t)$.
For expositional simplicity, assume $P(A_t=1|H_t) = 1/2$ and that $Y$
is coded so that $Y\ge 0$ in this section.  Then a change of measure
implies that the value $\mathbb{E}^{\pi}Y = 4P \left(Y1_{A_1 =
    \pi_1(H_1)}1_{A_2 = \pi_2(H_2)}\right)$; the empirical analog is
$4 \pn\left( Y1_{A_1 = \pi_1(H_1)}1_{A_2 = \pi_2(H_2)}\right)$.
Note the resemblance to the classification rate. As in classification,
directly maximizing the empirical value over a class of DTRs is a
discrete optimization problem and is usually computationally
burdensome.  \cite{yingqi2} solve a concave relaxation of this problem
by replacing the nonsmooth indicator functions with concave
surrogates.  Consider decision rules of the form $\pi_t(h_t) =
1_{h_{t,1}^{\T}\psi_{t,1} \ge 0}$ where $h_{t,1}$ is a known feature
of $h_t$.  Note that $1_{ A_t=\pi_t(H_t) } =
1_{(2A_t-1)H_{t,1}^{\T}\psi_{t,1} \ge 0}$.  Let $\phi :\mathbb{R}
\rightarrow \mathbb{R}$ be a concave function that satisfies $\phi(z)
\le k+ 1_{z \ge 0}$ for all $z$ where $k$ is a constant.  A version of
the algorithm is as follows.
\begin{enumerate}
  \item Stage 2 optimization: $\widehat{\psi}_{2,1} =
    \arg\max_{\psi_{2,1}} \pn Y\phi
    \left((2A_2-1)H_{2,1}^{\T}\psi_{2,1}\right)$.
  \item Stage 1 optimization: $\widehat{\psi}_{1,1} =
    \arg\max_{\psi_{1,1}} \pn
    Y1_{(2A_2-1)H_{2,1}^{\T}\widehat{\psi}_{2,1} \ge 0} \phi\left(
      (2A_1 -1)H_{1,1}^{\T}\psi_{1,1}
      \right)$.
\end{enumerate}
The estimator of the optimal DTR is thus
$\widehat{\pi}_{t}(h_t)
=1_{h_{t,1}^{\T}\widehat{\psi}_{t,1}\ge 0}$.   For
illustration we use $\phi(z) =1 -(1-z)^2$.  Define the population
parameters:
\begin{eqnarray*}
  \psi_{2,1}^* &\triangleq & \arg\min_{\psi_{2,1}}
  P\left[Y(1-(2A_2-1)H_{2,1}^{\T}\psi_{2,1})^2\right],  \\
  \psi_{1,1}^* &\triangleq & \arg\min_{\psi_{1,1}}
  P\left[Y1_{(2A_2-1)H_{2,1}^{\T}\psi_{2,1}^* \ge
    0}(1-(2A_1-1)H_{1,1}^{\T}\psi_{1,1})^2\right].
\end{eqnarray*}
In addition, define $\Psi_1 \triangleq
PYH_{1,1}H_{1,1}^{\T}1_{(2A_2-1)H_{2,1}^{\T}\psi_{2,1}^* \ge 0}$ and
the corresponding plugin estimator $\widehat{\Psi}_{1} \triangleq \pn
YH_{1,1}H_{1,1}^{\T}1_{(2A_2-1)H_{2,1}^{\T}\widehat{\psi}_{2,1} \ge
  0}$, which we assume is invertible.  Then $\rtn
(\widehat{\psi}_{1,1} - \psi_{1,1}^*) = \rtn\widehat{\Psi}_{1}^{-1}\pn
Y H_{1,1}(2A_1-1)1_{(2A_1-1)H_{2,1}^{\T}\widehat{\psi}_{2,1} \ge
  0}(1-(2A_1-1)H_{1,1}^{\T}\psi_{1,1}^*)$ which can be decomposed as
\begin{equation*}
  \mathbb{T}_{n} + \rtn\widehat{\Psi}_{1}^{-1} \pn
  YH_{1,1}(2A_1-1)(1-(2A_1-1)H_{1,1}^{\T}\psi_{1,1}^*) \mathbb{L}_{n},
\end{equation*}
where
\begin{eqnarray*}
  \mathbb{T}_{n} &=& \widehat{\Psi}_{1}^{-1}\rtn
  (\pn-P)\left[YH_{1,1}(2A_1-1)(1-(2A_1-1)H_{1,1}^{\T}\psi_{1,1}^*)
  1_{(2A_2-1)H_{2,1}^{\T}\psi_{2,1}^* \ge 0}\right], \\
\mathbb{L}_{n} &=& 1_{(2A_2-1)H_{2,1}^{\T}\widehat{\psi}_{2,1} \ge 0}
- 1_{(2A_2-1)H_{2,1}^{\T}\psi_{2,1}^* \ge 0}.
\end{eqnarray*}
The term $\mathbb{T}_{n}$ is smooth and asymptotically normal under
mild conditions whereas $\mathbb{L}_{n}$ is nonsmooth.  If $h_{2,1}$
satisfies $h_{2,1}^{\T}\psi_{2,1}^* = 0$ then
$\mathbb{L}_{n}\big|_{H_{2,1} = h_{2,1}}$ converges in distribution to
a Bernoulli random variable with probability of success equal to
$1/2$.  On the other hand, if $h_{2,1}^{\T}\psi_{2,1}^* \ne 0$ then
$\mathbb{L}_{n}\big|_{H_{2,1} = h_{2,1}}$ converges in probability to
zero.  Thus, in parallel with the $Q$-learning case, the limiting
distribution of $\rtn(\widehat{\psi}_{1,1} - \psi_{1,1}^*)$ depends
abruptly on both the value of $\psi_{2,1}^*$ and the distribution of
$H_{2,1}$.  Therefore the same theoretical challenges as in $Q$-learning occur in
outcome-weighted learning.

\section{Appendix: Proofs}\label{ap:proofs}
\subsection{Proof of theorems in Section 3}

\begin{lem}
If $\omega \sim \mathrm{Normal}(0, \nu^2)$ then
$\mathbb{E}\left[\omega\right]_+ = \nu/\sqrt{2\pi}$.
\end{lem}
\begin{proof}
Let $\phi$ denote the density of a standard normal random variable.  Then
\begin{equation*}
\mathbb{E}\left[\omega\right]_+  =
\int_{\mathbb{R}}\left[\omega\right]_+\phi(\omega/\nu)/\nu d\omega =
\int_{\mathbb{R}_+}\omega\phi(\omega/\nu)/\nu d\omega =
\nu/\sqrt{2\pi}.
\end{equation*}
\end{proof}
\begin{proof}[Proof of Theorem 3.1]
Using Theorem 4.2, part I, it follows that
$\mathrm{Bias}(\bHat_1, c)$ is equal to
\begin{equation*}
\mathbb{E}\left(
c^{\T}\Sigma_{1,\infty}^{-1}PB_1\left[H_{2,1}^{\T}\mathbb{V}_{\infty}\right]_+1_{H_{2,1}^{\T}
\beta_{2,1}^* = 0}
\right).
\end{equation*}
Exchanging expectations and applying Lemma 7.1 gives the result.
\end{proof}

\begin{lem}
  If $z \sim \mathrm{Normal}(0, 1)$ and $\sigma > 0$ then
\begin{equation*}
  \mathbb{E}\left[z\right]_+\left(1-\sigma/z^2\right)_+ = \left\lbrace
\exp\lbrace -\sigma/2\rbrace - \sigma\int_{\sqrt{\sigma}}^{\infty}\exp{-z^2/2}/zdz
\right\rbrace/\sqrt{2\pi}.
\end{equation*}
\end{lem}
\begin{proof}
Let $\phi$ denote the density of a standard normal random variable,
then
\begin{equation*}
  \mathbb{E}\left[z\right]_+\left(1-\frac{\sigma}{z^2}\right)_+ =
\int_{\sqrt{\sigma}}^{\infty}z\left(1-\frac{\sigma}{z^2}\right)\phi(z)dz
 = \left\lbrace
\exp\lbrace -\sigma/2\rbrace - \sigma\int_{\sqrt{\sigma}}^{\infty}\frac{1}{z}\exp(-z^2/2)dz
\right\rbrace/\sqrt{2\pi}.
\end{equation*}
\end{proof}
\begin{proof}[Proof of Theorem 3.2]
Notice that $\rtn(\bHat_1^{\sigma} - \beta_1^*) =
\widehat{\Sigma}_1^{-1}\rtn\pn B_1(\tilde{Y}^{\sigma} -
B_1^{\T}\beta_1^*)$ which can be decomposed as
\begin{equation*}
\widehat{\Sigma}_{1}^{-1}\rtn(\pn-P)B_1(\tilde{Y}^* -
B_1^{\T}\beta_1^*)
+ \widehat{\Sigma}_{1}^{-1}\rtn\pn B_1(\tilde{Y}^{\sigma} - \tilde{Y}^*),
\end{equation*}
where we have used $PB_1(\tilde{Y}^*-B_1^{\T}\beta_1^*) = 0$.  The
first term in the above display is asymptotically normal with mean
zero and thus does not contribute to the asymptotic bias.  The second
term in the above display is equal to
\begin{multline*}
\widehat{\Sigma}_{1}^{-1}\pn B_1H_{2,0}^{\T}\rtn(\bHat_{2,0} -
\beta_{2,0}^*)
\\ + \widehat{\Sigma}_{1}^{-1}\rtn\pn B_1\left(
\left[H_{2,1}^{\T}\bHat_{2,1}\right]_+\left(
1-\frac{\sigma H_{2,1}^{\T}\widehat{\Sigma}_{21,21}H_{2,1}}{n(\bHat_{2,1}^{\T}H_{2,1})^2}
\right)_+ - \left[H_{2,1}^{\T}\beta_{2,1}^*\right]_+
\right)1_{H_{2,1}^{\T}\beta_{2,1}^*\ne 0} \\
+
\widehat{\Sigma}_{1}^{-1}\pn B_1\left[H_{2,1}^{\T}\rtn(\bHat_{2,1}-\beta_{2,1}^*)\right]_+
\left(
1-\frac{\sigma
  H_{2,1}^{\T}\widehat{\Sigma}_{21,21}H_{2,1}}
{(H_{2,1}^{\T}\rtn(\bHat_{2,1} - \beta_{2,1}^*))^2}
\right)_+1_{H_{2,1}^{\T}\beta_{2,1}^* = 0}.
\end{multline*}
The first two terms can be shown to have asymptotic mean zero and thus
they do not contribute the asymptotic bias.  The last term converges
in distribution to
\begin{equation*}
  \Sigma_{1,\infty}^{-1} P
  \Big[B_1\left[\mathbb{Z}\right]_+\sqrt{H_{2,1}^{\T}\Sigma_{21,21} H_{2,1}}
\left(
1-\frac{\sigma}
{\mathbb{Z}^2}
\right)_+1_{H_{2,1}^{\T}\beta_{2,1}^* = 0}\Big],
\end{equation*}
where $\mathbb{Z}$ is a standard normal random variable.  Exchanging
expectations
and applying Lemma 7.2 gives the result.
\end{proof}
\begin{proof}[Proof of Theorem 3.3]
From Theorem 4.2 part 2 it follows that $\mathrm{Bias}(\bHat_1, c, s)$
is equal to
\begin{equation*}
  \mathbb{E}\left(
    c^{\T}\Sigma_{1,\infty}^{-1}PB_1\left(
      \left[H_{2,1}^{\T}(\mathbb{V}_{\infty}+s)\right]_+ - \left[H_{2,1}^{\T}s\right]_+
      \right)1_{H_{2,1}^{\T}\beta_{2,1}^* = 0}
    \right),
\end{equation*}
taking absolute values and applying the Cauchy-Schwarz and triangle
inequalities gives the first result of the theorem.

It can be shown that $c^{\T}\rtn(\bHat_{1}^{\sigma} - \beta_1^*)$
converges in distribution to
\begin{equation*}
  c^{\T}\Sigma_{1,\infty}^{-1}PB_1\left( \left[H_{2,1}^{\T}(\mathbb{V}_{\infty}+s)\right]_+\left(1
    - \frac{\sigma H_{2,1}^{\T}\Sigma_{21,21}
      H_{2,1}}{(H_{2,1}^{\T}(\mathbb{V}_{\infty}+s))^2}\right)_+
   - \left[H_{2,1}^{\T}s\right]_+\right)1_{H_{2,1}^{\T}\beta_{2,1}^*=0}.
\end{equation*}
Recall that $H_{2,1}$ is assumed to have an intercept. Let $e_1$
denote the first column of an
$\dim(\beta_{2,1}^*)\times\dim(\beta_{2,1}^*)$ identity matrix, and
choose
$s = -\mathbb{V}_{\infty} + e_1\log\,\sigma$ then as
$\sigma\rightarrow \infty$ the above term behaves as
\begin{equation*}
  c^{\T}\Sigma_{1,\infty}^{-1}PB_11_{H_{2,1}^{\T}\beta_{2,1}^*=0}\log(\sigma),
\end{equation*}
which tends to $\infty$ in magnitude.  Thus, the supremum over $s$,
of $|\mathrm{Bias}(\bHat_1^{\sigma}, c, s)|$ must be at least as
large.
\end{proof}

\subsection{Proof of theorems in Section 4}
In the main body we assumed a single terminal reward $Y$,
here, to cover a more general case we assume that an intermediate
reward, $Y_1$ may be observed at the end of the first stage as
well as a terminal reward $Y_2$.
Thus, one seeks to maximize $\mathbb{E}^{\pi}(Y_1+Y_2)$
where $\mathbb{E}^{\pi}$ denotes expectation with respect
to the joint distribution of the trajectory under the
restriction that $A_t = \pi_t(H_t), t=1,2.$
Throughout this section, let $K$ denote a sufficiently large positive constant that may vary from line to line. Let $D_{p}$ denote
the space of $p\times p$ symmetric positive definite matrices equipped
with the spectral norm, and for any $k\in (0,1)$, let $D_{p}^{k}$ denote the
subset of $D_{p}$ with members having eigenvalues in the range $[k, 1/k]$.
For any class of real-valued functions $\mathcal{F}$, let $\rho_P(f)\triangleq (P(f-Pf)^2)^{1/2}$ denote the centered $L_2$-norm on $\mathcal{F}$,
$l^{\infty}(\mathcal{F})$ denote the space of uniformly bounded
real-valued functions on $\mathcal{F}$ equipped with the $\sup$ norm, and
$C_{b}(\mathcal{F})$ denote the subspace of $l^{\infty}(\mathcal{F})$
of continuous and bounded functions from $\mathcal{F}$ into $\mathbb{R}$, respectively. Furthermore, let $\mathbb{G}_n \triangleq \sqrt n(\pn-P)$, $\mathbb{G}^{(b)}_n \triangleq \sqrt n(\mathbb{\hat P}_n^{(b)}-\pn)$, and
$P_{M}$ denote probability taken with respect to the bootstrap
weights defining the bootstrap empirical measure, respectively.

\subsubsection{Results for second stage parameters}
\label{sec:2stage2}
In this section we will characterize the limiting distributions of
the second stage parameters under fixed and local alternatives.  We will
also derive the limiting distribution of the bootstrap analog
of the second stage parameters. For convenience, let $p_{t0}\triangleq {\rm dim}(\beta^*_{t,0})$,
$p_{t1}\triangleq {\rm dim}(\beta^*_{t,1})$, and $p_{t}\triangleq {\rm dim}(\beta^*_{t}) = p_{t0} + p_{t1}$ for $t=1, 2$.

\begin{thm}
\label{stage2}
Assume (A1) and (A2) and fix $a\in\mathbb{R}^{p_2}$, then
\begin{enumerate}
\item $a^{\T}\rtn(\bHat_2 - \beta_{2}^*) \leadsto_{P}
a^{\T}\mathbb{Z}_\infty$,
\item $a^{\T}\rtn(\bBoot_2 -\bHat_{2}) \leadsto_{P_{M}}
a^{\T}\mathbb{Z}_\infty$
in $P$-probability; and
\item if in addition (A3) holds, $a^{\T}\rtn(\bHat_2
  -\beta_{2,n}^*) \leadsto_{P_n}
  a^{\T}\mathbb{Z}_\infty$,
\end{enumerate}
where $\mathbb{Z}_\infty$ is
a mean zero normal random vector with covariance matrix $\Sigma_{2,\infty}^{-1}P[B_2B_2^{\T}(Y_2-B_2^{\T}\beta_2^*)^2]\Sigma_{2,\infty}^{-1}$.
\end{thm}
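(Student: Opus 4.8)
The plan is to treat $\bHatTwo$ as the $Z$-estimator solving the least-squares normal equations $\pn[B_2(Y_2-B_2^\T\bHatTwo)]=0$, with population analogue $\bStarTwo = \Sigma_{2,\infty}^{-1}P[B_2Y_2]$, where $\Sigma_{2,\infty}=P[B_2B_2^\T]$ is positive definite by the regularity conditions (A1)--(A2). Writing $\hat\Sigma_{2,n}\triangleq\pn[B_2B_2^\T]$ and combining $\pn[B_2(Y_2-B_2^\T\bHatTwo)]=0$ with $P[B_2(Y_2-B_2^\T\bStarTwo)]=0$ yields the exact identity $\rtn(\bHatTwo-\bStarTwo)=\hat\Sigma_{2,n}^{-1}\,\Gn[B_2(Y_2-B_2^\T\bStarTwo)]$. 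For part 1 I would then (i) use the weak law to get $\hat\Sigma_{2,n}\to_P\Sigma_{2,\infty}$ and hence, by continuity of matrix inversion on $D_{p_2}$, $\hat\Sigma_{2,n}^{-1}\to_P\Sigma_{2,\infty}^{-1}$; (ii) use the multivariate Lindeberg--L\'evy CLT, its moment hypothesis supplied by (A1), to obtain $\Gn[B_2(Y_2-B_2^\T\bStarTwo)]\leadsto\mathbb{Z}_\infty$ with $\mathbb{Z}_\infty\sim N(0,P[B_2B_2^\T(Y_2-B_2^\T\bStarTwo)^2])$; and (iii) combine by Slutsky and apply the continuous linear map $x\mapsto a^\T x$.

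For part 2, the bootstrap estimator solves $\pHat[B_2(Y_2-B_2^\T\bBoot_2)]=0$, and subtracting the sample normal equations gives $\rtn(\bBoot_2-\bHatTwo)=(\pHat[B_2B_2^\T])^{-1}\,\mathbb{G}^{(b)}_n[B_2(Y_2-B_2^\T\bHatTwo)]$. First I would note $\pHat[B_2B_2^\T]=\hat\Sigma_{2,n}+n^{-1/2}\mathbb{G}^{(b)}_n[B_2B_2^\T]$, where $\mathbb{G}^{(b)}_n[B_2B_2^\T]=O_{P_M}(1)$ in $P$-probability (conditional multiplier bound, using $P\|B_2\|^4<\infty$), so $\pHat[B_2B_2^\T]\to\Sigma_{2,\infty}$ in $P$-probability. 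Then I would linearize the multiplier process: since $B_2(Y_2-B_2^\T\bHatTwo)=B_2(Y_2-B_2^\T\bStarTwo)-B_2B_2^\T(\bHatTwo-\bStarTwo)$, we have $\mathbb{G}^{(b)}_n[B_2(Y_2-B_2^\T\bHatTwo)]=\mathbb{G}^{(b)}_n[B_2(Y_2-B_2^\T\bStarTwo)]-\{\mathbb{G}^{(b)}_n[B_2B_2^\T]\}(\bHatTwo-\bStarTwo)$, and the remainder is $O_{P_M}(1)\cdot o_P(1)$, hence $o_{P_M}(1)$ in $P$-probability. The leading term converges to $\mathbb{Z}_\infty$ in $P$-probability by the conditional multiplier CLT applied to the fixed square-integrable map $B_2(Y_2-B_2^\T\bStarTwo)$ (invoking the assumed conditions on the bootstrap weights), and a second application of Slutsky closes part 2.

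For part 3, under (A4) let $\beta_{2,n}^*$ denote the $P_n$-least-squares parameter, so $P_n[B_2(Y_2-B_2^\T\beta_{2,n}^*)]=0$ and $\beta_{2,n}^*\to\bStarTwo$. The same algebra yields $\rtn(\bHatTwo-\beta_{2,n}^*)=(\pn[B_2B_2^\T])^{-1}\,\rtn(\pn-P_n)[B_2(Y_2-B_2^\T\beta_{2,n}^*)]$. I would establish the triangular-array weak law $\pn[B_2B_2^\T]\to_{P_n}\Sigma_{2,\infty}$ and a Lindeberg--Feller CLT for the rowwise i.i.d., $P_n$-centered summands $B_2(Y_2-B_2^\T\beta_{2,n}^*)$, whose covariances converge to $P[B_2B_2^\T(Y_2-B_2^\T\bStarTwo)^2]$ and which are uniformly square-integrable under (A1) and (A4); Slutsky then delivers the stated limit. (Equivalently, (A4) makes $\{P_n^{\otimes n}\}$ contiguous to $\{P^{\otimes n}\}$, and one may combine the joint asymptotic normality of $(a^\T\rtn(\bHatTwo-\bStarTwo),\,\log\,dP_n^{\otimes n}/dP^{\otimes n})$ under $P$ with Le Cam's third lemma, noting that the induced shift equals $\lim a^\T\rtn(\beta_{2,n}^*-\bStarTwo)$ and is absorbed by recentering at $\beta_{2,n}^*$.)

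I expect the bootstrap step (part 2) to be the main obstacle: one must control the multiplier empirical process conditionally on the data so the $o_{P_M}(1)$-in-$P$-probability statements hold jointly with nondegeneracy of $\pHat[B_2B_2^\T]$ with probability tending to one, which is precisely where the conditions on the bootstrap weights and the square integrability of the envelope $\|B_2\|(\,|Y_2|+\|B_2\|\,)$ enter. For part 3 the only delicate point is checking that the local perturbation leaves the limiting sandwich variance unchanged, which follows once (A4) supplies convergence of the relevant $P_n$-moments to their $P$-values.
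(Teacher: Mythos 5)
Your proposal is correct, but it takes a genuinely different route from the paper. You work entirely in finite dimensions: the exact $Z$-estimator identity $\rtn(\bHat_2-\beta_2^*)=\hat{\Sigma}_2^{-1}\Gn B_2(Y_2-B_2^{\T}\beta_2^*)$, the classical multivariate CLT plus Slutsky for part 1, a linearization of the bootstrap score around $\beta_2^*$ combined with the conditional multiplier CLT for a single fixed square-integrable function for part 2, and a triangular-array Lindeberg--Feller argument (or contiguity and Le Cam's third lemma) for part 3. The paper instead sets the problem up as an application of the (bootstrap) continuous mapping theorem to the functional $w_2(\Sigma,\mu,\beta_2,a)=\mu(a^{\T}\Sigma^{-1}B_2(Y_2-B_2^{\T}\beta_2))$ on $D_{p_2}\times l^{\infty}(\mathcal{F}_2)\times\mathbb{R}^{p_2}\times\mathbb{R}^{p_2}$, which requires proving continuity of $w_2$, Donskerness of the class $\mathcal{F}_2$, weak convergence of $\mathbb{G}_n$, $\mathbb{G}_n^{(b)}$, and $\rtn(\pn-P_n)$ in $l^{\infty}(\mathcal{F}_2)$, and closedness of $C_b(\mathcal{F}_2)$ with almost-sure continuity of the limit's sample paths (Lemmas 1.2--1.6). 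For this theorem in isolation your approach is leaner and avoids empirical-process machinery; the price the paper pays here is deliberate, since the same function classes, continuity lemmas, and process-convergence results are reused for the first-stage parameters, where the non-smooth $[\,\cdot\,]_+$ terms make an indexed-process formulation essentially unavoidable. Two small points to make your argument airtight: in part 2 the remainder $\mathbb{G}_n^{(b)}[B_2B_2^{\T}](\bHat_2-\beta_2^*)$ is $O_{P_M}(1)\cdot O_P(n^{-1/2})$, so you should state carefully that a conditional-in-probability bound multiplied by an unconditional $o_P(1)$ quantity yields $o_{P_M}(1)$ in outer probability; and in part 3 the convergence $P_n[B_2B_2^{\T}(Y_2-B_2^{\T}\beta_{2,n}^*)^2]\to P[B_2B_2^{\T}(Y_2-B_2^{\T}\beta_2^*)^2]$ and the Lindeberg condition do not follow from bounded second moments alone --- you need the Hellinger-differentiability part of (A4), exactly as exploited in the paper's Lemma 1.3 via the decomposition of $\int\nu(dP_n-dP)$ through $dP_n^{1/2}-dP^{1/2}$.
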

\begin{proof}
Define the class of functions $\mathcal{F}_2$ as
\begin{equation}
\label{F2}
\mathcal{F}_2 \triangleq \lbrace
f(b_2,y_2; a,\beta_2)\triangleq a^{\T}b_2(y_2 - b_2^{\T}\beta_2)\,:\, a, \beta_2\in
\mathbb{R}^{p_2},\, ||a||\le K,\,||\beta_2|| \le K\rbrace,
\end{equation}
and the function $w_{2}:D_{p_2}\times l^{\infty}(\mathcal{F}_2)
\times \mathbb{R}^{p_2}\times \mathbb{R}^{p_2}\rightarrow \mathbb{R}$ as
\begin{equation}
w_2(\Sigma, \mu, \beta_2,a) \triangleq
\mu\left(a^{\T}\Sigma^{-1}B_{2}(Y_2 - B_{2}^{\T}\beta_{2})\right). \label{w2}
\end{equation}
Since the estimated covariance matrices $\hat\Sigma_2=\pn B_2B_2^{\T}$ and $\hat{\Sigma}_{2}^{(b)} = \pHat B_2B_2^{\T}$ are weakly consistent (by Lemma \ref{sigma2_conv}), we will avoid additional notation by assuming
they are nonsingular for all $n$ without loss of generality.
Thus
\begin{align*}
& a^{\T}\rtn(\bHat_2 - \beta_{2}^*) = w_{2}(\hat{\Sigma}_{2}, \mathbb{G}_{n}, \beta_{2}^*, a), \phantom{bb}
 a^{\T}\rtn(\bBoot_2 - \bHat_2) =w_{2}(\hat{\Sigma}_{2}^{(b)}, \mathbb{G}_{n}^{(b)}, \bHat_2, a),\\
 \mbox{and } & a^{\T}\rtn(\bHat_{2} - \beta_{2,n}^*) =
    w_{2}(\hat{\Sigma}_{2}, \rtn(\pn - P_n), \beta_{2,n}^*, a).
 \end{align*}
In addition, note that $a^{\T}\mathbb{Z}_\infty= w_{2}(\Sigma_{2,\infty}, \mathbb{G}_{\infty}, \beta_{2}^*, a)$ in distribution,
where $\mathbb{G}_{\infty}$ is a tight Gaussian process in $l^{\infty}(\mathcal{F}_2)$ with covariance function
$\Cov(\mathbb{G}_{\infty}f_1, \mathbb{G}_{\infty}f_2)=P(f_1-Pf_1)(f_2-Pf_2)$.
Results 1 and 3 follow from Lemmas \ref{w2CtsLem} - \ref{process_conv} and the continuous mapping theorem [Theorem 1.3.6 of \citealt{van1996weak}].
Result 2 follows from the bootstrap continuous mapping theorem [Theorem 10.8 of \citealt{kosorok}] together with Lemmas \ref{w2CtsLem} - \ref{closedset}.
\end{proof}

\begin{lem}\label{w2CtsLem}
Under (A1), the function $w_2$ defined in (\ref{w2}) is continuous
at points in $D_{p_2}\times C_{b}(\mathcal{F})\times \mathbb{R}^{p_2}
\times \mathbb{R}^{p_2}$.
\end{lem}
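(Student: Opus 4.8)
The plan is to verify continuity sequentially, which suffices because the domain is metrizable (spectral norm on $D_{p_2}$, $\sup$-norm on $l^{\infty}(\calF_2)$, Euclidean norms on the two copies of $\mathbb{R}^{p_2}$). Fix a point $(\Sigma_0,\mu_0,\beta_{2,0},a_0)$ with $\mu_0\in C_b(\calF_2)$ and let $(\Sigma_j,\mu_j,\beta_{2,j},a_j)$ converge to it. The first observation is that $w_2$ merely evaluates $\mu$ at a single element of $\calF_2$: in the notation of (\ref{F2}), and using that $\Sigma^{-1}$ is symmetric,
\begin{equation*}
w_2(\Sigma,\mu,\beta_2,a) = \mu\big(f(\cdot\,;\Sigma^{-1}a,\beta_2)\big).
\end{equation*}
Since matrix inversion is continuous on the invertible matrices in the spectral norm, $\Sigma_j^{-1}a_j\to\Sigma_0^{-1}a_0$; hence for $j$ large, after enlarging $K$ if necessary (per the standing convention), $\|\Sigma_j^{-1}a_j\|\le K$ and $\|\beta_{2,j}\|\le K$, so $f_j\triangleq f(\cdot\,;\Sigma_j^{-1}a_j,\beta_{2,j})$ and $f_0\triangleq f(\cdot\,;\Sigma_0^{-1}a_0,\beta_{2,0})$ are genuine members of $\calF_2$.

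Next, decompose
\begin{align*}
\big|w_2(\Sigma_j,\mu_j,\beta_{2,j},a_j) - w_2(\Sigma_0,\mu_0,\beta_{2,0},a_0)\big|
&\le \big|(\mu_j-\mu_0)(f_j)\big| + \big|\mu_0(f_j) - \mu_0(f_0)\big| \\
&\le \sup_{f\in\calF_2}|\mu_j(f)-\mu_0(f)| + \big|\mu_0(f_j)-\mu_0(f_0)\big|.
\end{align*}
The first term vanishes by convergence in $l^{\infty}(\calF_2)$. For the second, since $\mu_0\in C_b(\calF_2)$, i.e.\ $\mu_0$ is $\rho_P$-continuous, it is enough to show $\rho_P(f_j-f_0)\to 0$; because $\rho_P(g)\le (Pg^2)^{1/2}$ this reduces to $P(f_j-f_0)^2\to 0$. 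Writing
\begin{align*}
f_j-f_0 &= (\Sigma_j^{-1}a_j-\Sigma_0^{-1}a_0)^{\T}B_2(Y_2-B_2^{\T}\beta_{2,j}) \\
&\quad + (\Sigma_0^{-1}a_0)^{\T}B_2B_2^{\T}(\beta_{2,0}-\beta_{2,j}),
\end{align*}
the triangle inequality in $L_2(P)$ and the operator-norm bound $\|B_2B_2^{\T}\| = \|B_2\|^2$ give
\begin{align*}
\big(P(f_j-f_0)^2\big)^{1/2} &\le \|\Sigma_j^{-1}a_j-\Sigma_0^{-1}a_0\|\,\big(P\|B_2\|^2(Y_2-B_2^{\T}\beta_{2,j})^2\big)^{1/2} \\
&\quad + \|\Sigma_0^{-1}a_0\|\,\|\beta_{2,0}-\beta_{2,j}\|\,\big(P\|B_2\|^4\big)^{1/2}.
\end{align*}
Under (A1) the moments $P\|B_2\|^4$ and $P\|B_2\|^2Y_2^2$ are finite, and since $\|\beta_{2,j}\|$ remains bounded the quantity $P\|B_2\|^2(Y_2-B_2^{\T}\beta_{2,j})^2$ is bounded uniformly in $j$; combined with $\Sigma_j^{-1}a_j-\Sigma_0^{-1}a_0\to 0$ and $\beta_{2,j}\to\beta_{2,0}$, the right-hand side tends to $0$. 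Therefore $\mu_0(f_j)\to\mu_0(f_0)$ and the claimed continuity follows.

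The argument carries no substantive analytic obstacle; the points requiring care are purely bookkeeping: checking that the perturbed slope $\Sigma_j^{-1}a_j$ stays inside the truncated index set defining $\calF_2$ (resolved by continuity of matrix inversion together with enlarging $K$), and confirming that the moment conditions bundled into (A1) are precisely strong enough to force $P(f_j-f_0)^2\to 0$ while the auxiliary parameter $\beta_{2,j}$ ranges over a bounded set. Note also that continuity is only asserted at points whose second coordinate lies in $C_b(\calF_2)$ — this is exactly what makes the term $|\mu_0(f_j)-\mu_0(f_0)|$ controllable — so no claim is (or needs to be) made about general $\mu_0\in l^{\infty}(\calF_2)$.
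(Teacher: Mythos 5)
Your proof is correct and follows essentially the same route as the paper's: the same two-term decomposition (one term controlled by $\|\mu_j-\mu_0\|_{\mathcal{F}_2}$, the other by the $\rho_P$-continuity of $\mu_0$, which is exactly where the restriction to $C_b(\mathcal{F}_2)$ is used), with the only difference being that you write out explicitly the $L_2(P)$ estimate showing $\rho_P(f_j-f_0)\to 0$, a step the paper merely asserts from the form of $\mathcal{F}_2$ and the moment conditions in (A1). No substantive divergence.
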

\begin{proof}
Let $\epsilon > 0$ be arbitrary and let $(\Sigma, \mu, \beta_2, a)$
be an element of $D_{p_2}\times C_{b}(\mathcal{F})\times \mathbb{R}^{p_2}
\times \mathbb{R}^{p_2}$.  In addition, let
$(\Sigma', \mu', \beta_2', a')$ be an element of
$D_{p_2} \times l^{\infty}(\mathcal{F})\times \mathbb{R}^{p_2}\times
\mathbb{R}^{p_2}$.  From the form of $\mathcal{F}$ and the
moment assumptions in (A1) we see that if
$\Sigma - \Sigma'$, $a-a'$, and $\beta_2-\beta_2'$ are small then
so must $\rho_{P}(f-f')$ be small, where
\begin{eqnarray*}
f(B_2, Y_2) &=& a^{\T}\Sigma^{-1} B_{2}(Y_2 - B_2^{\T}\beta_2), \\
f'(B_2, Y_2) &=& a^{'\T}\Sigma'^{-1} B_{2}(Y_2 - B_{2}^{\T}\beta_{2}').
\end{eqnarray*}
In particular, we can choose $\delta > 0$ sufficiently small so that
$||\Sigma-\Sigma'|| + ||a-a'|| + ||\beta_2 - \beta_2'|| < \delta$
implies that $\rho_{P}(f-f')$ is small enough to guarantee, by appeal
to the continuity of $\mu$, that $|\mu(f) - \mu(f')| \le \epsilon/2$.
Finally, note that
\begin{equation*}
\big|w_{2}(\Sigma, \mu, \beta_{2}, a) - w_{2}(\Sigma', \mu', \beta_{2}',
a')\big| \le |\mu(f) - \mu(f')| + ||\mu - \mu'||_{\mathcal{F}_2}.
\end{equation*}
Let $\delta' = \min(\delta, \epsilon/2)$, then
$||\Sigma - \Sigma'|| + ||\mu-\mu'||_{\mathcal{F}_2} + ||\beta_2-\beta_2'|| +
||a-a'|| < \delta'$ implies that
$|w_2(\Sigma, \mu, \beta_2, a) - w_{2}(\Sigma', \mu', \beta_2', a')| \le
\epsilon$.  Thus, the desired result is proved.
\end{proof}

Having established the continuity of $w_2$ the next step will
be to characterize the limiting behavior of
$\beta_{2,n}^*$, $\hat\beta_2$, $\hat\Sigma_{2}$,
$\hat{\Sigma}_{2}^{(b)}$, and the limiting
distributions of $\mathbb{G}_n$, $\rtn(\pn-P_n)$, and
$\rtn(\pHat -\pn)$.  These limits are established in a series
of lemmas.
Once this has been accomplished we will be able
to apply the continuous mapping theorem to obtain the limiting
distributions of $\rtn(\bHat_2 - \beta_2^*)$,
$\rtn(\bHat - \beta_{2,n}^*)$, and $\rtn(\bHat_{2}^{(b)} -
\bHat_{2})$.
\begin{lem}
\label{sigma2_conv}
Assume (A1)-(A2), then $\hat{\Sigma}_{2}\to_P\Sigma_{2,\infty}$ and $\hat{\Sigma}_{2}^{(b)}\to_{P_M}\Sigma_{2,\infty}$ in $P$-probability as $n\to\infty$.
Furthermore, if (A3) holds, then 
$\hat{\Sigma}_{2}\to_{P_n}\Sigma_{2,\infty}$ as $n\to\infty$.
\end{lem}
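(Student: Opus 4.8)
The plan is to reduce every assertion to a statement about the finitely many entries of the matrices involved: since $D_{p_2}$ is finite dimensional, all norms on it are equivalent, so it suffices to treat each coordinate separately. Write $g_{ij}(b_2)\triangleq (b_2b_2^{\T})_{ij}$, so that $(\hat\Sigma_2)_{ij}=\pn g_{ij}$, $(\hat\Sigma_2^{(b)})_{ij}=\pHat g_{ij}$, $(\Sigma_{2,n})_{ij}=P_ng_{ij}$, and $(\Sigma_2)_{ij}=Pg_{ij}$. The moment conditions in (A1) ensure $P|g_{ij}|<\infty$ and $Pg_{ij}^2<\infty$ (as $|g_{ij}|\le\|B_2\|^2$), which are the only integrability facts needed.

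For the first claim I would apply the weak law of large numbers coordinatewise: $P|g_{ij}|<\infty$ yields $\pn g_{ij}\to_P Pg_{ij}$, and reassembling the entries gives $\hat\Sigma_2\to_P\Sigma_2$.

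For the bootstrap claim I would write $\hat\Sigma_2^{(b)}-\Sigma_2=(\pHat-\pn)(B_2B_2^{\T})+(\pn-P)(B_2B_2^{\T})$; the second term is $o_P(1)$ by the first claim. For the leading term, argue conditionally on the data: coordinatewise $E_M[\pHat g_{ij}]=\pn g_{ij}$, and using the (co)variance structure of the bootstrap weights one obtains $\mathrm{Var}_M(\pHat g_{ij})\le n^{-1}\pn g_{ij}^2$. Since $\pn g_{ij}^2\to_P Pg_{ij}^2<\infty$ by the first claim, this conditional variance is $O_P(n^{-1})$, so Chebyshev's inequality gives $\pHat g_{ij}-\pn g_{ij}\to_{P_M}0$ in $P$-probability; combining the finitely many entries yields the claim. (Alternatively one can simply cite a bootstrap weak law of large numbers, e.g.\ from Kosorok 2008, Ch.\ 10.)

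For the local-alternatives claim I would decompose $\hat\Sigma_2-\Sigma_2=(\pn-P_n)(B_2B_2^{\T})+(\Sigma_{2,n}-\Sigma_2)$. The deterministic term vanishes because (A4) specifies a local perturbation of $P$ along which the relevant second moments converge, so $\Sigma_{2,n}\to\Sigma_2$; the random term I would control either by a triangular-array weak law of large numbers (using the uniform integrability supplied by (A4)) or, more slickly, by noting that (A4) makes $\{P_n^n\}$ contiguous to $\{P^n\}$, so the $P$-convergence from the first claim transfers to $P_n$ and, together with $P_ng_{ij}\to Pg_{ij}$, gives $(\pn-P_n)g_{ij}\to_{P_n}0$. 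The only nonroutine points are getting the conditional-variance bound in the bootstrap step from the correct moment identities for the bootstrap weights, and pinning down exactly what (A4) provides so that the argument killing $(\pn-P_n)(B_2B_2^{\T})$ under $P_n$ goes through; I expect the latter — matching the hypotheses of (A4) to the triangular-array law or to contiguity — to be the main thing to verify, though in each case the underlying probability is entirely standard.
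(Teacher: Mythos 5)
Your decompositions are exactly the ones the paper uses: a coordinatewise weak law of large numbers for $\hat\Sigma_2$, a bootstrap weak law for $\hat\Sigma_2^{(b)}$ (the paper simply cites Bickel and Freedman 1981 and Csorgo and Rosalsky 2003, which is cleaner than the Chebyshev route since it only needs first moments of $B_2B_2^{\T}$ rather than $P\|B_2\|^4<\infty$), and the splitting $\hat\Sigma_2-\Sigma_2=(\hat\Sigma_2-\Sigma_{2,n})+(\Sigma_{2,n}-\Sigma_2)$ under $P_n$. The one place where you stop short is the deterministic term: you assert that (A4) ``specifies a local perturbation along which the relevant second moments converge, so $\Sigma_{2,n}\to\Sigma_2$,'' and flag pinning this down as the main thing to verify. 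But this is precisely the content of the paper's proof, not a restatement of (A4): the assumption gives only ($i$) $L_2$-differentiability of the square-root densities, $\int[\rtn(dP_n^{1/2}-dP^{1/2})-\tfrac12 g\,dP^{1/2}]^2\to 0$, and ($ii$) boundedness of certain $P_n$-moments. To extract $P_n(B_2B_2^{\T})\to P(B_2B_2^{\T})$, the paper sets $\nu=c^{\T}B_2B_2^{\T}c$ and writes
\[
\int\nu\,(dP_n-dP)=\int\nu\,(dP_n^{1/2}+dP^{1/2})(dP_n^{1/2}-dP^{1/2}),
\]
applies H\"older, bounds $\int\nu^2(dP_n^{1/2}+dP^{1/2})^2\le 2\int\nu^2\,dP_n+2\int\nu^2\,dP=O(1)$ using the moment bounds, and then uses ($i$) to show the squared Hellinger distance $\int(dP_n^{1/2}-dP^{1/2})^2$ is $o(1)$ (in fact $O(1/n)$), so the product vanishes. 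Your alternative of invoking contiguity for the random term $(\pn-P_n)(B_2B_2^{\T})$ is workable (it is essentially what Theorem 3.10.12 of van der Vaart and Wellner, used elsewhere in the paper, packages), but contiguity does not by itself deliver $\Sigma_{2,n}\to\Sigma_2$; the Hellinger computation above is still needed. With that computation supplied, your argument coincides with the paper's.
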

\begin{proof}
The first two claims follow from weak law of large numbers [\citealt{bickel1981some, csorgo}]. For the third claim, note that
$\hat{\Sigma}_{2} - \Sigma_{2,\infty} = (\hat{\Sigma}_{2} - \Sigma_{2,n}) + (\Sigma_{2,n}  - \Sigma_{2,\infty})$ and $\hat{\Sigma}_{2} - \Sigma_{2,n}\to_{P_n} 0$ by law of large numbers.
Below we show that $\Sigma_{2,n}\to\Sigma_{2,\infty}$. This will complete the proof.

let $c\in\mathbb{R}^{p_2}$ be arbitrary and define
$\nu \triangleq c^{\T}B_2B_2^{\T}c$.  We will show that
$\int \nu(dP_{n} - dP) = o(1)$.  First, note that
\begin{equation*}
\int \nu(dP_{n}-dP) = \int \nu (dP_{n}^{1/2}+dP^{1/2})(dP_{n}^{1/2}-dP^{1/2}).
\end{equation*}
Furthermore, the absolute value of the foregoing expression is bounded
above by
\begin{equation*}
\int |\nu||(dP_{n}^{1/2} + dP^{1/2})|(dP_{n}^{1/2} - dP^{1/2}) \le
\sqrt{\int \nu^2(dP_{n}^{1/2}+dP^{1/2})^2}\sqrt{\int (dP_{n}^{1/2}
- dP^{1/2})^2},
\end{equation*}
where the last inequality is simply H\"older's inequality.  Next,
note that owing to the inequality $(\sqrt{a} + \sqrt{b})^2 \le
2a + 2b$ it follows that
\begin{equation*}
\int \nu^2(dP_{n}^{1/2}+dP^{1/2})^2 \le 2\int \nu^2dP_{n} + 2\int \nu^2 dP
= O(1),
\end{equation*}
by appeal to (A3).  Now write
\begin{multline*}
\int (dP_{n}^{1/2}-dP^{1/2})^2 = n^{-1}\bigg\lbrace
\int \left(\rtn(dP_{n}^{1/2}-dP^{1/2}) - \frac{1}{2}vdP^{1/2}\right)^2
\\ - \frac{1}{4}\int v^2dP + \rtn\int vdP^{1/2}(dP_{n}^{1/2}-dP^{1/2})
\bigg\rbrace.
\end{multline*}
The right hand side of the preceding display is equal to
\begin{equation*}
O(1/n) + n^{-1/2}\int vdP^{1/2}(dP_{n}^{1/2} - dP^{1/2}) \le
O(1/n) + n^{-1/2}\sqrt{\int v^2dP}\sqrt{\int(dP_{n}^{1/2}-dP^{1/2})^2},
\end{equation*}
which is $o(1)$. Thus $\Sigma_{2,n}\to\Sigma_{2,\infty}$.
\end{proof}

\begin{lem}
\label{beta2_conv}
Under (A1) and (A2), $\hat\beta_2 \to_P \beta_2^*$ as $n\to\infty$. If, in addition (A3) holds, then
$\lim_{n\rightarrow \infty}\rtn(\beta_{2,n}^* -
\beta_{2}^*) = \Sigma_{2}^{-1}PvB_2(Y_2-B_2^{\T}\beta_2^*)$.
\end{lem}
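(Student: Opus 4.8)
\textbf{Consistency.} The plan for the first claim is routine once both quantities are written in closed form. The least--squares fit satisfies the normal equations $\pn B_2(Y_2 - B_2^{\T}\hat\beta_2) = 0$, so $\hat\beta_2 = \hat\Sigma_{2}^{-1}\pn B_2 Y_2$, while $\beta_2^*$ is characterized by $PB_2(Y_2 - B_2^{\T}\beta_2^*) = 0$, i.e.\ $\beta_2^* = \Sigma_2^{-1}PB_2Y_2$ (as in the proof of Theorem~\ref{stage2}, $\hat\Sigma_2$ may be taken nonsingular for all $n$). By Lemma~\ref{sigma2_conv}, $\hat\Sigma_2 \to_P \Sigma_2$, and by the weak law of large numbers together with the moment assumptions in (A1), $\pn B_2 Y_2 \to_P PB_2Y_2$. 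Since matrix inversion is continuous on $D_{p_2}$, Slutsky's theorem gives $\hat\beta_2 = \hat\Sigma_2^{-1}\pn B_2 Y_2 \to_P \Sigma_2^{-1}PB_2Y_2 = \beta_2^*$.

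\textbf{Reduction for the local bias.} For the second claim I would first observe that $\beta_{2,n}^*$, being the minimizer of $\beta_2 \mapsto P_n(Y_2 - B_2^{\T}\beta_2)^2$, satisfies $P_nB_2(Y_2 - B_2^{\T}\beta_{2,n}^*) = 0$, i.e.\ $\Sigma_{2,n}\beta_{2,n}^* = P_nB_2Y_2$. Subtracting $\Sigma_{2,n}\beta_2^* = P_nB_2B_2^{\T}\beta_2^*$ yields the identity
\[
\Sigma_{2,n}(\beta_{2,n}^* - \beta_2^*) = P_nB_2(Y_2 - B_2^{\T}\beta_2^*) = (P_n - P)B_2(Y_2 - B_2^{\T}\beta_2^*),
\]
the last step using $PB_2(Y_2 - B_2^{\T}\beta_2^*) = 0$. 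Hence $\rtn(\beta_{2,n}^* - \beta_2^*) = \Sigma_{2,n}^{-1}\,\rtn(P_n - P)B_2(Y_2 - B_2^{\T}\beta_2^*)$. Since $\Sigma_{2,n} \to \Sigma_2 = \Sigma_{2,\infty}$ by Lemma~\ref{sigma2_conv} and inversion is continuous, it remains to prove that $\rtn(P_n - P)B_2(Y_2 - B_2^{\T}\beta_2^*) \to PgB_2(Y_2 - B_2^{\T}\beta_2^*)$, which I would establish coordinate by coordinate: for each scalar function $h$ that is a component of $B_2(Y_2 - B_2^{\T}\beta_2^*)$ (so that $Ph^2 < \infty$ by (A1)) I would show $\rtn(P_n - P)h \to Phg$.

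\textbf{The core estimate.} For this last limit I would reuse the Hellinger manipulations from the proof of Lemma~\ref{sigma2_conv}. Writing $\rtn(P_n - P)h = \int h\,\rtn(dP_n^{1/2} - dP^{1/2})(dP_n^{1/2} + dP^{1/2})$ and substituting $dP_n^{1/2} + dP^{1/2} = 2\,dP^{1/2} + (dP_n^{1/2} - dP^{1/2})$ gives
\[
\rtn(P_n - P)h = 2\int h\,dP^{1/2}\,\rtn(dP_n^{1/2} - dP^{1/2}) + n^{-1/2}\int h\big(\rtn(dP_n^{1/2} - dP^{1/2})\big)^2 .
\]
By (A4), $\rtn(dP_n^{1/2} - dP^{1/2}) \to \tfrac12 g\,dP^{1/2}$ in $L_2$; since $h\,dP^{1/2} \in L_2$ (as $Ph^2 < \infty$), continuity of the $L_2$ inner product sends the first term to $2\int h\,dP^{1/2}\cdot\tfrac12 g\,dP^{1/2} = Phg$. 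The same $L_2$ convergence shows $\int(dP_n^{1/2} - dP^{1/2})^2 = O(1/n)$, so for the second term I would truncate, $h = h\mathbf{1}\{|h| \le M\} + h\mathbf{1}\{|h| > M\}$: the bounded piece contributes $O(Mn^{-1/2}) \to 0$ for each fixed $M$, while Cauchy--Schwarz (exactly as in Lemma~\ref{sigma2_conv}, pulling out the factor $(dP_n^{1/2}-dP^{1/2})^2 = O(1/n)$) bounds the tail piece, uniformly in $n$, by $(P_nh^2\mathbf{1}\{|h|>M\} + Ph^2\mathbf{1}\{|h|>M\})^{1/2}O(1)$, which tends to $0$ as $M\to\infty$. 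Establishing that the remainder really does vanish for unbounded $h$ --- that is, getting the uniform integrability $\sup_n P_nh^2\mathbf{1}\{|h|>M\}\to 0$ out of the uniform-in-$n$ moment bounds in (A4) (just as (A4) was invoked for $\int\nu^2dP_n = O(1)$ in Lemma~\ref{sigma2_conv}) --- is the one genuinely delicate point; everything else is bookkeeping with Slutsky and the already-proved Lemma~\ref{sigma2_conv}. Combining the two terms gives $\rtn(P_n - P)h \to Phg$, and hence $\rtn(\beta_{2,n}^* - \beta_2^*) \to \Sigma_{2,\infty}^{-1}PgB_2(Y_2 - B_2^{\T}\beta_2^*)$.
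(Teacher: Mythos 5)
Your consistency argument and your reduction of the local-alternative claim are exactly the paper's: the paper likewise differences the normal equations under $P_n$ and $P$ to obtain $\rtn(\beta_{2,n}^*-\beta_2^*)=\Sigma_{2,n}^{-1}\rtn(P_n-P)B_2(Y_2-B_2^{\T}\beta_2^*)$, sends $\Sigma_{2,n}\to\Sigma_{2,\infty}$ via Lemma \ref{sigma2_conv}, and reduces everything to the single limit $\rtn(P_n-P)h\to Phg$. Where you diverge is in how that limit is established: the paper verifies $\sup_nP_n\big(a^{\T}B_2(Y_2-B_2^{\T}\beta_2^*)\big)^2<\infty$ from (A4) and then cites Theorem 3.10.12 of van der Vaart and Wellner (1996), whereas you attempt a self-contained proof from the Hellinger decomposition. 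Your treatment of the first-order term and of the bounded truncation of the second-order term is correct.

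The gap is the one you flag yourself, and it is genuine: to make the tail piece vanish you need $\sup_nP_nh^2\mathbf{1}\{|h|>M\}\to0$ as $M\to\infty$ (or a Lindeberg-type condition $P_nh^2\mathbf{1}\{|h|>\epsilon\rtn\}\to0$), and this does \emph{not} follow from the boundedness $\sup_nP_nh^2<\infty$ that (A4) supplies --- a bounded sequence of second moments under a varying sequence of measures need not be uniformly integrable. As written, your Cauchy--Schwarz bound shows only that the second-order remainder is $O(1)$ uniformly in $n$ for each fixed $M$, not that it is $o(1)$, so the proof does not close. This is precisely the point at which the cited theorem earns its keep: its proof derives the needed control of $P_nh^2$ on the event $\{|h|$ large$\}$ from the differentiability-in-quadratic-mean hypothesis itself, which is the nontrivial content you would have to reproduce. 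To finish your version you should either invoke that result, as the paper does, or supply the uniform-integrability step explicitly; everything else in your argument is sound and coincides with the paper's route.
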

\begin{proof}
$\hat\beta_2 \to_P \beta_2^*$ follows from weak law of large numbers and Slutsky's lemma.

Recall that $0 = P_nB_2(Y_2 - B_2^{\T}\beta_{2,n}^*)$ which we can
write as
\begin{equation*}
\rtn(P_n - P)B_2(Y_2 -B_2^{\T}\beta_2^*) - \Sigma_{2,n}\rtn(\beta_{2}^*
- \beta_{2,n}^*),
\end{equation*}
so that for sufficiently large $n$ it follows that
$\rtn(\beta_{2,n}^* - \beta_{2}^*) = \Sigma_{2,n}^{-1}\rtn(P_n - P)
B_2(Y_2 - B_2^{\T}\beta_2^*)$.  By appeal to (A3) it follows that
for any vector $a \in \mathbb{R}^{p_2}$ we have
$\sup_{n}P_n(a^{\T}B_2(Y_2 -B_2^{\T}\beta_2^*))^2 < \infty$.
Theorem 3.10.12 of \cite{van1996weak} ensures
that
\begin{equation*}
\rtn(P_n-P)B_{2}(Y_2-B_2^{\T}\beta_2^*) \rightarrow PvB_2(Y_2-B_2^{\T}
\beta_2^*)
\end{equation*}
as $n\to\infty$. This completes the proof.
\end{proof}


\begin{lem}
\label{process_conv}
Assume (A1)-(A2), then

1) $\mathbb{G}_{n} \leadsto_{P} \mathbb{G}_{\infty}$ in $l^{\infty}(\mathcal{F}_2)$,
where $\mathcal{F}_2$ is defined in (\ref{F2}), and $\mathbb{G}_{\infty}$ is a tight Gaussian process in $l^{\infty}(\mathcal{F}_2)$ with covariance function
$\Cov(\mathbb{G}_{\infty}f_1, \mathbb{G}_{\infty}f_2)=P(f_1-Pf_1)(f_2-Pf_2)$; and

2) $\sup_{\omega\in BL_1}\big|\mathbb{E}_{M}\omega(\rtn(\pHat-\pn)) -
\mathbb{E}\omega(\mathbb{G}_{\infty})\big| \rightarrow_{P^*} 0 $ in $l^{\infty}(\mathcal{F}_2)$.\\
If, in addition (A3) holds, then

3) $\rtn(\pn - P_n) \leadsto_{P_{n}} \mathbb{G}_{\infty}$ in
$l^{\infty}(\mathcal{F}_2)$.
\end{lem}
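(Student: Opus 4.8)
The plan is to reduce all three claims to a single structural fact: the class $\mathcal{F}_2$ defined in (\ref{F2}) is $P$-Donsker and admits an envelope that is square-integrable under $P$ (and, for claim 3, uniformly square-integrable along the local alternatives of (A4)). Once that fact is in hand, claim 1 is the ordinary empirical-process central limit theorem, claim 2 is the conditional bootstrap central limit theorem for Donsker classes, and claim 3 is the central limit theorem under contiguous alternatives; each is then a citation. The only genuinely delicate point is verifying the Donsker property and the envelope integrability from the moment conditions in (A1); everything else is bookkeeping.

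\emph{Step 1 (the Donsker property).} Every $f(\cdot;a,\beta_2)\in\mathcal{F}_2$ can be written as $\theta(a,\beta_2)^{\T}\varphi(B_2,Y_2)$, where $\varphi$ collects the finitely many fixed coordinate functions $B_{2,i}Y_2$ and $B_{2,i}B_{2,j}$, and the index $\theta(a,\beta_2)$ (the entries $a_i$ together with the products $a_i\beta_{2,j}$) ranges over a bounded, hence totally bounded, subset of a Euclidean space when $\|a\|\le K$ and $\|\beta_2\|\le K$. Consequently, for two such parameter pairs,
\[
\bigl|f(B_2,Y_2;a,\beta_2)-f(B_2,Y_2;a',\beta_2')\bigr|\le m(B_2,Y_2)\bigl(\|a-a'\|+\|\beta_2-\beta_2'\|\bigr),
\]
with $m(B_2,Y_2)\triangleq \|B_2\|\,|Y_2|+K\|B_2\|^{2}+K\|B_2\|$, and $\mathcal{F}_2$ has the envelope $F_2\triangleq K\|B_2\|\,|Y_2|+K^{2}\|B_2\|^{2}$. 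The moment assumptions in (A1) (in particular $P\|B_2\|^{4}<\infty$ and $PY_2^{2}\|B_2\|^{2}<\infty$) give $Pm^{2}<\infty$ and $PF_2^{2}<\infty$. A class indexed Lipschitz-continuously by a bounded Euclidean set, with square-integrable Lipschitz factor and envelope, has bracketing numbers $N_{[\,]}(\varepsilon,\mathcal{F}_2,L_2(P))$ growing only polynomially in $1/\varepsilon$ (Theorem 2.7.11 of van der Vaart and Wellner 1996), so the bracketing entropy integral is finite and $\mathcal{F}_2$ is $P$-Donsker (Theorem 2.5.6 of the same reference). This establishes claim 1, the limit $\mathbb{G}_\infty$ being the tight Gaussian process in $l^{\infty}(\mathcal{F}_2)$ with the asserted covariance, since for any empirical-process limit $\mathrm{Cov}(\mathbb{G}_\infty f_1,\mathbb{G}_\infty f_2)=Pf_1f_2-Pf_1\,Pf_2=P(f_1-Pf_1)(f_2-Pf_2)$.

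\emph{Step 2 (bootstrap, claim 2).} Since $\mathcal{F}_2$ is $P$-Donsker with a square-integrable envelope — indeed $F_2\in L_{2,1}(P)$ under (A1), which is what the weighted bootstrap requires — the conditional bootstrap central limit theorem for Donsker classes applies to the bootstrap empirical measure $\pHat$: writing $\mathbb{E}_M$ for expectation over the bootstrap weights,
\[
\sup_{\omega\in BL_1}\bigl|\mathbb{E}_M\,\omega\bigl(\rtn(\pHat-\pn)\bigr)-\mathbb{E}\,\omega(\mathbb{G}_\infty)\bigr|\to_{P^*}0,
\]
with the same $\mathbb{G}_\infty$ as in claim 1. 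This is the bootstrap central limit theorem of Kosorok (2008) (compare van der Vaart and Wellner 1996), whose moment conditions on the weights are met by construction of $\pHat$. This is claim 2.

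\emph{Step 3 (local alternatives, claim 3).} Under (A4) the sequence of true measures $\{P_n\}$ is contiguous to $P$, is differentiable in quadratic mean with score $g$ (so $\rtn(dP_n^{1/2}-dP^{1/2})\to\tfrac12 g\,dP^{1/2}$ in $L_2$), and satisfies the uniform-integrability requirement on $F_2^{2}$ under $P_n$ (with $\sup_n P_nF_2^{2}<\infty$). Combining these with the $P$-Donsker property from Step 1, Theorem 3.10.12 of van der Vaart and Wellner (1996) gives that, under sampling from $P_n$, the recentered empirical process $\rtn(\pn-P_n)$ converges weakly in $l^{\infty}(\mathcal{F}_2)$ to the \emph{same} Gaussian process $\mathbb{G}_\infty$ — the drift $Pgf$ that would appear for $\rtn(\pn-P)$ is absorbed by centering at $P_n$ rather than $P$. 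This is claim 3, and completes the proof.

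The hard part, to emphasize, is nothing in Steps 2–3 (those are direct appeals to known Donsker-class bootstrap and contiguity theorems) but rather the verification in Step 1 that (A1) supplies square-integrability of $F_2$ and of the Lipschitz factor $m$, and the marginally stronger $L_{2,1}(P)$ bound needed for the weighted bootstrap; everything downstream is then automatic.
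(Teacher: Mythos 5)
Your proposal is correct, and its overall skeleton is the same as the paper's: establish that $\mathcal{F}_2$ is $P$-Donsker with a square-integrable envelope, then obtain 1) from the empirical-process CLT, 2) from the bootstrap CLT for Donsker classes (Theorem 3.6.1 of van der Vaart and Wellner, equivalently Kosorok's version), and 3) from Theorem 3.10.12 under (A4). The one substantive step where you genuinely diverge is the Donsker verification itself. The paper writes $\mathcal{F}_2$ as a pairwise product of two finite-dimensional linear (hence VC-subgraph) classes and invokes VC/product-preservation with the envelope $K\|B_2\|(|Y_2|+K\|B_2\|)$; you instead exploit the Lipschitz dependence of $f(\cdot;a,\beta_2)$ on the bounded Euclidean index $(a,\beta_2)$, with square-integrable Lipschitz factor, to get polynomial bracketing numbers via Theorem 2.7.11 and then Donsker via Theorem 2.5.6. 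Both routes are standard and both rest on exactly the same moment requirements ($PY_2^2\|B_2\|^2<\infty$ and $P\|B_2\|^4<\infty$), so neither buys extra generality here; the VC route is marginally more robust if one later wants to add indicator-type terms to the class (as happens in $\mathcal{F}_{11}$, $\mathcal{F}_{12}$), where Lipschitz-in-parameter arguments break down, while your bracketing argument is more self-contained for this purely linear class. One small quibble: in Step 2 you assert $F_2\in L_{2,1}(P)$ as a consequence of (A1); square integrability does \emph{not} imply the $L_{2,1}$ condition, but this is harmless because the nonparametric bootstrap CLT you need (Theorem 3.6.1) only requires $P^*F_2^2<\infty$ — the $L_{2,1}$ hypothesis pertains to the multiplier weights in the multiplier bootstrap, not to the envelope. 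I would simply delete that clause.
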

\begin{proof}
First note that $\mathcal{F}_2$ is a subset of the
pairwise product of the linear classes $\lbrace a^{\T}b_2 :
a\in\mathbb{R}^{p_2}\rbrace$ and $\lbrace y_2-b_2^{\T}\beta_2
\,:\,\beta\in\mathbb{R}^{p_2}\rbrace$ each of which is VC-subgraph of index no more than $p_2 + 1$ and $P$-measurable.
Under (A1), the envelope of $\mathcal{F}_2$, $F_2(B_2,Y_2) = K||B_2||(|Y_2| + K||B_2||)$, is square integrable.
This implies that $\mathcal{F}_2$ is P-Donsker, and 1) follows immediately. 2) follows from Theorem 3.6.1 of van der Vaart and Wellner (1996).
For 3), note that from (A3) it follows that $\sup_{f}|P_nf|$ is a bounded sequence.  The
result follows from theorem 3.10.12 of \cite{van1996weak}.
\end{proof}

%
%
%

\begin{lem}
\label{closedset}
The space $C_{b}(\mathcal{F}_2)$ is a closed subset of $l^{\infty}(\mathcal{F}_2)$ and ${\rm P}(\mathbb{G}_\infty\in C_{b}(\mathcal{F}_2))=1$.
\end{lem}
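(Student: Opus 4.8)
The plan is to establish the two assertions separately. For the first—that $C_b(\mathcal{F}_2)$ is closed in $l^\infty(\mathcal{F}_2)$—I would argue that a uniform limit of continuous functions is continuous. Precisely, suppose $\{\phi_m\}$ is a sequence in $C_b(\mathcal{F}_2)$ converging in the sup-norm to some $\phi\in l^\infty(\mathcal{F}_2)$. Boundedness of $\phi$ is immediate since the $\phi_m$ are uniformly bounded along the convergent sequence. For continuity at a point $f\in\mathcal{F}_2$ (with respect to the metric $\rho_P$ on $\mathcal{F}_2$), I would use the standard $\varepsilon/3$ argument: pick $m$ with $\|\phi-\phi_m\|_{\mathcal{F}_2}<\varepsilon/3$, then use continuity of $\phi_m$ at $f$ to get a $\rho_P$-neighborhood on which $|\phi_m(g)-\phi_m(f)|<\varepsilon/3$, and combine. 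This is routine and poses no obstacle.

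For the second assertion—that $\mathbb{G}_\infty$ takes values in $C_b(\mathcal{F}_2)$ almost surely—I would invoke the fact that $\mathcal{F}_2$ is $P$-Donsker, which was established in the proof of Lemma \ref{process_conv}. A standard consequence (see van der Vaart and Wellner 1996, e.g. the discussion surrounding their Donsker theorems, or Lemma 1.5.9 and Theorem 1.5.7) is that the limiting Gaussian process $\mathbb{G}_\infty$ has a version with sample paths that are uniformly $\rho_P$-continuous on $\mathcal{F}_2$, and moreover $\mathbb{G}_\infty$ is tight in $l^\infty(\mathcal{F}_2)$; tightness of a Borel measure on $l^\infty(\mathcal{F}_2)$ concentrated on the separable set $UC(\mathcal{F}_2,\rho_P)$ forces the paths into $C_b(\mathcal{F}_2)$. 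Since $(\mathcal{F}_2,\rho_P)$ is totally bounded under (A1) (again from the Donsker property), uniform $\rho_P$-continuity is the same as $\rho_P$-continuity, and bounded sample paths hold because $\mathbb{G}_\infty$ is a tight element of $l^\infty(\mathcal{F}_2)$. Hence $\mathrm{P}(\mathbb{G}_\infty\in C_b(\mathcal{F}_2))=1$.

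The only step requiring any care is making sure the notion of continuity used in the two parts is the same: in Lemma \ref{w2CtsLem} and in the definition of $C_b(\mathcal{F})$ the relevant topology on the index set is the one induced by $\rho_P$ (the centered $L_2(P)$-seminorm), so I would state explicitly that $C_b(\mathcal{F}_2)$ denotes bounded functions continuous with respect to $\rho_P$ and that total boundedness of $(\mathcal{F}_2,\rho_P)$ is what makes the sample paths of the Donsker limit land in this space. I do not anticipate a genuine obstacle here; the lemma is essentially a packaging of the Donsker conclusion already obtained, and its role is simply to verify the hypotheses of the bootstrap continuous mapping theorem (Kosorok 2008, Theorem 10.8) invoked in the proof of Theorem \ref{stage2}.
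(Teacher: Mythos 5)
Your proposal is correct and follows essentially the same route as the paper: the closedness of $C_b(\mathcal{F}_2)$ via the standard uniform-limit-of-continuous-functions argument, and the almost-sure path continuity of $\mathbb{G}_\infty$ from the Donsker property and tightness. The only cosmetic difference is that the paper passes through the intrinsic semimetric $\rho_2(f_1,f_2)=[\mathrm{Var}(f_1-f_2)]^{1/2}$ and then uses $\rho_2\le\rho_P$ to transfer continuity to $\rho_P$, whereas you quote the Donsker conclusion directly in terms of $\rho_P$; both are standard and equivalent.
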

\begin{proof}
  Let $\lbrace \mu_{n} \rbrace_{n=1}^\infty$ be a convergent sequence of
  elements in $C_{b}(\mathcal{F}_2)$ and $\mu_{0}$ the limiting
  element. For the first claim, we only need to show that  $||\mu_0||_{\mathcal{F}_2}=\sup_{f\in\mathcal{F}_2}|\mu_0(f)|$ is bounded, and for any
 $f \in \mathcal{F}$ and $\epsilon>0$, there exists some positive $\delta$ depending on $f$ so that $|\mu_0(f') - \mu_0(f)| < \epsilon$
 for all $f'\in\mathcal{F}_2$ and $\rho_{P}(f', f) < \delta$. The boundedness argument follows by noticing that
 $||\mu_0||_{\mathcal{F}_2}\leq ||\mu_n||_{\mathcal{F}_2} + ||\mu_n -\mu_{0}||_{\mathcal{F}_2}$ for any $n$; in particular,
 for some fixed large enough $n$, $||\mu_n||_{\mathcal{F}_2}$ is bounded by the fact $\mu_n\in C_b(\mathcal{F}_2)$, and $||\mu_n -\mu_{0}||_{\mathcal{F}_2}$ is bounded above by a constant
due to the convergence of $\mu_{n}$ to $\mu_0$. For continuity,
 note that since
  $\mu_{n}$ converges to $\mu_0$, we can choose $n^*$ so that
  $||\mu_{n} - \mu_{0}|| < \epsilon/4$ for all $n \ge n^*$. In addition, by the
  continuity of $\mu_{n^*}$, there exists some $\delta > 0$ so that $|\mu_{n^*}(f') - \mu_{n^*}(f)| < \epsilon$
 for all $\rho_{P}(f', f) < \delta$.  Thus \begin{eqnarray*}
    |\mu_{0}(f') - \mu_{0}(f)| &\le& |\mu_{0}(f) - \mu_{n^*}(f)| +
|\mu_{n^*}(f') - \mu_{0}(f')|  + |\mu_{n^*}(f) - \mu_{n^*}(f')| \\
&\le & 2||\mu_{0} - \mu_{n^*}||_{\mathcal{F}_2} + |\mu_{n^*}(f) - \mu_{n^*}(f')| \\
&\le & 3\epsilon/4.
  \end{eqnarray*}
This implies that $C_{b}(\mathcal{F})$ is closed.

Next note that $\mathbb{G}_\infty$ is a tight Gaussian process in  $l^{\infty}(\mathcal{F}_2)$. By the argument in section 1.5 of van de \cite{van1996weak},
almost all sample paths $f\to \mathbb{G}_\infty(f,\omega)$ are uniformly $\rho_2$-continuous, where $\rho_2(f_1,f_2)= [P(\mathbb{G}_\infty f_1 - \mathbb{G}_\infty f_2)^2]^{1/2}$ is a semimetric on $\mathcal{F}$.
Since $\rho_2(f_1,f_2) = [Var(f_1 -f_2)]^{1/2} \leq \rho_P(f_1,f_2)$, the continuity of the sample paths of $\mathbb{G}_\infty$ follows immediately.

\end{proof}

%

\subsubsection{A characterization of the first stage coefficients
and the  upper bound $\mathcal{U}(c)$}
\label{sec:2stage1}
In this section we present the proofs for Theorems \ref{thm:stage1} and \ref{thm:ACI}.
We first derive an expansion for the first stage
coefficients and two useful expansions of the upper bound
$\mathcal{U}(c)$.  The terms in the forementioned expansions will be
treated individually in subsequent sections. 
We will make use of the following functions.
\begin{enumerate}
\item $w_{11}:D_{p_1}\times D_{p_1\times p_{20}}\times l^{\infty}(\mathcal{F}_{11})
\times l^{\infty}(\mathcal{F}_{11}) \times \mathbb{R}^{p_{2}}\times \mathbb{R}^{p_1+p_{2}} \rightarrow
  \mathbb{R}$ is defined as
\begin{eqnarray}\label{w11}
w_{11}(\Sigma_1, \Sigma_{12},\mu, \omega,\nu, \beta)
&\triangleq &\mu\left[c^{\T}\Sigma_{1}^{-1}
B_1\big(Y_1 + H_{2,0}^{\T}\beta_{2,0}+
\left[H_{2,1}^{\T}\beta_{2,1}\right]_+ - B_1^{\T}\beta_1\big)\right]\nonumber\\
& &+c^{\T}\Sigma_{1}^{-1}\Sigma_{12}\nu_0
+\omega\left(c^{\T}\Sigma_{1}^{-1}B_1H_{2,1}^{\T}\nu_11_{H_{2,1}^{\T}\beta_{2,1}^* > 0}\right),
\end{eqnarray}
where $D_{p_1\times p_{20}}$ is the space of $p_1\times p_{20}$ matrices equipped with the spectral norm, and
$\mathcal{F}_{11} = \Big\lbrace
f(b_1, y_1,h_{2,0},h_{2,1}) =
a_1^{\T}b_1\big(y_1 + h_{2,0}^{\T}\beta_{2,0}+
\left[h_{2,1}^{\T}\beta_{2,1}\right]_+ - b_1^{\T}\beta_1\big)
+a_2^{\T}b_1(h_{2,1}^{\T}\nu_1)1_{h_{2,1}^{\T}\beta_{2,1}^* > 0},:\,
\beta = (\beta_1^{\T}, \beta_{2,0}^{\T},\beta_{2,1}^{\T})^{\T}\in\mathbb{R}^{p_1+p_2},
\nu = (\nu_0^{\T},\nu_1^{\T})^{\T}\in\mathbb{R}^{p_2},
a_1,a_2 \in\mathbb{R}^{p_{1}},
\max\{||a_1||, ||a_2||,||\beta||,||\nu||\}\leq K\Big\rbrace$.

\item $w_{12}:D_{p_1}\times l^{\infty}(\mathcal{F}_{12}) \times
  \mathbb{R}^{p_{21}} \times \mathbb{R}^{p_{21}} \rightarrow
  \mathbb{R}$ is defined as
\begin{equation}\label{w12}
w_{12}(\Sigma_1, \mu, \nu, \gamma) \triangleq \mu\left[c^{\T}\Sigma_{1}^{-1}
B_1\left(
\left[H_{2,1}^{\T}\nu + H_{2,1}^{\T}\gamma\right]_+
- \left[H_{2,1}^{\T}\gamma\right]_+
\right)1_{H_{2,1}^{\T}\beta_{2,1}^* = 0}
\right],
\end{equation}
where $\mathcal{F}_{12} = \Big\lbrace
f(b_1, h_{2,1}) = a^{\T}b_1\left([h_{2,1}^{T}\nu + h_{2,1}^{\T}\gamma]_+ -
[h_{2,1}^{\T}\gamma]_+\right)1_{h_{2,1}^{\T}\beta_{2,1}^*=0}
\,:a\in\mathbb{R}^{p_1}, \gamma,\,\nu\in\mathbb{R}^{p_{21}}, \max\{||a||, ||\nu||\}\le K
\Big\rbrace$.

\item $\rho_{11}:D_{p_1}\times D_{p_{21}}^{k}\times
  l^{\infty}(\mathcal{\widetilde F}_{11})\times \mathbb{R}^{p_{21}}\times
  \mathbb{R}^{p_{21}}\times \mathbb{R}^{p_{21}} \times \mathbb{R}
  \rightarrow \mathbb{R}$, is defined as
\begin{multline}\label{p11}
\rho_{11}(\Sigma_1, \Sigma_{21,21}, \mu, \nu, \eta, \gamma, \lambda)
\triangleq \mu\bigg[c^{\T}\Sigma_{1}^{-1}
B_1\left(\left[H_{2,1}^{\T}\nu + H_{2,1}^{\T}\gamma\right]_+
- [H_{2,1}^{\T}\gamma]_+\right)\\ \times \left(
1_{\frac{(H_{2,1}^{\T}\nu + H_{2,1}^{\T}\eta)^2}{H_{2,1}^{\T}\Sigma_{21,21}
H_{2,1}} \le \lambda} - 1_{H_{2,1}^{\T}\beta_{2,1}^* = 0}
\right)
\bigg],
\end{multline}
where $\mathcal{\widetilde F}_{11} = \Big\lbrace
f(b_1, h_{2,1}) = a^{\T}b_1\left([h_{2,1}^{\T}\nu - h_{2,1}^{\T}\gamma]_+ -
[h_{2,1}^{\T}\gamma]_+\right)(1_{\frac{(h_{2,1}^{\T}\nu + h_{2,1}^{\T}\eta)^2}
{h_{2,1}^{\T}\Sigma_{21,21}h_{2,1}} \le \lambda} - 1_{h_{2,1}^{\T}\beta_{2,1}^* = 0}
),:\, a\in\mathbb{R}^{p_1}, \nu, \eta, \gamma \in\mathbb{R}^{p_{21}}, \max\{||a||, ||\nu||\}\leq K, \lambda\in\mathbb{R},
\Sigma_{21,21}\in D_{p_{21}}^{k}\Big\rbrace$.

\item $\rho_{12}: D_{p_1}\times l^{\infty}(\mathcal{\widetilde F}_{12})\times \mathbb{R}^{p_{21}}
\times \mathbb{R}^{p_{21}} \rightarrow \mathbb{R}$, defined as
\begin{multline}\label{p12}
\rho_{12}(\Sigma_1, \mu, \nu, \eta) \triangleq
\mu\bigg[c^{\T}\Sigma_{1}^{-1}
B_1\left(\left[H_{2,1}^{\T}\nu + H_{2,1}^{\T}\eta\right]_+
-[H_{2,1}^{\T}\eta]_+ - H_{2,1}^{\T}\nu
\right)1_{H_{2,1}^{\T}\beta_{2,1}^* > 0} \\
+ c^{\T}\Sigma_{1}^{-1}B_1\left(\left[H_{2,1}^{\T}\nu + H_{2,1}^{\T}\eta\right]_+ - [H_{2,1}^{\T}\eta]_+\right)
1_{H_{2,1}^{\T}\beta_{2,1}^* < 0}\bigg],
\end{multline}
where $\mathcal{\widetilde F}_{12} = \Big\lbrace
a^{\T}b_1\big(\left[h_{2,1}^{\T}\nu + h_{2,1}^{\T}\eta\right]_+
  -[h_{2,1}^{\T}\eta]_+ - h_{2,1}^{\T}\nu
\big)1_{h_{2,1}^{\T}\beta_{2,1}^* > 0} - a^{\T}b_1\big(\left[h_{2,1}^{\T}\nu +
  h_{2,1}^{\T}\eta\right]_+ - [h_{2,1}^{\T}\eta]_+\big)
1_{h_{2,1}^{\T}\beta_{2,1}^* < 0}\,:\, a\in\mathbb{R}^{p_1}, \nu\in\mathbb{R}^{p_{21}}, \max\{||a||, ||\nu||\}\leq K, \eta\in
\mathbb{R}^{p_{21}} \Big\rbrace$.
\end{enumerate}
Using the foregoing functions, we have the following expressions for
the first stage parameters:
\begin{eqnarray}\label{firstStageP}
c^{\T}\rtn(\bHat_1 - \beta_1^*) &=&
w_{11}(\hat{\Sigma}_1, \hat{\Sigma}_{12}, \mathbb{G}_n,
\pn, \rtn(\bHat_{2}-\beta_{2}^*), (\beta_{1}^{*\T},\beta_{2}^{*\T})^{\T})  \nonumber \\
&& +\,\, w_{12}(\hat{\Sigma}_1, \pn, \rtn(\bHat_{2,1} - \beta_{2,1}^*),
\rtn\beta_{2,1}^*)  \nonumber \\
&& +\,\, \rho_{12}(\hat{\Sigma}_{1}, \pn, \rtn(\bHat_{2,1} - \beta_{2,1}^*),
\rtn\beta_{2,1}^*); \\
\rtn(\bHat_1 - \beta_{1,n}^*) &=& w_{11}(\hat{\Sigma}_1, \hat{\Sigma}_{12},
 \rtn(\pn-P_n), \pn, \rtn(\bHat_{2} - \beta_{2,n}^*),
(\beta_{1,n}^{*\T},\beta_{2,n}^{*\T})^{\T}) \nonumber \\
&& +\,\, w_{12}(\hat\Sigma_{1}, \pn, \rtn(\bHat_{2,1}-\beta_{2,1,n}^*),
\rtn\beta_{2,1,n}^*) \nonumber \\
&& + \,\rho_{12}(\hat{\Sigma}_1, \pn, \rtn(\bHat_{2,1}-\beta_{2,1,n}^*),
\rtn\beta_{2,1,n}^*),
\end{eqnarray}
where $\hat\Sigma_{12} = \mathbb{P}_n B_1H_{2,0}^{\T}$.
Similarly, we can express the upper bound $\mathcal{U}(c)$ in terms
of the above functions:
\begin{eqnarray}
\mathcal{U}(c) &=& w_{11}(\hat{\Sigma}_1, \hat{\Sigma}_{12}, \mathbb{G}_n,
\pn, \rtn(\bHat_{2}-\beta_{2}^*), (\beta_{1}^{*\T},\beta_{2}^{*\T})^{\T})
\nonumber \\
&&+\,\, \rho_{12}(\hat{\Sigma}_{1}, \pn, \rtn(\bHat_{2,1}-\beta_{2,1}^*),
\rtn\beta_{2,1}^*) \nonumber \\
&&-\,\, \rho_{11}(\hat{\Sigma}_{1}, \hat{\Sigma}_{21,21}, \pn,
\rtn(\bHat_{2,1}-\beta_{2,1}^*), \rtn\beta_{2,1}^*, \rtn\beta_{2,1}^*, \lambda_{n}) \nonumber \\
&&+\,\,
\sup_{\gamma\in\mathbb{R}^{p_{2,1}}}\bigg\lbrace w_{12}(\hat{\Sigma}_{1}, \pn,
\rtn(\bHat_{2,1}- \beta_{2,1}^*), \gamma)\nonumber \\ &&\quad +\,\,
\rho_{11}(\hat{\Sigma}_{1}, \hat{\Sigma}_{21,21}, \pn,
\rtn(\bHat_{2,1}-\beta_{2,1}^*), \rtn\beta_{2,1}^*, \gamma, \lambda_{n})
\bigg\rbrace. \label{eqn:upperbound}
\end{eqnarray}
We will also make use of the following alternative expression for the
upper bound $\mathcal{U}(c)$ under $P_n$:
\begin{eqnarray}
\mathcal{U}(c) &=& w_{11}(\hat{\Sigma}_1, \hat{\Sigma}_{12},
 \rtn(\pn-P_n), \pn, \rtn(\bHat_{2} - \beta_{2,n}^*),
(\beta_{1,n}^{*\T},\beta_{2,n}^{*\T})^{\T}) \nonumber \\ &&+\,\,
\rho_{12}(\hat{\Sigma}_{1}, \pn, \rtn(\bHat_{2,1} - \beta_{2,1,n}^*),
\rtn\beta_{2,1,n}^*)\nonumber
\\ &&-\,\,
\rho_{11}(\hat{\Sigma}_1, \hat{\Sigma}_{21,21},
\pn,\rtn(\bHat_{2,1}-\beta_{2,1,n}^*),
\rtn\beta_{2,1,n}^*, \rtn\beta_{2,1,n}^*, \lambda_{n})\nonumber
\\ &&+\,\,
\sup_{\gamma\in\mathbb{R}^{p_{21}}}\bigg\lbrace
w_{12}(\hat{\Sigma}_{1}, \pn,
\rtn(\bHat_{2,1}- \beta_{2,1,n}^*), \gamma)\nonumber \\ &&\quad +\,\,
\rho_{11}(\hat{\Sigma}_{1}, \hat{\Sigma}_{21,21}, \pn,
\rtn(\bHat_{2,1}-\beta_{2,1,n}^*), \rtn\beta_{2,1,n}^*, \gamma, \lambda_{n})
\bigg\rbrace.
\end{eqnarray}
Similarly, we will make use of following expression for the bootstrap
analog of the upper bound:
\begin{eqnarray}
\hat{\mathcal{U}}^{(b)}(c) &=& w_{11}(\hat{\Sigma}_1^{(b)}, \hat{\Sigma}_{12}^{(b)},
 \rtn(\mathbb{P}_n^{(b)}-\mathbb{P}_n), \mathbb{P}_n^{(b)}, \rtn(\bBoot_{2} - \bHat_{2}),
(\bHat_{1}^{\T},\bHat_{2}^{\T})^{\T})  \nonumber \\ &&+\,\,
\rho_{12}(\hat{\Sigma}_{1}^{(b)}, \pHat, \rtn(\bBoot_{2,1} - \bHat_{2,1}),
\rtn\bHat_{2,1})\nonumber
\\ &&-\,\,
\rho_{11}(\hat{\Sigma}_1^{(b)}, \hat{\Sigma}_{21,21}^{(b)},
\pHat,\rtn(\bBoot_{2,1}-\bHat_{2,1}),
\rtn\bHat_{2,1}, \rtn\bHat_{2,1}, \lambda_{n})\nonumber
\\ &&+\,\,
\sup_{\gamma\in\mathbb{R}^{p_{21}}}\bigg\lbrace
w_{12}(\hat{\Sigma}_{1}^{(b)}, \pHat,
\rtn(\bBoot_{2,1}- \bHat_{2,1}), \gamma)\nonumber \\ &&\quad +\,\,
\rho_{11}(\hat{\Sigma}_{1}^{(b)}, \hat{\Sigma}_{21,21}^{(b)}, \pHat,
\rtn(\bBoot_{2,1}-\bHat_{2,1}), \rtn\bHat_{2,1}, \gamma, \lambda_{n})
\bigg\rbrace. \label{eqn:bsbound}
\end{eqnarray}
The lower bound $\mathcal{L}(c)$ and its bootstrap analog
$\hat{\mathcal{L}}^{(b)}(c)$ can be expressed in a similar fashion by
replacing the $\sup$ with an $\inf$ in the expression of $\mathcal{U}(c)$
and $\hat{\mathcal{U}}^{(b)}(c)$, respectively.

By Lemmas \ref{lem:rho11neg} and \ref{lem:stage1cont} below, $\rho_{11}$ is negligible, and $w_{11}$ and $w_{12}$ are continuous at desired points. The negligibility of $\rho_{12}$ can be obtained in a similar fashion.
Note that the convergence of $\hat\Sigma_1$ and $\hat\Sigma_1^{(b)}$ to $\Sigma_1$ and the convergence of $\hat\Sigma_{12}$ and $\hat\Sigma_{12}^{(b)}$ to $PB_1H_{2,0}^{\T}$ can be obtained using similar proof techniques as in Lemma \ref{sigma2_conv}.
This together with Theorem \ref{stage2}, Lemmas \ref{sigma2_conv} - \ref{closedset}, and the continuous mapping theorems as presented in the previous section, implies that the conclusions of Theorems \ref{thm:stage1} and \ref{thm:ACI} hold with
\begin{align*}
\mathbb{S}_\infty = &\, \Sigma_{1,\infty}^{-1}\big[\mathbb{G}_\infty\big(B_1(Y_1+H_{2,0}^{\T}\beta_{2,0}^*+[H_{2,1}^{\T}\beta_{2,1}^*]_+-B_1^{\T}\beta_1^*)\big) + PB_1H_{2,0}^{\T}\mathbb{Z}_{\infty,0}\big]\\
\mbox{and }\mathbb{V}_\infty =  &\, \mathbb{Z}_{\infty,1},
\end{align*}
where $\mathbb{Z}^{\T}_{\infty,0}\in\mathbb{R}^{p_{20}}$, $\mathbb{Z}^{\T}_{\infty,1}\in\mathbb{R}^{p_{21}}$, and $\mathbb{Z}_\infty = (\mathbb{Z}^{\T}_{\infty,0}, \mathbb{Z}^{\T}_{\infty,1})^{\T}=\Sigma_{2,\infty}^{-1}\mathbb{G}_\infty[B_2(Y_2-B_2^{\T}\beta_2^*)]$.


\begin{lem}
\label{lem:rho11neg}
Assume (A1), (A2) and (A4). Then
\begin{enumerate}
  \item $\sup_{\gamma\in\mathbb{R}^{p_{21}}}\big|\rho_{11}(\hat{\Sigma}_1, \hat{\Sigma}_{21,21}, \pn,
\rtn(\bHat_{2,1} - \beta_{2,1}^*), \rtn\beta_{2,1}^*, \gamma, \lambda_{n})| \rightarrow_{P} 0$, and
\item
  $\sup_{\gamma\in\mathbb{R}^{p_{21}}}\big|\rho_{11}(\hat{\Sigma}_{1}^{(b)},
  \hat{\Sigma}_{21,21}^{(b)},\pHat,
  \rtn(\bHat_{2,1}^{(b)}-\bHat_{2,1}), \rtn\bHat_{2,1}, \gamma,
  \lambda_{n})\big| \rightarrow_{P_M} 0$ almost surely $P$.
\end{enumerate}
If, in addition, we assume (A3), then
\begin{enumerate}
\item[3.] $\sup_{\gamma\in\mathbb{R}^{p_{21}}}\big|\rho_{11}(\hat{\Sigma}_{1}, \hat{\Sigma}_{21,21}, \pn,
\rtn(\bHat_{2,1}-\beta_{2,1,n}^*), \rtn\beta_{2,1,n}^*, \gamma,
\lambda_{n})\big| \rightarrow_{P_{n}} 0$.
\end{enumerate}
\end{lem}
\begin{proof}
First it is easy to verify that $|[H_{2,1}^{\T}\nu - H_{2,1}^{\T}\gamma]_+ -
[H_{2,1}^{\T}\gamma]_+|\leq |h_{2,1}^{\T}\nu|$. Thus for any probability measure $\mu$ in
$l^{\infty}(\mathcal{\widetilde F}_{11})$,
\begin{multline*}
|\rho_{11}(\Sigma_1, \Sigma_{21,21}, \mu, \nu,\eta,\gamma,\lambda)| \le
K\bigg\lbrace \mu\left(||B_1||\,||H_{2,1}||\,
1_{H_{2,1}^{\T}\beta_{2,1}^* = 0, \frac{H_{2,1}^{\T}\eta}{||H_{2,1}||} > \sqrt{\lambda k} - K}\right)
 \\
+ \mu\left(||B_1||\,||H_{2,1}||\,
1_{H_{2,1}^{\T}\beta_{2,1}^* = 0, \frac{H_{2,1}^{\T}\eta}{||H_{2,1}||} < -\sqrt{\lambda k} - K}\right) \\
+ \mu\left(||B_1||\,||H_{2,1}||\,
1_{H_{2,1}^{\T}\beta_{2,1}^* \ne 0, -\sqrt{\lambda/k} - K \le \frac{H_{2,1}^{\T}\eta}{||H_{2,1}||} \le \sqrt{\lambda/k} + K}\right)
\bigg\rbrace
\end{multline*}
for a sufficiently large constant $K>0$ and a sufficiently small constant $k\in(0,1)$.  Since $k$ is held constant there is no loss
in generality taking $k=1$.  Define $\rho_{11}': l^{\infty}(\mathcal{F}_{11}') \times \mathbb{R}^{p_{21}} \times
\mathbb{R}\times \mathbb{R}\to\mathbb{R}$
as
\begin{multline}
\rho_{11}'(\mu, \eta, \delta, \delta') =
 \mu\left(||B_1||\,||H_{2,1}||\,
1_{H_{2,1}^{\T}\beta_{2,1}^* = 0, \frac{H_{2,1}^{\T}\eta}{||H_{2,1}||} >  \delta}\right)\\
+ \mu\left(||B_1||\,||H_{2,1}||\,
1_{H_{2,1}^{\T}\beta_{2,1}^* = 0, \frac{H_{2,1}^{\T}\eta}{||H_{2,1}||} < \delta'}\right) \\
+ \mu\left(||B_1||\,||H_{2,1}||\,
1_{H_{2,1}^{\T}\beta_{2,1}^* \ne 0, \delta' \le  \frac{H_{2,1}^{\T}\eta}{||H_{2,1}||} \le -\delta'}\right), \label{rho11'}
\end{multline}
where
$\mathcal{F}_{11}' = \bigg\lbrace f(b_1, h_{2,1}) =
||b_1||\,||h_{2,1}|| 1_{h_{2,1}^{\T}\beta_{2,1}^* = 0,
  \frac{h_{2,1}^{\T}\eta}{||h_{2,1}||} > \delta} +
||b_1||\,|||h_{2,1}||1_{h_{2,1}^{\T}\beta_{2,1}^* = 0,
  \frac{H_{2,1}^{\T}\eta}{||h_{2,1}||} < \delta'} +
||b_1||\,||h_{2,1}||1_{h_{2,1}^{\T}\beta_{2,1}^* \ne 0, \delta' \le
  \frac{h_{2,1}^{\T}\eta}{||h_{2,1}||} \le -\delta'}, \eta\in\mathbb{R}^{p_{21}},\max\{||\eta||,||\delta||, ||\delta'||\}\leq K\bigg\rbrace.$ Then
$$|\rho_{11}(\Sigma_1, \Sigma_{21,21}, \mu, \nu,\eta,\gamma,\lambda)|\leq K\rho_{11}'\left(\mu, \eta/\sqrt n, (\sqrt\lambda - K)/\sqrt n, -(\sqrt\lambda +K)/\sqrt n\right)$$
for $\mu\in l^\infty(\mathcal{\widetilde F}_{11})$. In particular for $n$ sufficiently large,
\begin{multline*}
\big|
\rho_{11}(\hat{\Sigma}_{1}^{(b)}, \hat{\Sigma}_{21,21}^{(b)}, \pHat,
\rtn(\bBoot_{2,1} - \bHat_{2,1}), \rtn\bHat_{2,1}, \gamma, \lambda_{n})
\big| \le \\ \,
K\,\rho_{11}'\left(\pHat, \bHat_{2,1},
(\sqrt{\lambda_{n}} - K)/\rtn,
-(\sqrt{\lambda_{n}} -K)/\rtn\right)\\ \, +
||c||\,||\hat{\Sigma}_{1}^{(b)}||\,||\rtn(\bBoot_{2,1}-\bHat_{2,1})||\,
\pHat \left(||B_1||\,||H_{2,1}||\right) 1_{||\rtn(\bBoot_{2,1} - \bHat_{2,1})||> K},
\end{multline*}
where we have assumed, without loss of generality, that
$\hat{\Sigma}_{21,21}^{(b)}$ is the identity matrix.
By part 2 of Lemma \ref{rho11_conv} below, we see that the first term on the right
hand side of the above display is $o_{P_M}(1)$
almost surely $P$. To deal with the second term, for any $\epsilon, \delta > 0$, let $K$ sufficiently large so that
$P_{M}\left(\big|\big|\rtn(\bBoot_{2,1} - \bHat_{2,1})\big|\big| > K\right)
< \delta$ for sufficiently large $n$ for almost all sequences $P$.
Then
\begin{multline*}
P_{M}\left(
||c||\,||\hat{\Sigma}_{1}^{(b)}||\,||\rtn(\bBoot_{2,1}-\bHat_{2,1})||
\pHat ||B_1||\,||H_{2,1}|| 1_{|\rtn(\bBoot_{2,1} - \bHat_{2,1})| > K} > \epsilon
\right) \\ \, \le P_{M}\left(\big|\big|\rtn(\bBoot_{2,1}-\bHat_{2,1})\big|\big|
> K\right) \le \delta,
\end{multline*}
almost surely $P$. This completes the proof of result 2. Similar arguments can be used to prove results 1 and 3, and are omitted.
\end{proof}

\begin{lem}
\label{rho11_conv}
Let $\rho_{11}'$ be defined in (\ref{rho11'}). Assume (A1), (A2) and (A4), then
\begin{enumerate}
  \item $\rho_{11}'(\pn, \beta_{2,1}^*, (\sqrt{\lambda_n} - K)/\rtn,
(-\sqrt{\lambda_{n}}-K)/\rtn) \rightarrow_{P} 0$, and
\item $\rho_{11}'(\pHat, \bHat_{2,1}, (\sqrt{\lambda_{n}}-K)/\rtn,
(-\sqrt{\lambda_{n}}-K)/\rtn) \rightarrow_{P_M} 0$, $P$-almost surely.
\end{enumerate}
If, in addition, we assume (A3), then
\begin{enumerate}
\item[3.] $\rho_{11}'(P_n, \beta_{2,1,n}^*, (\sqrt{\lambda_{n}}-K)/\rtn,
(-\sqrt{\lambda_{n}}-K)/\rtn) \rightarrow_{P_n} 0$.
\end{enumerate}
\end{lem}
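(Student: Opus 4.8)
The plan is to split $\rho_{11}'$ into its three summands and argue that each is asymptotically negligible. Two preliminary ingredients are needed. First, I would record the implications of (A3) for the tuning sequence: since $\lambda_n\to\infty$ and $\lambda_n/n\to 0$, the thresholds $\delta_n\triangleq(\sqrt{\lambda_n}-K)/\rtn$ and $-\delta_n'\triangleq(\sqrt{\lambda_n}+K)/\rtn$ are eventually positive, tend to $0$, yet satisfy $\rtn\delta_n\to\infty$. Second, I would show $\mathcal{F}_{11}'$ is $P$-Donsker: on $\{H_{2,1}\neq 0\}$ the sets $\{h_{2,1}:h_{2,1}^{\T}\eta/\|h_{2,1}\|>\delta\}$ are pullbacks of half-spaces under $h_{2,1}\mapsto h_{2,1}/\|h_{2,1}\|$ and so form a VC class; intersecting with the fixed sets $\{H_{2,1}^{\T}\beta_{2,1}^*=0\}$ and $\{H_{2,1}^{\T}\beta_{2,1}^*\neq 0\}$ preserves this; and by (A1) the class has the square-integrable envelope $K\|B_1\|\,\|H_{2,1}\|$. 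Consequently $\mathcal{F}_{11}'$ is $P$-Glivenko--Cantelli and, by the bootstrap theorem (Theorem 3.6.1 of van der Vaart and Wellner 1996), $\|\pHat-\pn\|_{\mathcal{F}_{11}'}\to 0$ in $P_M$-probability, almost surely $P$. I would also use the consistency facts from Lemma \ref{beta2_conv}: $\bHat_{2,1}-\beta_{2,1}^*=O_P(n^{-1/2})$, and, under (A4), $\beta_{2,1,n}^*-\beta_{2,1}^*=O(n^{-1/2})$.

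For part 1, take $\mu=\pn$ and $\eta=\beta_{2,1}^*$. On $\{H_{2,1}^{\T}\beta_{2,1}^*=0\}$ the ratio $H_{2,1}^{\T}\eta/\|H_{2,1}\|$ is $0$, which for large $n$ is neither $>\delta_n>0$ nor $<\delta_n'<0$, so the first two summands of $\rho_{11}'$ vanish identically; the remaining summand equals $\pn f_n$ with $f_n=\|B_1\|\,\|H_{2,1}\|\,1_{A_n}\in\mathcal{F}_{11}'$, $A_n=\{H_{2,1}^{\T}\beta_{2,1}^*\neq 0,\ |H_{2,1}^{\T}\beta_{2,1}^*|/\|H_{2,1}\|\le-\delta_n'\}$. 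Since $-\delta_n'\downarrow 0$ the $A_n$ decrease to $\emptyset$, so $Pf_n\to 0$ by dominated convergence (the envelope is $P$-integrable), while $|\pn f_n-Pf_n|\le\|\pn-P\|_{\mathcal{F}_{11}'}\to_P 0$; this gives part 1. Part 3 is the identical computation with $\mu=P_n$, $\eta=\beta_{2,1,n}^*$: now $|H_{2,1}^{\T}\beta_{2,1,n}^*|/\|H_{2,1}\|\le\|\beta_{2,1,n}^*-\beta_{2,1}^*\|=O(n^{-1/2})\le\delta_n$ on $\{H_{2,1}^{\T}\beta_{2,1}^*=0\}$, so the first two summands again vanish, and the third is at most $P_n\bar f_n$ with $\bar f_n$ a deterministic function supported on $\{H_{2,1}^{\T}\beta_{2,1}^*\neq 0,\ |H_{2,1}^{\T}\beta_{2,1}^*|/\|H_{2,1}\|\le-\delta_n'+O(n^{-1/2})\}$. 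Since there is no empirical process here I would bound, exactly as in the proof of Lemma \ref{sigma2_conv}, $|P_n\bar f_n-P\bar f_n|\le(2P_n\bar f_n^2+2P\bar f_n^2)^{1/2}(\int(dP_n^{1/2}-dP^{1/2})^2)^{1/2}$: the first factor is $O(1)$ by (A4) and (A1), the second is $o(1)$ by contiguity of $P_n$, and $P\bar f_n\to 0$ by dominated convergence; everything being deterministic, $\to_{P_n}0$ follows.

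For part 2, put $\mu=\pHat$, $\eta=\bHat_{2,1}$, $r_n=\|\bHat_{2,1}-\beta_{2,1}^*\|$. On $\{H_{2,1}^{\T}\beta_{2,1}^*=0\}$ one has $|H_{2,1}^{\T}\bHat_{2,1}|/\|H_{2,1}\|\le r_n$, so the first two summands are bounded by $1_{r_n>\delta_n}\,\pHat(\|B_1\|\,\|H_{2,1}\|)$; on $\{H_{2,1}^{\T}\beta_{2,1}^*\neq 0\}$ the triangle inequality shows that whenever the third summand's indicator is $1$ we must have $|H_{2,1}^{\T}\beta_{2,1}^*|/\|H_{2,1}\|\le-\delta_n'+r_n$, so on $\{r_n\le\delta_n\}$ the third summand is at most $\pHat\bar f_n$ with $\bar f_n=\|B_1\|\,\|H_{2,1}\|\,1_{H_{2,1}^{\T}\beta_{2,1}^*\neq 0,\,|H_{2,1}^{\T}\beta_{2,1}^*|/\|H_{2,1}\|\le-2\delta_n'}$ deterministic. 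Writing $\pHat\bar f_n=(\pHat-\pn)\bar f_n+(\pn-P)\bar f_n+P\bar f_n$, the three pieces tend to $0$ by bootstrap Glivenko--Cantelli, strong Glivenko--Cantelli, and dominated convergence ($-2\delta_n'\downarrow 0$), and $\pHat(\|B_1\|\,\|H_{2,1}\|)=O_{P_M}(1)$ by the bootstrap law of large numbers; so it remains only to check $1_{r_n>\delta_n}$ is eventually $0$, almost surely $P$. Since the second-stage least-squares estimator has the exact expansion $\bHat_2-\beta_2^*=(\pn B_2B_2^{\T})^{-1}(\pn-P)\bigl[B_2(Y_2-B_2^{\T}\beta_2^*)\bigr]$, the law of the iterated logarithm gives $\rtn r_n=O\bigl((\log\log n)^{1/2}\bigr)$ almost surely, which under the divergence rate of $\lambda_n$ in (A3) is eventually $\le\sqrt{\lambda_n}-K$; hence $r_n\le\delta_n$ for all large $n$ almost surely $P$, and part 2 follows.

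The main obstacle is the empirical-process scaffolding in the first paragraph — establishing that the normalized-inner-product sets together with the level sets of $h_{2,1}\mapsto h_{2,1}^{\T}\beta_{2,1}^*$ generate a VC class, hence that $\mathcal{F}_{11}'$ is Donsker, Glivenko--Cantelli, and covered by the bootstrap theorem; everything else reduces to the elementary indicator localizations (driven by $\rtn\delta_n\to\infty$) and dominated-convergence limits (driven by $\delta_n'\to 0$). A secondary delicate point is the ``almost surely $P$'' in part 2: because $\bHat_{2,1}$ is only $\rtn$-consistent in probability, upgrading to an almost-sure localization requires the law of the iterated logarithm for the linear expansion of $\bHat_2$ together with the growth condition on $\lambda_n$, as indicated above.
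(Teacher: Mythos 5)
Your overall strategy coincides with the paper's: establish that $\mathcal{F}_{11}'$ is a suitably measurable $P$-Donsker (hence Glivenko--Cantelli) class so that $\|\pn-P\|_{\mathcal{F}_{11}'}$, $\|\pHat-P\|_{\mathcal{F}_{11}'}$, and $\|\pn-P_n\|_{\mathcal{F}_{11}'}$ are negligible, and then kill the remaining deterministic integrals by localizing the indicators (the first two via $\rtn\delta_n=\sqrt{\lambda_n}-K\to\infty$, the third via $-\delta_n'\to 0$ and dominated convergence). The paper packages the second step as ``$\rho_{11}'$ is continuous at $(P,\beta_{2,1}^*,0,0)$'' and then invokes the continuous mapping theorems together with $\rho_{11}'(P,\beta_{2,1}^*,0,0)=0$; you unpack the same computation term by term. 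Parts 1 and 3 of your argument are sound (and your explicit use of $\rtn\delta_n\to\infty$ makes visible a rate comparison that the paper's continuity claim leaves implicit).

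The genuine gap is in your treatment of part 2, specifically the claim that $1_{r_n>\delta_n}$ is eventually zero $P$-almost surely. You derive $\rtn r_n=O\bigl((\log\log n)^{1/2}\bigr)$ a.s. from the law of the iterated logarithm and then assert that ``under the divergence rate of $\lambda_n$ in (A3)'' this is eventually below $\sqrt{\lambda_n}-K$. But (A3) only requires $\lambda_n\to\infty$ with $\lambda_n=o(n)$; it places no lower bound on the rate of divergence. For instance $\lambda_n=\log\log\log n$ satisfies (A3), yet $\sqrt{\lambda_n}=o\bigl((\log\log n)^{1/2}\bigr)$, so the LIL bound does not yield $r_n\le\delta_n$ eventually; even for the choice $\lambda_n=\log\log n$ used in the paper's experiments the comparison turns on the unspecified constant in your $O(\cdot)$. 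So the almost-sure localization you need does not follow from the stated hypotheses, and this step would have to be repaired --- either by strengthening (A3) to something like $\lambda_n/\log\log n\to\infty$, or by arguing as the paper does, namely feeding the almost-sure consistency $\bHat_{2,1}\to\beta_{2,1}^*$ and $\delta_n,\delta_n'\to 0$ into the continuity of $\rho_{11}'$ at $(P,\beta_{2,1}^*,0,0)$ rather than into a pathwise rate comparison. (You are right that there is something delicate here --- the paper's continuity-plus-DCT step is itself terse on exactly this point --- but the fix you propose imports a growth condition on $\lambda_n$ that the lemma does not assume.)
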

\begin{proof}
 The class
$\mathcal{F}_{11}'$ is $P$-Donsker and measurable by Theorem 8.14 in
\cite{ab-nnltf-99} and Donkser preservation results (for example,
see Theorem 2.10.6 in \citealt{van1996weak}).  Note
that by (A1) and (A3) $\sup_{f\in\mathcal{F}_{11}'}|Pf^2| < \infty$
and $\sup_{f\in\mathcal{F}_{11}'}|P_nf^2|$ is a bounded sequence.  Thus,
it follows that (i) $||\pn -P|| \rightarrow 0$ almost surely under $P$
in $l^{\infty}(\mathcal{F}_{11}')$,
(ii) $||\pHat - P|| \rightarrow 0$ almost surely
$P_M$ for almost all sequences $P$ [Lemma 3.6.16 in \citealt{van1996weak}], and (iii) $||\pn -P_{n}|| \rightarrow 0$ almost surely under $P_n$ in
$l^{\infty}(\mathcal{F}_{11}')$ [Theorem 3.10.12 in \citealt{van1996weak}].  Additionally, the argument in the proof of Lemma (\ref{sigma2_conv})
shows that $\hat{\Sigma}_1$ is convergent to $\Sigma_{1}$ under
$P_n$, and the weak law of large numbers establishes convergence
under $P$.  The bootstrap strong law shows that
$\hat{\Sigma}_{1}^{(b)}$ converges to $\Sigma_{1}$ in $P_M$ probability
for almost all sequences $P$.

Next we show that $\rho_{11}'$ is continuous at the point $(P,
\beta_{2,1}^*, 0, 0)$. Let $\mu_{n} \rightarrow P$ in
$l^{\infty}(\mathcal{F}_{11}')$, $\eta_{n} \rightarrow \beta_{2,1}^*$,
$\delta_{n} \rightarrow 0$, and $\delta_{n}' \rightarrow 0$.
We have
\begin{equation*}
  \big|
  \rho_{11}'(\mu_{n}, \eta_{n}, \delta_{n}, \delta_{n}') -
  \rho_{11}'(P, \beta_{2,1}^*, 0, 0)
  \big| \le \big|\rho_{11}'(P, \eta_{n}, \delta_{n}, \delta_{n}') -
  \rho_{11}'(P, \beta_{2,1}^*, 0, 0)\big|
  + ||\mu_{n} - P||,
\end{equation*}
which converges to zero by the dominated convergence theorem.
The results follow from the continuous mapping theorems and
 the fact that
$\rho_{11}'(P, \beta_{2,1}^*, 0, 0) = 0$.
\end{proof}

\begin{lem}
\label{lem:stage1cont}
Assume (A1) and (A2). Then
\begin{enumerate}
\item $w_{11}$ is continuous at
points in $(\Sigma_{1,\infty}, \Sigma_{12,\infty}, C_{b}(\mathcal{F}_{11}),
P, \mathbb{R}^{p_{2}}, (\beta_1^{*\T},\beta_2^{*\T})^{\T})$;
\item $w_{12}(\cdot,\cdot,\cdot,\sqrt n\beta_{2,1}^*)$ and
$w_{12}(\cdot,\cdot,\cdot,\sqrt n\beta_{2,1,n}^*)$ are continuous at points in
$(\Sigma_{1,\infty}, P, \mathbb{R}^{p_{21}})$; and
\item $w_{12}'(\Sigma_1, \mu, \nu) \triangleq \sup_{\gamma\in\mathbb{R}^{p_{21}}}w_{12}(\Sigma_1, \mu, \nu, \gamma)$ is continuous at points in
$(\Sigma_{1,\infty}, P, \mathbb{R}^{p_{21}})$.
\end{enumerate}
\end{lem}
\begin{proof}
To prove the desired continuity of $w_{12}$ and $w_{12}'$, we will establish the
stronger result that $w_{12}$ is continuous at points
$(\Sigma_{1,\infty}, P, \mathbb{R}^{p_{21}},\gamma)$ uniformly in $\gamma$.  That is,
for any $\Sigma_{n}\rightarrow \Sigma_{1,\infty}$, probability measures $\mu_{n} \rightarrow P$ and
$\nu_{n} \rightarrow \nu$, we have
\begin{equation*}
\sup_{\gamma}\bigg|
w_{12}(\Sigma_{n}, \mu_{n}, \nu_{n}, \gamma) -
w_{12}(\Sigma_{1}, P, \nu, \gamma)
\bigg| \rightarrow 0.
\end{equation*}
Note that
\begin{multline*}
\big|
w_{12}(\Sigma_{n}, \mu_{n}, \nu_{n}, \gamma) -
w_{12}(\Sigma_{1}, P, \nu, \gamma)
\big| \\
\leq \big|
w_{12}(\Sigma_{n}, \mu_{n}, \nu_{n}, \gamma) - w_{12}(\Sigma_{n}, \mu_n, \nu, \gamma)\big|
+ \big|w_{12}(\Sigma_{n}, P, \nu, \gamma)-w_{12}(\Sigma_{1}, P, \nu, \gamma)\big|\\
+ \big| w_{12}(\Sigma_{n}, \mu_n, \nu, \gamma) -  w_{12}(\Sigma_{n}, P, \nu, \gamma)\big|\\
\leq \mu_n\left(\left|c^{\T}\Sigma_n^{-1}B_1 |H_{2,1}^{\T}(\nu_n-\nu)|\right|\right)
+P\left(|c^{\T}(\Sigma_n^{-1}-\Sigma_{1,\infty}^{-1})B_1|\,|H_{2,1}^{\T}\nu|\right)\\
+\left|(\mu_n-P)\left(c^{\T}\Sigma_n^{-1}B_1([H_{2,1}^{\T}\nu + H_{2,1}^{\T}\gamma]_+ - [H_{2,1}^{\T}\gamma]_+)1_{H_{2,1}^{\T}\beta_{2,1}^*=0}\right)\right|
\end{multline*}
By (A2), we have that $||\Sigma_{n}^{-1}||$ is bounded above for sufficiently large $n$, where $||\cdot||$ of a matrix denotes the spectral norm of the matrix.
Thus the first term in the above display is bounded by $||c||\,||\Sigma_{n}^{-1}||\mu_n(||B_1||\,||H_{2,1}||)\,||\nu_{n}-\nu|| = o(1)$, and the second term in the above display is bounded by
$||c||\,||\Sigma_{1}^{-1} - \Sigma_{n}^{-1}||\,P(||B_1||\,||H_{2,1}||)||\nu||=o(1)$.
For the third term,
note that if $||\nu||=0$, then it is zero. Otherwise,
\begin{multline*}
\left|(\mu_n-P)\left(c^{\T}\Sigma_n^{-1}B_1([H_{2,1}^{\T}\nu + H_{2,1}^{\T}\gamma]_+ - [H_{2,1}^{\T}\gamma]_+)1_{H_{2,1}^{\T}\beta_{2,1}^*=0}\right)\right|\\
\leq\left|(\mu_n-P)\left(c^{\T}\Sigma_n^{-1}B_1([H_{2,1}^{\T}\nu/||\nu|| + H_{2,1}^{\T}\gamma/||\nu||]_+ - [H_{2,1}^{\T}\gamma/||\nu||]_+)1_{H_{2,1}^{\T}\beta_{2,1}^*=0}\right)\right|||\nu||\\
\leq ||\mu_n-P||_{\mathcal{F}_{12}}||\nu||=o(1).
\end{multline*}
This established the continuity of $w_{12}$ and hence $w_{12}'$. The
continuity of $w_{11}$ can be established through similar arguments and is
therefore omitted.
\end{proof}

\section{Appendix: Definitions of Three-Treatment Models}\label{ap:threetxt}
\newcommand{\mcoeff}{\ensuremath{\xi}}
Here, we present a suite of example models similar to those
of Chakraborty et al. (2009), but that have three possible treatments at the second
stage. These models are defined as follows:
\begin{itemize}
\item $X_i \in \{-1,1\}$ for $i \in \{1,2\}$,  $A_1 \in \{-1,1\}$, and
  $A_2 \in \{(0,-0.5)^\T,(-1,0.5) ^\T,(1,0.5)^\T\}$
\item $P(A_1= 1) = P(A_1 = -1) = 1/2$,\\ $P(A_2=(0,-1)^\T) = P(A_2
  =(-1,0.5)^\T) = P(A_2 = (1,0.5)^\T) = 1/3$
\item $P(X_1=1) = P(X_1=-1) = 1/2$, $P(X_2=1|X_1,A_1) = \mathrm{expit}({\delta_1 X_1 + \delta_2 A_1})$
\item $Y_1 \triangleq 0$, \\$Y_2 = \mcoeff_1 + \mcoeff_2 X_1 + \mcoeff_3
  A_1 + \mcoeff_4 X_1 A_1 + (\mcoeff_5,\mcoeff_6)A_2 + X_2(\mcoeff_7,\mcoeff_8)A_2 + A_1 (\mcoeff_9,\mcoeff_{10})A_2 + \epsilon$,
$\epsilon \sim N(0,1)$
\end{itemize}
where $\mathrm{expit}(x) = \mathrm{e}^{x} / (1 + \mathrm{e}^x)$. This
class is parameterized by twelve values
$\mcoeff_1,\mcoeff_2,...,\mcoeff_{10},\delta_1,\delta_2$. The analysis model
uses histories defined by:
\begin{eqnarray}
  H_{2,0} & =& (1, X_1, A_1,X_1 A_1, X_2)^\T\\
  H_{2,1} & =& (1, X_2, A_1)^\T\\
  H_{1,0} & =& (1, X_1)^\T\\
  H_{1,1} & =& (1, X_1)^\T.
\end{eqnarray}
Our working models are given by $Q_2(H_2, A_2; \beta_2) \triangleq
H_{2,0}^{\T}\beta_{2,0} + H_{2,1}^{\T}\beta_{2,1,1} A_{2,1} + H_{2,1}^{\T}\beta_{2,1,2} A_{2,2}$ and
$Q_1(H_1, A_1; \beta_1) \triangleq
H_{1,0}^{\T}\beta_{1,0} + H_{1,1}^{\T}\beta_{1,1}A_1$.
\begin{table}[h!]
\begin{tabular}{c|c|c|c}\label{tb:threetreatments}
Example & $\mcoeff$ & $\delta$ & Regularity\\
\hline
1 & ${(0, 0, 0, 0, 0, 0, 0, 0, 0, 0)}^\T$ & $(0.5, 0.5)^\T$ &
$p = 1, \phi = 0/0$\\
2 & ${(0, 0, 0, 0, 0.01, 0.01, 0, 0, 0, 0)}^\T$ & $(0.5, 0.5)^\T$  &
$p = 0, \phi = \infty$\\
3 & ${(0, 0, -0.5, 0, 0.5, 0.5, 0, 0, 0.5, 0.5)}^\T$ & $(0.5, 0.5)^\T$ &
$p = 1/2, \phi = 1.0$\\
4 & ${(0, 0, -0.5, 0, 0.5, 0.5, 0, 0, 0.49, 0.49)}^\T$ & $(0.5, 0.5)^\T$ &
$p = 0, \phi = 1.0204$\\
5 & ${(0, 0, -0.5, 0, 1.00, 1.00, 0.5, 0.5, 0.5, 0.5)}^\T$ & $(1.0, 0.0)^\T$ &
$p = 1/4, \phi = 1.4142$\\
6 & ${(0, 0, -0.5, 0, 0.25, 0.25, 0.5, 0.5, 0.5, 0.5)}^\T$ & $(0.1, 0.1)^\T$ &
$p = 0, \phi = 0.3451$\\
\hline
A & ${(0, 0, -0.25, 0, 0.75, 0.75, 0.5, 0.5, 0.5, 0.5)}^\T$ &
$(0.1, 0.1)^\T$ & $p = 0, \phi = 1.035$ \\
B & ${(0, 0, 0, 0, 0.25, 0.25, 0, 0, 0.25, 0.25)}^\T$ & $(0, 0)^\T$ &
$p = 1/2, \phi = 1.00$ \\
C & ${(0, 0, 0, 0, 0.25, 0.25, 0, 0, 0.24, 0.24)}^\T$ & $(0, 0)^\T$ &
$p = 1/2, \phi = 1.00$
\end{tabular}
\caption{\label{tb:modelparams_2act_2stage}  Parameters indexing
the example models.}
\end{table}
In Table~\ref{tb:threetreatments}, for each of these models we give
the probability $p$ of generating a history where each of the three
possible treatments at the second stage have exactly the same
effect. This is analogous to having the second stage treatment show no
effect in a binary model. Furthermore, because of the
Helmert encoding we have used in our analysis models, and because of
the structure of $\mcoeff$, it happens that the
standardized effect size of treatment 1 versus treatment 2, treatment 1
versus treatment 3, and treatment 2 versus treatment 3 are all
exactly equal in our examples. We report this as $\phi$ in
Table~\ref{tb:threetreatments}.

\section{Appendix: Additional Empirical Results}

Here we present additional empirical results. Tables
(\ref{tb:lambdan_sensitivity_coverage}) and
(\ref{tb:lambdan_sensitivity_width}) show the estimated coverage and
interval diameter of the ACI across the nine generative models with
two stages and two treatments per stage.  The results appear stable
across choices of $\lambda_n$ for which the ACI is consistent.
However, the ACI becomes quite conservative when $\lambda_n$ is
allowed to grow faster than $\sqrt{\log\,\log\,n}$.

\begin{table}
\begin{small}
\centering
\begin{tabular}{cllllll|lll}
\parbox{5em}{\centering{$\beta_{1,1,1}$\\$\lambda_n =$}} &
\parbox{3em}{\centering{Ex. 1\\NR}} &
\parbox{3em}{\centering{Ex. 2\\NNR}} &
\parbox{3em}{\centering{Ex. 3\\NR}} &
\parbox{3em}{\centering{Ex. 4\\NNR}} &
\parbox{3em}{\centering{Ex. 5\\NR}} &
\parbox{3em}{\centering{Ex. 6\\R}} &
\parbox{3em}{\centering{Ex. A\\R}} &
\parbox{3em}{\centering{Ex. B\\R}} &
\parbox{3em}{\centering{Ex. C\\R}}
\vspace{0.15em} \\
\hline\noalign{\smallskip}
        $\sqrt{\log\log n}$ &  0.989  &  0.987  &  0.967  &  0.969  &  0.954  &  0.952  &  0.950  &  0.962  &  0.962  \\
               $\log\log n$ &  0.992  &  0.992  &  0.968  &  0.972  &  0.957  &  0.955  &  0.950  &  0.964  &  0.965  \\
                   $\log n$ &  0.993  &  0.994  &  0.975  &  0.976  &  0.962  &  0.966  &  0.959  &  0.969  &  0.972  \\
                  $\sqrt n$ &  0.994  &  0.995  &  0.975  &  0.976  &  0.967  &  0.972  &  0.968  &  0.973  &  0.975  \\
                        $n$ &  0.994  &  0.995  &  0.975  &  0.976  &  0.969  &  0.972  &  0.968  &  0.975  &  0.976
\end{tabular}\\
\vskip1em
\begin{tabular}{cllllll|lll}
\parbox{5em}{\centering{$\beta_{1,0,1}$\\$\lambda_n =$}} &
\parbox{3em}{\centering{Ex. 1\\NR}} &
\parbox{3em}{\centering{Ex. 2\\NNR}} &
\parbox{3em}{\centering{Ex. 3\\NR}} &
\parbox{3em}{\centering{Ex. 4\\NNR}} &
\parbox{3em}{\centering{Ex. 5\\NR}} &
\parbox{3em}{\centering{Ex. 6\\R}} &
\parbox{3em}{\centering{Ex. A\\R}} &
\parbox{3em}{\centering{Ex. B\\R}} &
\parbox{3em}{\centering{Ex. C\\R}}
\vspace{0.15em} \\
\hline\noalign{\smallskip}
        $\sqrt{\log\log n}$ &  0.952  &  0.962  &  0.952  &  0.954  &  0.950  &  0.953  &  0.947  &  0.952  &  0.954  \\
               $\log\log n$ &  0.956  &  0.964  &  0.954  &  0.955  &  0.950  &  0.957  &  0.948  &  0.956  &  0.957  \\
                   $\log n$ &  0.970  &  0.974  &  0.961  &  0.964  &  0.950  &  0.966  &  0.959  &  0.965  &  0.968  \\
                 $\sqrt{n}$ &  0.971  &  0.975  &  0.963  &  0.968  &  0.954  &  0.973  &  0.965  &  0.974  &  0.978  \\
                        $n$ &  0.971  &  0.975  &  0.987  &  0.987  &  0.979  &  0.980  &  0.975  &  0.983  &  0.984

\end{tabular}\\
\caption{\label{tb:lambdan_sensitivity_coverage} Monte Carlo estimates
  of coverage probabilities for the ACI method at the $95\%$ nominal
  level for different choices of $\lambda_n$. Here, $\beta_{1,1,1}$
  denotes the main effect of treatment and $\beta_{1,0,1}$ denotes the
  intercept.  Estimates are constructed using 1000 datasets of size
  150 drawn from each model, and 1000 bootstraps drawn from each
  dataset. No coverage estimates are significantly below $0.95$ at the
  $0.05$ level. Models have two treatments at each of two
  stages. Examples are designated NR = nonregular, NNR =
  near-nonregular, R = regular.  }
\end{small}
\end{table}
\begin{table}
\begin{small}
\centering
\begin{tabular}{cllllll|lll}
\parbox{5em}{\centering{$\beta_{1,1,1}$\\$\lambda_n =$}} &
\parbox{3em}{\centering{Ex. 1\\NR}} &
\parbox{3em}{\centering{Ex. 2\\NNR}} &
\parbox{3em}{\centering{Ex. 3\\NR}} &
\parbox{3em}{\centering{Ex. 4\\NNR}} &
\parbox{3em}{\centering{Ex. 5\\NR}} &
\parbox{3em}{\centering{Ex. 6\\R}} &
\parbox{3em}{\centering{Ex. A\\R}} &
\parbox{3em}{\centering{Ex. B\\R}} &
\parbox{3em}{\centering{Ex. C\\R}}
\vspace{0.15em} \\
\hline\noalign{\smallskip}
        $\sqrt{\log\log n}$ &  0.490  &  0.490  &  0.481  &  0.481  &  0.483  &  0.471  &  0.474  &  0.484  &  0.484  \\
               $\log\log n$ &  0.502  &  0.502  &  0.488  &  0.488  &  0.487  &  0.475  &  0.477  &  0.491  &  0.491  \\
                   $\log n$ &  0.557  &  0.557  &  0.518  &  0.518  &  0.503  &  0.495  &  0.492  &  0.523  &  0.523  \\
                 $\sqrt{n}$ &  0.583  &  0.582  &  0.533  &  0.533  &  0.513  &  0.514  &  0.511  &  0.540  &  0.540  \\
                        $n$ &  0.586  &  0.586  &  0.538  &  0.538  &  0.525  &  0.521  &  0.519  &  0.543  &  0.543
\end{tabular}\\
\vskip1em
\begin{tabular}{cllllll|lll}
\parbox{5em}{\centering{$\beta_{1,0,1}$\\$\lambda_n =$}} &
\parbox{3em}{\centering{Ex. 1\\NR}} &
\parbox{3em}{\centering{Ex. 2\\NNR}} &
\parbox{3em}{\centering{Ex. 3\\NR}} &
\parbox{3em}{\centering{Ex. 4\\NNR}} &
\parbox{3em}{\centering{Ex. 5\\NR}} &
\parbox{3em}{\centering{Ex. 6\\R}} &
\parbox{3em}{\centering{Ex. A\\R}} &
\parbox{3em}{\centering{Ex. B\\R}} &
\parbox{3em}{\centering{Ex. C\\R}}
\vspace{0.15em} \\
\hline\noalign{\smallskip}
        $\sqrt{\log\log n}$ &  0.506 &  0.506 &  0.481 &  0.481 &  0.483 &  0.490 &  0.474 &  0.490 &  0.490 \\
               $\log\log n$ &  0.518 &  0.518 &  0.487 &  0.487 &  0.486 &  0.494 &  0.476 &  0.497 &  0.498 \\
                   $\log n$ &  0.574 &  0.574 &  0.517 &  0.517 &  0.502 &  0.517 &  0.493 &  0.540 &  0.541 \\
                 $\sqrt{n}$ &  0.596 &  0.596 &  0.536 &  0.536 &  0.515 &  0.543 &  0.519 &  0.571 &  0.572 \\
                        $n$ &  0.598 &  0.598 &  0.576 &  0.576 &  0.565 &  0.586 &  0.565 &  0.579 &  0.579

\end{tabular}\\
\caption{\label{tb:lambdan_sensitivity_width}Monte Carlo estimates of
  mean width of the ACI method at the $95\%$ nominal level for
  different choices of $\lambda_n$.  Here, $\beta_{1,1,1}$ denotes the
  main effect of treatment and $\beta_{1,0,1}$ denotes the intercept.
  Estimates are constructed using 1000 datasets of size 150 drawn from
  each model, and 1000 bootstraps drawn from each dataset. No
  corresponding estimated coverages are significantly below $0.95$ at
  the $0.05$ level. Models have two treatments at each of two
  stages. Examples are designated NR = nonregular, NNR =
  near-nonregular, R = regular.}
 \end{small}
\end{table}

\section{Appendix: The double bootstrap algorithm for selecting $\lambda$}
Our algorithmic approach to choosing $\lambda_n$ is similar to that used
by \cite{mofn} to choose $m$ for their $m$-out-of-$n$ bootstrap
method. To select $\lambda_n$, we first draw $r$ bootstrapped datasets
$\calD^{(1)},...,\calD^{(r)}$ from the original dataset $\calD$. We
take each of these in turn and compute an ACI bootstrap confidence
interval at level $1 - \alpha$ with parameter $\lambda_n = \tau
\sqrt{\log\log n}$ for $\tau \in \{0.125,0.25,0.5,1,2,4\}$. (Because the ACI
uses the bootstrap itself, it actually uses double-bootstraps of
$\calD$ to compute each interval.) Using the parameters estimated by
Q-learning on the original $\calD$ as ground truth, we compute for
each value of $\tau$ the number of bootstrapped datasets $\kappa(\tau)$
for which the ACI covers. We then select $\tau^*$ to be the smallest
$\tau$ that satisfies $\kappa(\tau)/r > 1 - \alpha$, and apply the ACI
to the original dataset $\calD$ using $\lambda = \tau^* \sqrt{\log\log n}$. In our
experiments we used $r = 100$.

\end{document}